\documentclass[a4paper,USenglish]{lipics-v2021}
\usepackage{subcaption}
\usepackage{tikz}
\usepackage{nicefrac}
\usepackage{appendix}
\usepackage{todonotes}
\usepackage[most]{tcolorbox}
\usepackage{algorithm}
\usepackage{algpseudocode}

\newcommand{\cort}{\mathsf{courteous}}
\newcommand{\court}{\cort}
\newcommand{\pref}{\mathsf{pref}}

\newcommand{\sweep}{\mathsf{sweep}}
\newcommand{\cut}{\mathsf{cut}}
\newcommand{\col}{\mathsf{col}}
\newcommand{\alg}{\mathtt{ALG}}

\def\In{\mathrm{input}}
\def\Out{\mathrm{output}}

\usepackage{amsmath}
\usepackage{graphicx}
\usepackage{wrapfig}
\usepackage{xspace}
\usepackage{url}
\usepackage{color,soul}

\newcommand{\namedref}[2]{\hyperref[#2]{#1~\ref*{#2}}}
\newcommand{\sectionref}[1]{\namedref{Section}{#1}}

\newcommand{\theoremref}[1]{\namedref{Theorem}{#1}}

\newcommand{\figref}[1]{\namedref{Figure}{#1}}
\newcommand{\lemmaref}[1]{\namedref{Lemma}{#1}}
\newcommand{\tableref}[1]{\namedref{Table}{#1}}

\newcommand{\appref}[1]{\namedref{Appendix}{#1}}

\newcommand{\algoref}[1]{\namedref{Algorithm}{#1}}
\newcommand{\linref}[1]{\namedref{Line}{#1}}
\newcommand{\equalityref}[1]{\hyperref[#1]{Eq.~\ref*{#1}}}
\newcommand{\inequalityref}[1]{\hyperref[#1]{Inequality~(\ref*{#1})}}

\newcommand{\stepref}[1]{\hyperref[#1]{Step~(\ref*{#1})}}

\newcommand{\Set}[1]{\left\{ #1 \right\}}

\newcommand{\ceil}[1]{\left\lceil #1 \right\rceil}
\newcommand{\floor}[1]{\left\lfloor #1 \right\rfloor}
\newcommand{\ignore}[1]{}

\def\DEF{\stackrel{\mathrm{def}}{=}}

\def\Nat{\mathbb{N}}

\newcommand{\Eqr}[1]{Eq.~(\ref{#1})}
\def\mtoday{\ifcase\month\or
  January\or February\or March\or April\or May\or June\or July\or
  August\or September\or October\or November\or December\fi\space\number\year}

\title{\bf Consensus with Stochastic Broadcast}
\author{Pierre Fraigniaud}
{Institut de Recherche en Informatique Fondamentale (IRIF)\\ CNRS and Université Paris Cité, France}
{pierre.fraigniaud@irif.fr}
{https://orcid.org/0000-0003-4534-4803}
{Additional support from ANR Projects ENEDISC (ANR-24-CE48-7768-01) and PREDICTIONS (ANR-23-CE48-0010), and Émergence Project METALG (U. Paris Cité).}

\author{Boaz Patt-Shamir}
{Tel Aviv University, Israel. }
{boaz@tau.ac.il}
{http://orcid.org/0000-0001-8398-8218}
{Work partially done while visiting IRIF at Université Paris Cité, supported in part by Israel Science Foundation grant 1948/21, and Fondation des Sciences Mathématiques de Paris (FSMP).}

\author{Sergio Rajsbaum}
{Institut de Recherche en Informatique Fondamentale (IRIF)\\ CNRS and Université Paris Cité, France}
{sergio.rajsbaum@gmail.com}
{https://orcid.org/0000-0002-0009-5287}
{Support from  project PAPIIT 37-IN109926. On leave from Instituto de Matem\'aticas, Universidad Nacional Aut\'onoma de M\'exico.}

\authorrunning{P. Fraigniaud, B. Patt-Shamir and S. Rajsbaum}
\Copyright{Pierre Fraigniaud, Boaz Patt-Shamir and Sergio Rajsbaum}

\ccsdesc[500]{Theory of computation~Probabilistic computation}
\ccsdesc[500]{Theory of computation~Distributed computing models}
\ccsdesc[500]{Theory of computation~Random network models}
\ccsdesc[500]{Theory of computation~Distributed algorithms}

\keywords{stochastic communication, distributed agreement, distributed consensus, Kripke graph, topology of distributed algorithms}
\EventEditors{Ioannis Chatzigiannakis, Andrea Vitaletti, Keren Censor-Hillel, and William K. Moses Jr.}

\nolinenumbers
\hideLIPIcs

\begin{document}
%%%%%%%%%%%%%%%%%%%%%%%%%%%%%%%%%

\maketitle

\begin{abstract}
 We study binary consensus in the \emph{stochastic broadcast model},
which assumes $n\geq 2$ processes communicating synchronously by
\emph{message broadcasts}. At each round, every process broadcasts a
message to all the other processes. Each broadcast succeeds
independently with some probability $p$, where $p\in[0,1]$ is a
parameter of the model.  If a broadcast succeeds, all
processes receive the message, and if it fails, no process receives
the message.  The sender does not know whether its broadcast was
successful or not.  In this model, consensus is not solvable; the
objective is to design, for a given number of rounds $r$, consensus
algorithms that terminate in $r$ rounds, minimizing the probability of
error (disagreement).  This problem has been studied in depth for 2
processes [DISC 2025].  We
extend the study to $n> 2$.

We identify two
key algorithms in the $1$-round setting.  In Algorithm $\pref1$, each
process decides $1$ unless it knows of no input $1$. In Algorithm
$\court$, each process decides the more frequent value in its view,
and, in case of a tie, the process decides the negation of its own
input value.  It turns out that for small values of $p$ $\court$
is better, while for larger values, $\pref1$ is better.

Our main results are lower bounds.  For $n=3$
processes and one round, we show that $\court$ is optimal for
$p\in[0,\nicefrac23$], and $\pref1$ is optimal for
$p\in[\nicefrac23,1]$.  To obtain this result, we develop a specialized
theory that extends the combinatorial topology approach to a
probabilistic setting.
    
For an arbitrary (but fixed) number $n>3$ of processes,    we show
that  among all 1-round consensus algorithms, $\court$ is optimal for
$p\in[0,1/n]$, and that  $\pref1$ is asymptotically optimal
for  $p\to1$.  
    
For multi-round algorithms, we show that (unlike the case of $n=2$
processes) repeating $r$ times an optimal 1-round algorithm is \emph{not} 
optimal for $n\geq 3$ processes. Nevertheless, we present a 2-round
algorithm, called $\sweep$, that guarantees the best error probability
for the number of \emph{transmissions}. Specifically, for every number
$n$ of processes, repeating $\sweep$ $r$ times 
entails at most $rn$ broadcasts, and results in error probability
$q^{rn}$, where $q=1-p$ is the probability that a broadcast fails.

We believe that our results provide another step towards the goal of a general topological theory of randomized distributed computing.
\end{abstract}

%%%%%%%%%%%%%%%%%%%%%%%%%%%%%%%%%
\section{Introduction}
%%%%%%%%%%%%%%%%%%%%%%%%%%%%%%%%%

We  consider  the \emph{stochastic broadcast model} (see, e.g.,~\cite{PelcPeleg2005}), consisting of  a set of $n$ synchronous processes communicating 
over broadcast channels as follows.
In each round, every process can broadcast a message to all  other processes. If a broadcast is successful, the message is delivered  to
all processes (at the same round); and if the
broadcast fails, no process receives the message.
Message delivery is stochastic, 
parametrized by  $p\in [0,1]$: each broadcast, independently, succeeds with probability $p$, or fails with probability $q= 1-p$. 
Crucially, the sender does not know whether its broadcast was successful or not (otherwise consensus is trivial). 

In this {stochastic} framework, we study binary consensus by \emph{deterministic} algorithms that terminate in a fixed number of rounds. 
It is straightforward that consensus cannot be solved, even if allowing an arbitrarily large number of rounds.
The goal is to minimize the probability of error,  
over all possible inputs, for a given number of rounds.
More precisely, 
processes start with binary input values and decide  binary values.
A deterministic $r$-round algorithm is said to solve \emph{consensus with error probability} $\varrho\in[0,1]$ if the following conditions are satisfied.
\begin{description}
\item[Termination:] All processes terminate in $r$ rounds, and decide  binary output values.
\item[Validity:] If all processes start with the same binary value, they must all decide on this value.
\item[Randomized agreement:] For every  assignment of binary inputs to the processes, the probability that some process decides $0$
 while some other process decides $1$ is at most $\varrho$.
\end{description}
Note that the error probability $\varrho$ is due to the nature of the stochastic environment: in this paper, we consider only deterministic algorithms. 

For two processes, Fraigniaud, Patt-Shamir, and Rajsbaum~\cite{FraigniaudPSR25} studied \emph{approximate agreement} in the stochastic broadcast model and showed that, in this model, it is intimately related to consensus. 
Specifically, they showed that,  
to obtain optimal solutions for consensus, i.e., algorithms minimizing the probability of disagreement, it is necessary to use different algorithms for different ranges of~$p$.
\begin{itemize}
    \item For $p\geq 1/2$, one optimal algorithm (called AMP$(1)$ in~\cite{FraigniaudPSR25}) says that each process decides~$1$ unless no input~$1$ is known to the process. 

    \item For $p\leq 1/2$, another rule (called FV in~\cite{FraigniaudPSR25}) is optimal: each process decides the input value of the \emph{other} process, unless it knows only its own input value. 
\end{itemize}
It was shown that, for all $r\in \mathbb N$, 2-process consensus can be solved in $r$ rounds with error probability at most $\left(\min\{p^2+q^2,q\}\right)^r$ by recursively repeating these algorithms for $r$ rounds, and that no protocol can solve consensus in $r$ rounds with error probability smaller than $\left(\min\{p^2+q^2,q\}\right)^r$.

However, it was also shown in~\cite{FraigniaudPSR25} that, for $n\geq 3$ processes, consensus is no longer directly related to approximate agreement. That is, while for two processes it is possible to solve approximate agreement optimally by having each process decide an integral value (0 or 1), this is not the case for three processes, i.e., solving approximate agreement optimally in the stochastic broadcast model requires processes to decide fractional values in $(0,1)$. The objective of this paper is to extend the results about consensus in~\cite{FraigniaudPSR25} to the case of  $n\geq 3$ processes. This requires developing an approach different from the one followed in~\cite{FraigniaudPSR25} for two processes, as fractional outputs are not valid solutions for consensus.  

%%%%%%%%%%%%%%%%%%%%%%%%%%%%%%%%%
\subsection{Our Results}
%%%%%%%%%%%%%%%%%%%%%%%%%%%%%%%%%

In summary, it turns out that the probability of error of a one-round, $n$ process consensus algorithm is a polynomial in $p$ of degree $n$.
Different algorithms may have different polynomials, but always when $p=0$ they have error probability of $1$.
Remarkably, there is no single algorithm that is optimal on all the range $0\leq p\leq 1$. 
It is necessary to use different algorithms for different ranges of $p$. Thus,
the optimal one-round error probability is \emph{not} described by a polynomial over all the range $[0,1]$.
The multi-round case has an even more complex behavior. 
In more detail, here follow our results.

\subparagraph{Two  one-round algorithms.}

We identify two  one-round algorithms, named $\cort$ and $\pref1$, which deserve special attention. 
Algorithm $\pref1$ is a generalization of Algorithm AMP$(1)$ from~\cite{FraigniaudPSR25}: each process decides~$1$ unless it knows of no input~$1$. The other algorithm, $\cort$, generalizes Algorithm FV of~\cite{FraigniaudPSR25} to $n>2$ processes: each process decides the most frequent value in its view, %(i.e., the multiset of received values, including its own input value), 
and, in case of a tie, the process is ``courteous'' in the sense that it decides the other value, i.e., the negation of its own input value. 
We compute their probability of error (a polynomial of degree $n$ in $p$, but the formulas are often more concise or informative  if we expressed them in terms of both $p$ and $q=1-p$): 
for $\pref1$ the error probability is linear, simply $q$, while for $\court$  it is 
    $$
    \sum_{i=0}^{\floor{n/2}}
    \binom{\floor{n/2}}{i}\binom{\ceil{n/2}}{i}p^{2i}q^{n-2i}~.
  $$
Instantiating for $n=3$ processes, $\cort$ and $\pref1$ have respective error probabilities  $2p^2q+q^3$  and~$q$ . For $p\leq\nicefrac23$,  $\cort$ offers a lower error probability than $\pref1$, whereas $\pref1$ offers a lower error probability than $\cort$ for $p\geq\nicefrac23$. At  $p= \nicefrac23$ their error probabilities coincide.
The graphs are in~\figref{fig:optAlgs}, comparing the results with the  case of $n=2$, where the crossover is at $p=\nicefrac12$.

\begin{figure}[ht]
\centering
\includegraphics[scale=.4]{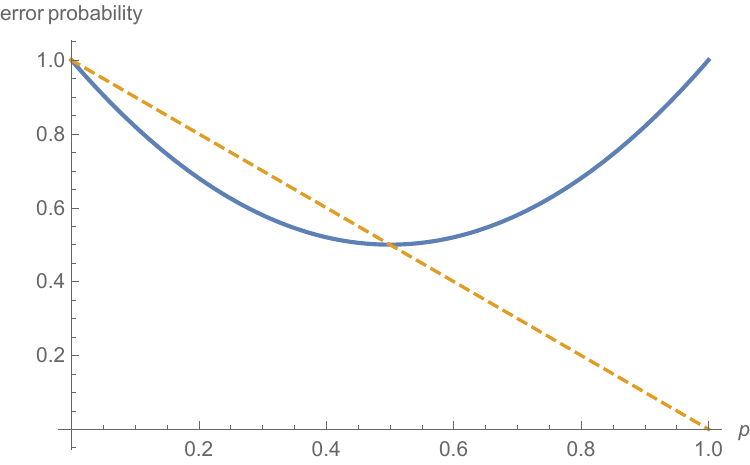}
\hspace*{6mm} \includegraphics[scale=.4]{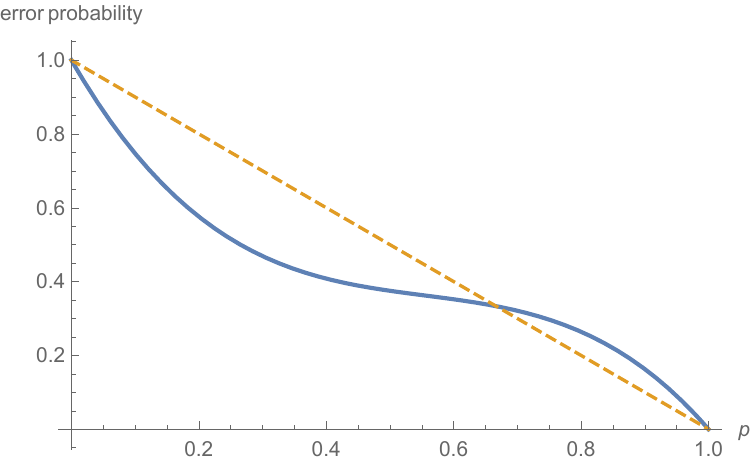}
\caption{The error probability of the $2$- and $3$-player (left and right figures, repspectively) $1$-round algorithms $\cort$ and $\pref1$ as a
function of~$p$. The blue line is for $\cort$, and the
orange (dashed) line is for $\pref1$. The intersection shifts from $p=\nicefrac12$ when $n=2$ to  $p=\nicefrac23$ when $n=3$.
 For both $n=2$ and $n=3$, each algorithm is optimal in the corresponding range.}
\label{fig:optAlgs}
\end{figure} 

\subparagraph{Lower bounds.}

For $n=3$, we extend the topological approach to distributed computing to the stochastic setting. This allows us to show that 
the combined use of $\pref1$ and $\cort$ gives an optimal one-round algorithm.
For every $p\in[0,1]$, the minimum error probability of any deterministic 3-process 1-round algorithm solving consensus is $2p^2q+q^3$ if $p\leq \nicefrac23$, and $q$ otherwise. 

For $n>3$ processes, 
we prove that $\cort$ is asymptotically optimal for small $p$ among all 1-round consensus algorithms for $n$ processes, whereas  $\pref1$ is asymptotically optimal for large $p$. That is, we show that, for every $n\geq 3$, there exists $p_{small}$ and $p_{large}$ such that, $\cort$ is optimal for all $p\leq p_{small}$, and $\pref1$ is optimal for $p\geq p_{large}$. In the case $n=3$, $p_{small}=p_{large}=\nicefrac23$.

\subparagraph{Multi-round algorithms.}

The error probability  goes down exponentially by simply repeating $r$ rounds
a one-round algorithm. For the case of $n=2$, using an optimal one-round algorithm gives an optimal $r$ round algorithm.
Surprisingly, this is not the case for $n>2$, although the error probability still goes down exponentially with $r$.
We show that repeating $r>1$ times an optimal 1-round algorithm is \emph{not} optimal for $n\geq 3$ processes (in contrast to the case of $n=2$). 
Nevertheless, we design a $2$-round algorithm, called  $\sweep$, that guarantees the best error probability for a given number of transmissions. 
Specifically, for every number $n$ of processes, repeating  $\sweep$ $r$ times (in $2r$ rounds) results in $rn$ broadcast transmissions, and (communication-optimal) error probability $q^{rn}$. 

%%%%%%%%%%%%%%%%%%%%%%%%%%%%%%%%%
\subsection{Our Techniques}  
\label{subsec:our-techniques}
%%%%%%%%%%%%%%%%%%%%%%%%%%%%%%%%%

Similarly to~\cite{FraigniaudPSR25}, 
our goal is not to optimize  asymptotic performance; rather, we are interested in optimal $r$-round $n$-process algorithms for \emph{fixed} values of~$n$ and~$r$. There are several reasons for this focus. First, we are interested in understanding the impact of stochastic failures in concrete settings in which a potentially small number of processes interact (like in a multi-core architecture) to solve consensus in a few rounds. Second, asymptotic results typically ignore constant factors, and we seek exact solutions.
Finally, at the high level, our overarching goal is to extend the topological theory of distributed computability to the stochastic case.
We briefly describe this extension of the topological theory of distributed computability here, a more detailed
presentation is in  Appendix~\ref{sec:Protocol-Complex}. This extension is the key to
showing that our one-round algorithms are optimal within the respective range of  $p\in[0,1]$.
 
The starting point is to consider the so-called \emph{protocol complex} $\mathcal{P}$ as usual in the topology perspective~\cite{HerlihyKR13}, focusing on the case of one round protocols. 
Recall that every vertex of $\mathcal{P}$ is a pair $(i,w_i)$ where $i\in[n]$ is a process name, and $w_i$ is a view of process~$i$ after one round.  
A set $\{(1,w_1),\dots,(n,w_n)\}$ is a simplex (actually, a facet) whenever the $n$ views $w_1,\dots,w_n$ of the $n$ processes are compatible after one round of communication. 
See Fig.~\ref{fig:simplex-flat-proba-mainText} for the subcomplex of the protocol complex induced by a fixed input,
for $n=2,3$.

\begin{figure}[hb]
\centering
{\includegraphics[scale=.4]{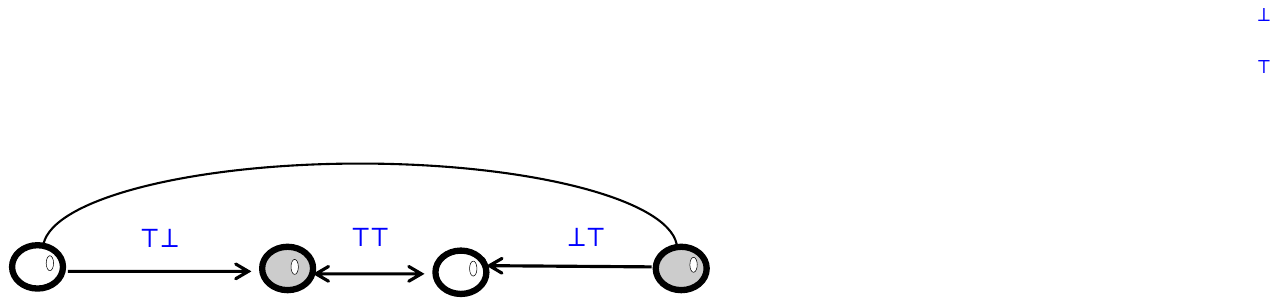}}
\hspace*{1cm}
{\includegraphics[scale=.4]{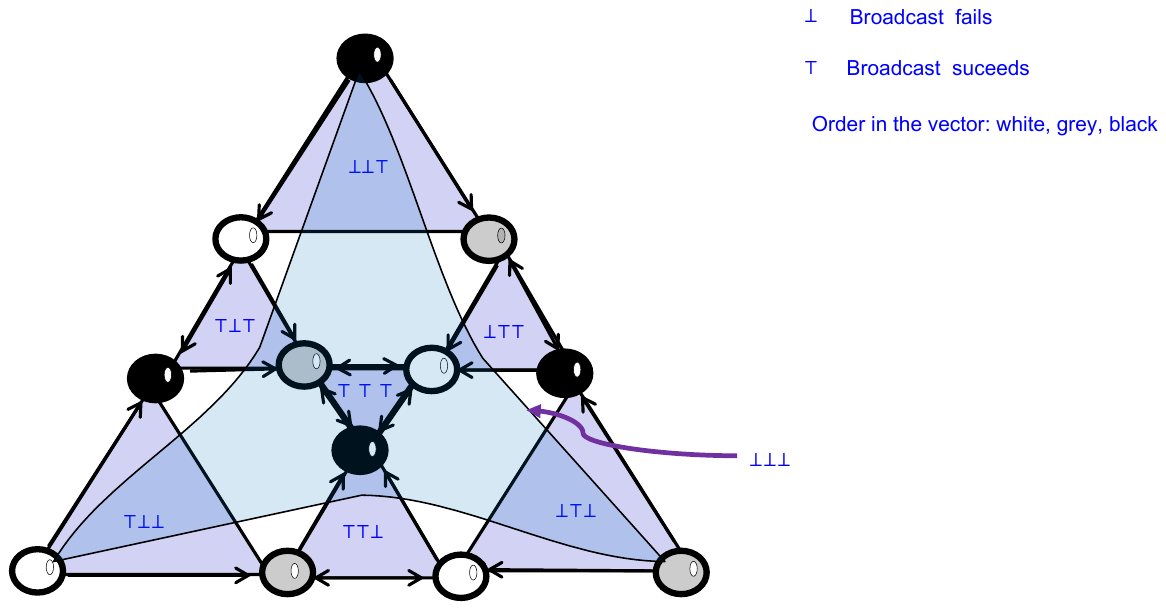}}
\caption{The 1-round protocol complex for  $n=2,3$, starting with a fixed input vector. Process 1 is colored white, 2 is gray and 3 is black. Arrows indicate successful message deliveries (they are not part of the complex). 
}
\label{fig:simplex-flat-proba-mainText}
\end{figure}

A consensus algorithm defines a binary coloring of the vertices of the protocol complex. This is because
the decisions of the processes are a function of their views, which correspond to the vertices of the protocol 
complex.
See Fig.~\ref{fig:optAlgsComplex-mainText}. 
Furthermore, 
 the simplices for which an algorithm fails  induce a cut in the protocol complex, as illustrated in the figure.
 Instead of trying to find the minimum ``simplicial-cut'' directly, we work with the dual problem, described next.
Notice that finding cuts  in simplicial complexes is NP-hard~\cite{MaxwellN21}. 

\begin{figure}[tb]
  \centering
{\includegraphics[scale=.4]{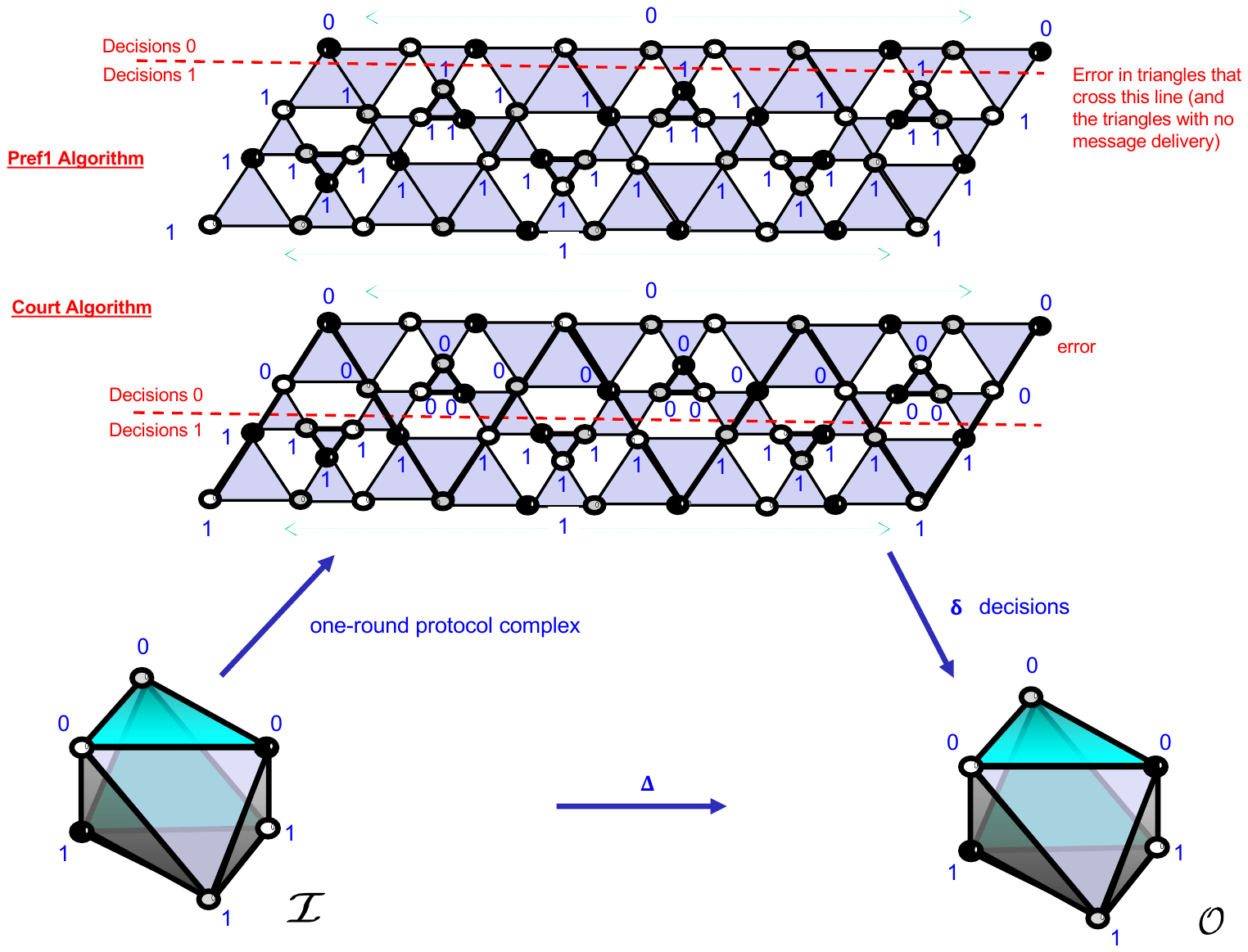}}
\caption{Bottom: For $n=3$, the  input complex $\mathcal{I}$  and output complex $\mathcal{O}$   of all binary values. 
Top: The $1$-round protocol complex $\mathcal{P}$, including
all triangles except those where no messages are delivered. 
In each of the two top figures, the left side and right side are identified in the protocol complex. 
The top subcomplex (resp., bottom subcomplex) corresponding to the input configuration $\mathbf{x}=(000)$ (resp., $\mathbf{x}=(111)$) is not represented. 
The set of triangles for which $\pref1$ and $\cort$ fail to solve consensus induce a cut of the protocol complex, shown as red dotted line. 
Each triangle is one possible execution, and is therefore associated with a probability. The question is: what is the cut minimizing the probability of error over all possible inputs? Equivalently, what is the chromatic simplicial map $\Delta$ satisfying validity, with the smallest (weighted by probability) number of bicolored output simplexes? }
    \label{fig:optAlgsComplex-mainText}
\end{figure}

A closer look at the one-round protocol complex  reveals that, inside the complex for fixed input values,
two triangles intersect in a single vertex: when a process receives messages from two processes, it knows
the full input vector; its only uncertainty is if  its own broadcast was received or not.
This  suggests replacing the protocol complex, with its dual, called  the \emph{Kripke graph} (each vertex of the Kripke graph corresponds to a triangle of the protocol complex).
Thus moving from 
the ``vertex problem,'' which asks for the best assignment of output 0/1-values to the vertices of the protocol complex~$\mathcal{P}$, with an equivalent ``edge problem,'' in which 0/1-values are assigned to the edges of the Kripke graph.
We stress that moving to the Kripke graph can be done in any model of computation~\cite{GOUBAULT2021104597}, but it is in the broadcast model that  the Kripke graph has these nice properties. 

However, the Kripke graph $\mathcal{K}$  of the stochastic broadcast model is not easy to analyze directly, in particular because
it represents all one-round executions, over \emph{all} input value assignments
(so it has parallel edges, see Section~\ref{sec:cubes}).
A main insight is that we can work with what we call the \emph{reduced Kripke graph}, denoted by~$\mathcal{Q}$. 
This graph has two highly desirable properties: 
First, the error probability of an algorithm corresponds to the weight of the \emph{vertex cut},
    induced by the vertices incident on edges of both colors.
 Second,  $\mathcal{Q}$ has a hypercube structure, which enables a direct geometric embedding  of $\mathcal{Q}$ in the $n$-dimensional Euclidean space, and facilitates its analysis. See Fig.~\ref{fig:reduced-Kripke-graph-mainText}. 
 
\begin{figure}[htb]
\centering
\includegraphics[width=5cm,height=4.5cm]{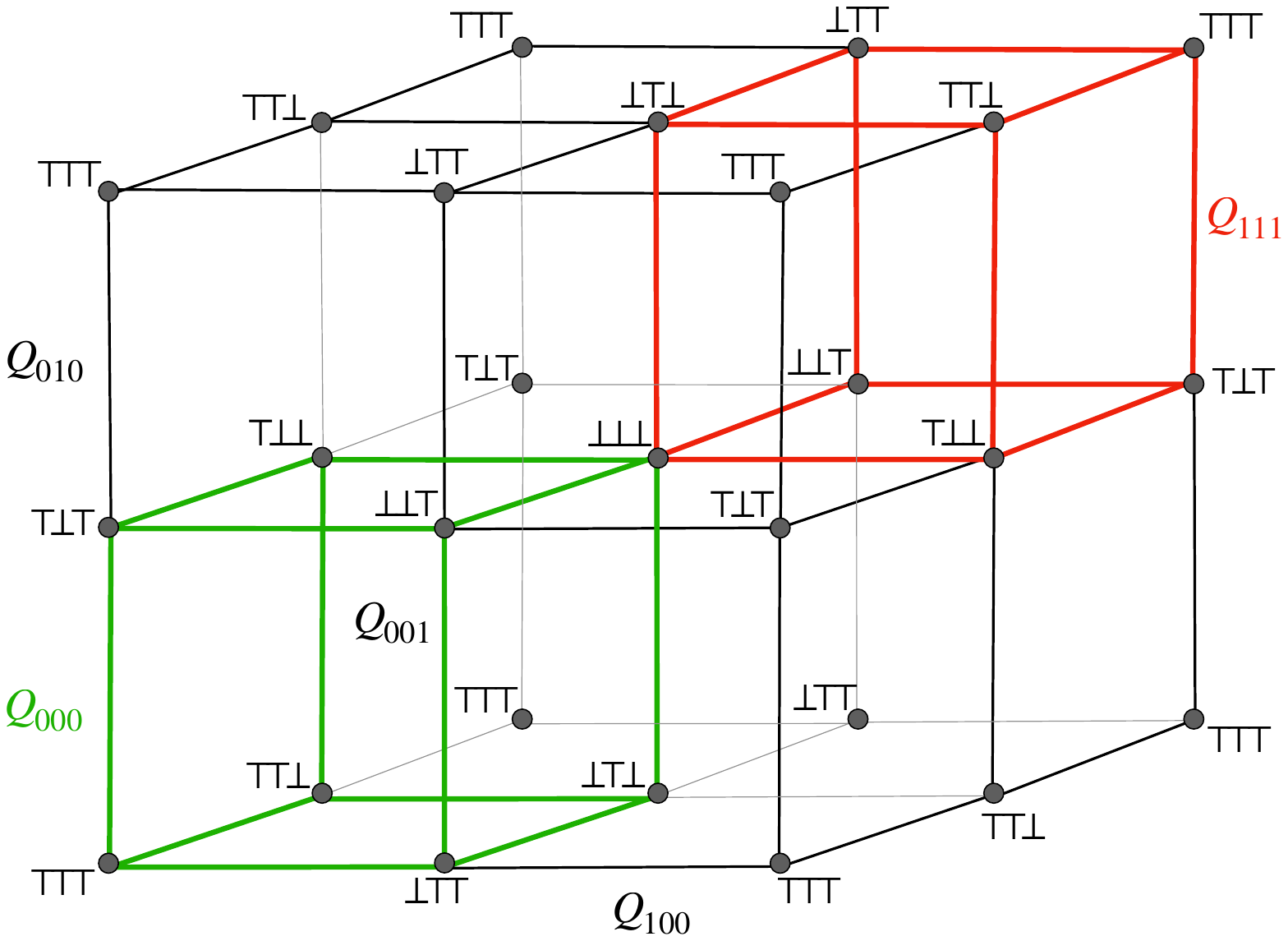}
\caption{The reduced Kripke graph $\mathcal{Q}$ for three processes. It is composed of 8 cubes $Q_{x,y,z}$, each corresponding to a different input $(x,y,z)\in\{0,1\}^3$, whose vertices are failure patterns in $\{\bot,\top\}^3$. The probability of a failure pattern is $p^kq^{3-k}$ where $k$ denotes the number of $\top$ symbols in the  pattern. The horizontal, vertical, and diagonal axes are relative to processes~1, 2, and~3, respectively.}
\label{fig:reduced-Kripke-graph-mainText}
\end{figure}

%%%%%%%%%%%%%%%%%%%%%%%%%%%%%%%%%
\subsection{Related Work}
%%%%%%%%%%%%%%%%%%%%%%%%%%%%%%%%%

Many papers have considered synchronous models where links can fail probabilistically, especially to solve broadcast and gossip, but not to solve consensus, to the best of our knowledge.
For an early survey, see Pelc~\cite{pelcSurvey1996}. The  case where a process may fail to send in a round probabilistically has also been considered, e.g., Pelc and Peleg~\cite{PelcPeleg2005}, they assume the same model we consider but over a communication graph that is not necessarily complete. However, they did not study consensus; instead,   they   estimate the number of rounds needed for broadcasting with high probability.

Many other 
stochastic versions of synchronous models have been studied,  where in each round, a directed graph is chosen  at random among graphs from a given set $S$, such as
 directed and undirected Erdős–Rényi graphs,  directed rooted trees, or directed graphs with at least $k$ edges. Some surveys and recent papers are~\cite{SmoothedAnalDynNet2024,elhayekDisc2024,survey1988,KuhnO11,MichailSpirakis2018}.

Previous papers usually prove asymptotic results,  essentially implying  that  processes can, with high probability, learn each other's inputs and solve consensus in a small  number of rounds in the stochastic broadcast model. However, this type of results  are not applicable for a small, constant number of processes $n$. For instance,
Alon et al.~\cite{AlonBEGH19} assume  binary symmetric channels (BSC), i.e., each message is a bit that may be flipped with some constant probability $\varepsilon>0$. 
 Roughly, they show that for $n\to\infty$, any computation over failure-free channels can be emulated, with high probability, over BSC
channels with a constant multiplicative overhead.  
Another example is El-Hayek, Henzinger and Schmid~\cite{elhayekDisc2024}, who consider stochastic adversaries and show that consensus can be solved with high probability in a logarithmic number of rounds. 

A notable exception to asymptotic analysis papers is~\cite{FraigniaudPSR25}, which initiates the study of the performance of stochastic dynamic networks for specific values of $n$, with a thorough description of the case of $n=2$, as discussed 
above.

The {consensus}  problem has been thoroughly studied before using randomized algorithms in two settings: with probabilistic termination and under a fixed number of rounds.
There is much work on randomized algorithms for consensus with no agreement error (in a non-stochastic setting), but requiring termination only with high probability  e.g.~\cite{AttiyaW04} and~\cite[Chapter 24]{aspnes2025notestheorydistributedsystems}.
With fixed termination, allowing agreement error as we do, the case of  $n=2$ synchronous reliable processes, when  messages can
be lost, is called the \emph{coordinated attack}~\cite{Gray78} (or the two generals problem), and more generally,
the  \emph{randomized coordinated attack}  for an arbitrary number of processes~\cite{VARGHESE199657} and~\cite[Chapter 8]{aspnes2025notestheorydistributedsystems}. These papers have assumed  randomized algorithms in an adversarial setting, while we assume  deterministic algorithms in a stochastic setting. 

The protocol complex representation of a full-information algorithm has been very useful~\cite{HerlihyKR13}. The vertices are local states, and the facets are global states. Its dual is a Kripke graph, where vertices, corresponding to facets, are global states. 
This duality has been formalized by Goubault, Ledent and Rajsbaum~\cite{GOUBAULT2021104597}, and given an epistemic logic semantics.

We use this duality in the stochastic broadcast model, observing that the Kripke graph after one round, for a fixed input vector, is a hypercube.  
 This is exactly the Kripke graph for the  initial states of the Muddy Children puzzle, which has been known
 since Halpern and Vardi~\cite{halpernVardi91}, described in detail in the subsequent book~\cite[Figure 2.3]{knowledgeBook}.
Interestingly, in our model the processes do know their  inputs (in the Muddy Children puzzle players know all inputs except their own). Thus, in our case, the initial Kripke graph is \emph{not} a hypercube. 

The Muddy Children complex is natural, and it is
not a surprise to find that it has appeared in other papers studying unrelated problems: it has also appeared in the local generation of certain languages~\cite{hoyrup2025localgenerationlanguages}.

Another research line using simplicial complexes and randomization is by Fraigniaud, Gelles and Lotker~\cite{FraigniaudGL24}. They also consider synchronous, failure-free processes, but assume the processes are anonymous and the system has access to $k$ independent sources of randomness. They study eventual solutions to symmetry-breaking tasks.

%%%%%%%%%%%%%%%%%%%%%%%%%%%%%%%%%
\subparagraph{Organization.}
%%%%%%%%%%%%%%%%%%%%%%%%%%%%%%%%%

The model is presented in~\sectionref{sec:model}.
The two main  algorithms $\pref1$ and $\court$ are presented and analyzed in \sectionref{sec:algs3}.
  In \sectionref{sec:cubes} we present the reduced Kripke
graph, and use it to prove the main lower bound.
In \sectionref{sec:gen} we consider the general case of $n>3$ and multiple rounds. 
We conclude in \sectionref{sec:conc} with a summary and open problems. 
In addition, \appref{sec:Protocol-Complex} contains
a detailed description of the tight connections between the topological approach of distributed computing via the analysis of the protocol complex, and the epistemic approach of distributed computing via the analysis of the Kripke graph in the context of the stochastic broadcast model. 

%%%%%%%%%%%%%%%%%%%%%%%%%%%%%%%%%
\section{Model and Preliminaries}
\label{sec:model}
%%%%%%%%%%%%%%%%%%%%%%%%%%%%%%%%%

\subsection{The Stochastic Broadcast Model}

Consider a set of $n\geq 1$ processes labeled from 1 to~$n$. 
We denote $[n]=\Set{1,2,\ldots,n}$ for $n\in\Nat$.
For each $i\in [n]$, there is a broadcast channel on which process  $i$ sends messages, and all other processes are receivers.

The  \emph{stochastic broadcast} model is defined by a number of processes $n\geq 2$, and a probability parameter $p\in[0,1]$.
    Communication proceeds as a sequence of synchronous rounds. At each round $r\geq 1$, each of the $n$ processes can send a message of arbitrary size 
    on its broadcast channel.
    Each message broadcast by a process in any round may be either \emph{successful}, with probability $p\in[0,1]$, or unsuccessful, with probability $q=1-p$.
    A successful message is delivered to all non-sender processes, whereas an unsuccessful message is dropped and no process receives it. 
    A sender has no indication
    whether its broadcast is successful or not (otherwise solving consensus is trivial).

Note that, as the size of the message sent by any process at any given round is unlimited, different data can be communicated to different processes, whenever necessary, by marking which part of the message is for which process.

Note also that even though a process
 does not immediately know whether its message was received or not, 
it may however
learn it later, from other processes.
On the other hand, a process $i$ receiving of a message from a process $j\neq i$ knows that all the other processes have received the message from~$j$. Conversely, if a process$i$ does not receive any message from process $j$ at round~$r$, then $i$ knows that none of the processes have received a message from~$j$. 

This paper focuses on \emph{deterministic} algorithms. 
The outputs of a deterministic algorithm in the stochastic broadcast model may be random due to the probabilistic nature of the message delivery. 
We  consider \emph{one-shot algorithms}, i.e., algorithms that execute  a prescribed  number of communication rounds, and then each process outputs a value.
Specifically, we look for consensus algorithms with optimal probability of error (validity is always enforced;   only  agreement can be incorrect), which terminate in a given number of rounds~$r$. 

%%%%%%%%%%%%%%%%%%%%%%%%%%%%%%%%%
\subsection{1-Round Executions of 3 processes}
\label{sec:1-round-algo-operational}
%%%%%%%%%%%%%%%%%%%%%%%%%%%%%%%%%

Given an assignment of inputs to the processes, we have a well-defined
probability space over executions.
For example, the single-round
execution in which
all messages are dropped has probability $q^n$, and the probability
that exactly one process succeeds in sending its messages  in a single-round execution
is $npq^{n-1}$. In general, the probability of an execution $v$, where $k$ given messages are delivered and the other $n-k$ messages are lost,
is $\Pr[v]=p^kq^{n-k}$. 
In each $1$-round consensus algorithm, processes decide binary values. The probability of error
is equal to the number of executions where there is disagreement, weighted by their corresponding probabilities.

\subparagraph{Notation.}

Let us fix a $1$-round execution of a protocol.
\begin{itemize}
        \item $\In_i\in\Set{0,1}$ and $\Out_i\in\Set{0,1}$ respectively denote the input and the output of process~$i$.
        \item $R$ denotes the set of messages successfully delivered.
        \item $R_0\subseteq R$ denotes the set of $0$-message successfully delivered,
        and we set $R_1= R\smallsetminus R_0$.
        \item $count_i(0)$ denotes the number of  inputs $0$ that process $i$ \emph{knows about} (recall that process $i$ does not know whether $R$ includes its own transmission), i.e.,
        $$count_i(0)=\begin{cases}
            \left|R_0\smallsetminus\Set i\right|\,+\,1 &\text{if $\In_i=0$,}\\
            |R_0| &\text{if $\In_i=1$.}
        \end{cases}
        $$
       Similarly, 
        $$count_i(1)=\begin{cases}
            |R_1| &\text{if $\In_i=0$,}\\
            \left|R_1\smallsetminus\Set i\right|\,+\,1 &\text{if $\In_i=1$.}
        \end{cases}
        $$
\end{itemize}

\subparagraph{Example: Algorithm \textsf{majority}.}

Consider, for instance, a typical 1-round consensus algorithm, where all processes broadcast their input values, and decide on the majority of the input values 
they know about, and decide $0$ in case of a tie. 
We formalize the specification in Algorithm~\ref{alg:majority}.

\begin{algorithm}[htb]
\caption{\textsf{majority} --- Code of process $i\in [n]$}
\label{alg:majority}
\begin{algorithmic}[1]
\State broadcast $\In_i$, and receive messages
\State $\displaystyle
\Out_i=\begin{cases}
0&\text{if $count_i(0)\ge count_i(1)$}\\
1&\text{otherwise}
\end{cases}
$
\end{algorithmic}
\end{algorithm}

We stress that we look for algorithms with the smallest probability of error on the worst-case input vector.
By enumerating all executions, we can compute the error probability.
These can be visualized in the $1$-round protocol complex depicted in \figref{fig:simplex-flat-proba-mainText}, where the facets (triangles) are annotated with  delivery patterns. 
It is straightforward to show that the worst-case 
error probability is attained for input vector $011$,
the error probability of \textsf{majority} is $2p^2q+2pq^2+q^3$.

%%%%%%%%%%%%%%%%%%%%%%%%%%%%%%%%%%%%%%%%
\section{Algorithms for   Consensus: $\court$ and $\pref$ }
\label{sec:algs3}
%%%%%%%%%%%%%%%%%%%%%%%%%%%%%%%%%%%%%%%%

In this section we present  one-round consensus algorithms in the stochastic broadcast model,
analyzing their worst-case error probability.
Their behavior is plotted in Figure~\ref{fig:plot-cort2To6}.

\subsection{Algorithms $\pref0$ and $\pref1$}

The two algorithms $\pref$ ($\pref0$ and $\pref1$) are biased toward a particular value:
$\pref0$, for example, says that every process outputs $0$ unless the process hears only $1$ values:

\begin{algorithm}[h]
\caption{$\pref0$ --- Code of process $i\in [n]$}
\label{alg:pref0}
\begin{algorithmic}[1]
\State broadcast $\In_i$, and receive messages
\State \label{line2}$\displaystyle
\Out_i=\begin{cases}
0&\text{if $count_i(0)>0$}\\
1&\text{otherwise}
\end{cases}
$
\end{algorithmic}
\end{algorithm}

Algorithm $\pref1$ is  the same as \algoref{alg:pref0}, except that 
line~\ref{line2} is replaced by the following:
\[
\text{\hspace*{-3mm}\small \ref{line2}}\!\!:~\Out_i=\begin{cases}
1&\text{if $count_i(1)>0$}\\
0 &\text{otherwise}
\end{cases}
\]

\begin{theorem}
\label{thm-pref}
The worst-case error probability of $n$-player, one-round algorithms $\pref0$ and $\pref1$ is $q$.
\end{theorem}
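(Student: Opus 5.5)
By the symmetry $0\leftrightarrow1$, it suffices to treat $\pref1$; the argument for $\pref0$ is identical with the roles of the two values exchanged. The plan has three parts. First we check validity. Then we fix an arbitrary input vector and characterize exactly the delivery patterns that cause disagreement, which gives an upper bound of $q$. Finally we exhibit an input vector on which the error probability equals $q$.

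\emph{Validity.} If all inputs are $1$, then $count_i(1)\ge1$ for every $i$, since each process counts its own input, so every process outputs $1$. If all inputs are $0$, then $count_i(1)=0$ for every $i$, so every process outputs $0$.

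\emph{Upper bound.} Fix an input vector with at least one $0$ and at least one $1$; the uniform inputs are error-free by validity. Let $S_1$ be the set of processes with input $1$. Every process in $S_1$ outputs $1$. A process $j$ with input $0$ outputs $0$ exactly when $R_1=\emptyset$, i.e., when every broadcast of a process in $S_1$ fails. This event does not depend on $j$. Hence either all input-$0$ processes output $0$, in which case disagreement occurs because $S_1\neq\emptyset$ outputs $1$, or they all output $1$, in which case everyone agrees. So the error probability is exactly
\[
\Pr[R_1=\emptyset]=q^{|S_1|}.
\]
Since $|S_1|\ge1$, this is at most $q$.

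\emph{Tightness.} The bound is attained by any input vector with exactly one process having input $1$ and the others having input $0$; such a vector exists because $n\ge2$. For this input the error probability is exactly $q$. Therefore the worst-case error probability of $\pref1$ is $q$.

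The only point needing care is noticing that the decision of every input-$0$ process depends only on the single global event $R_1=\emptyset$. This holds because a broadcast is received either by all other processes or by none of them. Given that observation, the rest is immediate.
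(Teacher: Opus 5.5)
Your proof is correct and follows the same route as the paper, which treats $\pref0$ instead of $\pref1$ (equivalent by symmetry). The key observation is the same: the processes holding the non-preferred value all switch to the preferred one exactly when at least one of the $d$ broadcasts carrying the preferred value succeeds. This gives error probability exactly $q^d$ on that input, maximized at $d=1$. Your explicit validity check is a small addition that the paper leaves implicit.
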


\begin{proof}
    Let us consider, without loss of generality, $\pref0$, and 
    suppose that the input contains $d$ zero values and $n-d$ one's, for some $0<d<n$.
    By the specification of the algorithm, all processes with $0$ input decide $0$. Processes
    with input $1$ decide $0$ if $R_0$ is non-empty, i.e., if
    any of the $0$ transmissions is  successful. It follows
    that with this input, Algorithm $\pref0$ fails with probability $q^d$, which
    is maximized when $d=1$.
\end{proof}

%%%%%%%%%%%%%%%%%%%%%%%%%%%%%%%%%
\subsection{Algorithm $\court$}
\label{ssec:court}
%%%%%%%%%%%%%%%%%%%%%%%%%%%%%%%%%

Algorithm $\court$ is defined as follows. Every process counts the number of $0$ values, and
the number of $1$ values it knows about, i.e., its own input value,
and the values that were successfully transmitted. See Algorithm~\ref{alg:corteous} for a formal description of $\court$ where $\neg x=1-x$ for $x=0,1$. 

\begin{algorithm}[h]
\caption{$\court$ --- Code of process $i\in [n]$}
\label{alg:corteous}
\begin{algorithmic}[1]
\State broadcast $\In_i$, and receive messages
\State $\displaystyle
\Out_i=\begin{cases}
    0 &\text{if $count_i(0)>count_i(1)$}\\
    1 &\text{if $count_i(1)>count_i(0)$}\\
    \neg\In_i&\text{if $count_i(0)=count_i(1)$}
\end{cases}
$
\end{algorithmic}
\end{algorithm}

Intuitively, $\court$ follows the majority, but in case of a tie, it outputs the negation of its own input value
(hence the name). It turns out that this (possibly counterintuitive) tie-breaking rule
has very pleasing consequences. Let us analyze the situations in which Algorithm $\court$
fails. Since the above formulation of $\court$
extends to any number $n$ of processes, we present the 
general analysis.

\begin{lemma}
  \label{lem:maj}
  If $|R_0|\ne|R_1|$ i.e., if one of the values has a majority in the set of successfully delivered
  messages,
  then  $\court$ reaches
  agreement on that value.
\end{lemma}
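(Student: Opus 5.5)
Suppose without loss of generality that $|R_1|>|R_0|$, i.e., $|R_1|\ge |R_0|+1$; the case $|R_0|>|R_1|$ is symmetric, obtained by exchanging the roles of $0$ and $1$. The plan is to show directly from the definition of $count_i$ that every process $i$ outputs $1$, by splitting according to whether $i$'s own broadcast was delivered and according to its input.

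\emph{Case 1: $i\in R$.} Then $i$'s own input is already counted in $R_{\In_i}$, so the definition gives $count_i(0)=|R_0|$ and $count_i(1)=|R_1|$ in both subcases $\In_i=0$ and $\In_i=1$. Hence $count_i(1)>count_i(0)$ and $i$ outputs $1$.

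\emph{Case 2: $i\notin R$.} If $\In_i=1$, then $count_i(1)=|R_1|+1>|R_0|=count_i(0)$, so $i$ outputs $1$. If $\In_i=0$, then $count_i(0)=|R_0|+1\le |R_1|=count_i(1)$. Either the inequality is strict and $i$ outputs $1$ by majority, or there is a tie, in which case the courteous rule gives $\Out_i=\neg\In_i=1$.

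Thus every process decides $1$, which is the value holding the majority in $R$, and agreement holds. The only delicate point is the subcase $\In_i=0$, $i\notin R$, with $|R_1|=|R_0|+1$. There the tie is produced exactly by the process's own undelivered minority vote, and it is precisely the tie-breaking rule toward $\neg\In_i$ that resolves it correctly. This is the step I expect to be the crux, and it explains why the courteous tie-breaking rule is the right one.
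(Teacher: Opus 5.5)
Your proof is correct and takes essentially the same approach as the paper. Both split on whether $i\in R$: when the broadcast was delivered the counts are exactly $|R_0|$ and $|R_1|$, and when it was not, the only possible tie comes from the process's own undelivered minority input, which the courteous tie-break resolves. The paper treats the mirror case $|R_0|>|R_1|$ and handles $i\notin R$ without splitting on the input, which is only a presentational difference.
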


\begin{proof}
 Assume that $|R_0|>|R_1|$. 
We proceed by considering process behavior according to whether the
transmission of the process was delivered or dropped. 
\begin{description}
\item[Case $i\notin R$:] Such a process $i$ received all
successful messages and hence received a
  majority of $0$ messages. Therefore, $count_i(0)\ge count_i(1)$ regardless of $\In_i$. If $count_i(0)>count_i(1)$ then
  $\Out_i=0$; and if $count_i(0)=count_i(1)$ then it must be the case that $\In_i=1$, and therefore $\Out_i=\neg\In_i=0$ as well.
\item[Case $i\in R$:] 
for any such  process $i$ we have $count_i(0)=|R_0|$ and $count_i(1)=|R_1|$.
By assumption that $|R_0|>|R_1|$ we therefore have that $count_i(0)>count_i(1)$ and hence $\Out_i=0$.
\end{description}
Thus, if $|R_0|>|R_1|$, all processes output $0$. 
The case of $|R_0|<|R_1|$ is analogous: it results in all processes outputting $1$.
\end{proof}

\begin{corollary}
  If the number of delivered messages is odd, then $\court$ results in
  agreement. 
\end{corollary}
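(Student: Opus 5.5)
The corollary follows directly from \lemmaref{lem:maj}, so the plan is a short parity argument. Fix an arbitrary input assignment and an arbitrary $1$-round execution, and let $R$ be its set of successfully delivered messages. By definition, $R_0\subseteq R$ and $R_1=R\smallsetminus R_0$ partition $R$, so
\[
|R| = |R_0| + |R_1|.
\]

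The key step is to show that $|R_0|\neq|R_1|$. If the two were equal, we would have $|R| = 2|R_0|$, which is even. This contradicts the hypothesis that the number of delivered messages is odd.

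Since $|R_0|\neq|R_1|$, one of the two values has a strict majority among the delivered messages. The hypothesis of \lemmaref{lem:maj} is therefore satisfied. The lemma then gives that every process running $\court$ outputs that majority value, so agreement holds in this execution.

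There is no real obstacle here. The only point to check is that $R_0$ and $R_1$ do partition $R$, and this holds by their definition.
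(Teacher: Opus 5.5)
Your proof is correct and is the intended argument. $R_0$ and $R_1$ partition $R$, so an odd $|R|$ forces $|R_0|\neq|R_1|$, and \lemmaref{lem:maj} then gives agreement on the majority value; the paper states the corollary without a separate proof because it follows from the lemma in exactly this way.
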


\begin{lemma}
  \label{thm:court-prob}
  Consider an execution of $\court$ with input assignment that has $n_0$
  $0$-values and $n_1=n-n_0$ $1$-values, and assume w.l.o.g.\ that
  $0<n_0\le n_1<n$. Then
  $$
  \Pr[\mbox{$\court$ fails}]=
  \sum_{i=0}^{n_0}{n_0\choose i}{n_1\choose i}p^{2i}q^{n-2i}~.
  $$
\end{lemma}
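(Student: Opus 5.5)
The plan is to characterize exactly when $\court$ fails, and then count the corresponding failure patterns. By \lemmaref{lem:maj}, whenever $|R_0|\neq|R_1|$ all processes agree. So it suffices to show the converse: whenever $|R_0|=|R_1|$, the algorithm fails. The failure event is then precisely $\{|R_0|=|R_1|\}$, and the stated formula follows by summing over its size.

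To show failure when $|R_0|=|R_1|=i$, I would split into two cases.

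\emph{Case $i\ge1$.} Then $R$ contains some process $a$ with $\In_a=0$ and some process $b$ with $\In_b=1$. Any process $j\in R$ sees $count_j(0)=|R_0|=i=|R_1|=count_j(1)$, so it breaks the tie courteously and outputs $\neg\In_j$. Hence $\Out_a=1$ and $\Out_b=0$, which is a disagreement.

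\emph{Case $i=0$.} No message is delivered. Every process sees only its own input, so it has a strict majority for that value and outputs $\In_j$. Since $0<n_0<n$, both input values are present, so again there is disagreement.

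For completeness, a process $j\notin R$ has a strict majority for its own input, but this is not needed: one disagreeing pair suffices.

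It remains to compute the probability. The event $|R_0|=|R_1|=i$ occurs when exactly $i$ of the $n_0$ zero-processes and exactly $i$ of the $n_1$ one-processes succeed, and the other $n-2i$ broadcasts fail. Each such pattern has probability $p^{2i}q^{n-2i}$, and there are $\binom{n_0}{i}\binom{n_1}{i}$ of them. These events are disjoint for different $i$, and $i$ ranges from $0$ to $\min(n_0,n_1)=n_0$. Summing gives the claimed expression.

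There is no real obstacle here. The only point needing care is the $i=0$ case, where the tie-breaking rule is not what causes failure; it is the absence of any delivered message.
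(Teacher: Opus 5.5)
Your proof is correct and follows the paper's argument. You use Lemma~\ref{lem:maj} to reduce the failure event to $|R_0|=|R_1|$, show that a tie forces disagreement, and count the $\binom{n_0}{i}\binom{n_1}{i}$ patterns of weight $p^{2i}q^{n-2i}$. The only difference is cosmetic: the paper avoids your case split by noting that under a tie every successful transmitter flips its input while every failed one keeps it, so exactly $n_0$ processes decide $0$ and $n_1$ decide $1$, which covers $i=0$ uniformly.
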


\begin{proof}
By \lemmaref{lem:maj}, $\court$ does not fail if the values that are
successfully transmitted are not split equally. 
Suppose now that in the set of successfully transmitted messages,
the number of 0 votes  equals the number of 1 votes.
Then every successful transmitter $j$
has $c_j(0)=c_j(1)$, hence it sets $\Out_j=\neg\In_j$. In addition,
every process whose transmission failed
decides its input value. It follows that  $n_0$ 
processes decide $0$ and $n_1$ processes decide $1$. 
By assumption,  $n_0>0$ and $n_1>0$, therefore
we
conclude that $\court$ fails when the number of 0 votes  equals the number of 1 votes. To calculate 
the probability of failure, note the probability
that exactly $i$ $0$-broadcasts
and $i$ $1$-broadcasts are successful is
${n_0\choose i}{n_1\choose i}p^{2i}q^{n-2i}$,
and the theorem follows.
\end{proof}

\begin{theorem}
\label{cor:equal}
    The worst-case error probability of $n$-player,   algorithm  $\court$  is %at most
    $$
    \sum_{i=0}^{\floor{n/2}}{\floor{n/2}\choose i}{\ceil{n/2}\choose i}p^{2i}q^{n-2i}~.
  $$
\end{theorem}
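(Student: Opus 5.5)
The plan is to reduce the statement to \lemmaref{thm:court-prob} and then maximize over the input split. By validity, unanimous inputs never cause an error: if all inputs equal $b$, every process has $count_i(b)\ge 1$ and $count_i(\neg b)=0$, so it outputs $b$. For a mixed input with $n_0$ zeros and $n_1=n-n_0$ ones, we may assume by the $0/1$ symmetry of $\court$ that $1\le n_0\le n_1$. \lemmaref{thm:court-prob} then gives the error probability
\[
f(n_0)=\sum_{i=0}^{n_0}\binom{n_0}{i}\binom{n-n_0}{i}p^{2i}q^{n-2i}.
\]
Thus the worst case is $\max_{1\le n_0\le\floor{n/2}} f(n_0)$, and we must show that this maximum is attained at $n_0=\floor{n/2}$. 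That value is exactly the claimed expression.

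The main step is to show that $f$ is nondecreasing in $n_0$ on $1\le n_0\le\floor{n/2}$. I would write $f(n_0)=\Pr[X=Y]$, where $X\sim\mathrm{Bin}(n_0,p)$ and $Y\sim\mathrm{Bin}(n_1,p)$ are independent. This is just the probability that the delivered $0$-messages and $1$-messages are equally many. Equivalently, letting $Y'=n_1-Y\sim\mathrm{Bin}(n_1,q)$, we get $f(n_0)=\Pr[X+Y'=n_1]$. When we move from $(n_0,n_1)$ to $(n_0+1,n_1-1)$, the target changes from $n_1$ to $n_1-1$. The sum also changes: one $\mathrm{Bern}(q)$ summand is replaced by a $\mathrm{Bern}(p)$ summand.

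Condition on the remaining $n-1$ summands, whose sum we call $S$; its law is the same in both cases. We get
\[
f(n_0)=q\,\Pr[S=n_1-1]+p\,\Pr[S=n_1],\qquad f(n_0+1)=p\,\Pr[S=n_1-2]+q\,\Pr[S=n_1-1].
\]
It therefore suffices to show $\Pr[S=n_1-2]\ge\Pr[S=n_1]$. Here $S$ is a sum of $n_0$ variables $\mathrm{Bern}(p)$ and $n_1-1$ variables $\mathrm{Bern}(q)$.

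This comparison is where I expect the difficulty. My plan is to use the mirror symmetry of $S$: the map $S\mapsto n-1-S$ swaps the two families of summands. Since the family sizes are nearly balanced ($n_1-1\ge n_0-1$ and the sizes differ by at most about $n-2n_0$), I would pair $\Pr[S=n_1]$ with a value on the other side of the mode. I would then apply log-concavity and unimodality of $S$, which hold because $S$ is a sum of independent Bernoulli variables. The hope is that the point $n_1-2$ lies at least as close to the center of mass as $n_1$ does.

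If this symmetry-plus-log-concavity argument becomes delicate for extreme $p$, I would fall back on a direct coefficientwise comparison. That route expands both sides as polynomials in $p,q$ and uses the Vandermonde-type inequality $\binom{n_0+1}{i}\binom{n_1-1}{i}\ge\binom{n_0}{i}\binom{n_1}{i}$ for $n_0+1\le n_1$. This inequality follows from the ratio $\frac{(n_0+1)(n_1-i)}{(n_0+1-i)n_1}\ge1$, which reduces to $n_0+1\le n_1$. Both polynomials have the same monomials $p^{2i}q^{n-2i}$, so this coefficientwise route is in fact the cleanest. Each summand increases when $n_0$ is incremented, provided $n_0+1\le\floor{n/2}$ (which gives $n_0+1\le n_1$). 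Monotonicity follows, and with it the theorem.
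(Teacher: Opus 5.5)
Your coefficientwise route is correct and is essentially the paper's proof. The paper likewise bounds $\binom{k}{i}\binom{n-k}{i}$ termwise by $\binom{\floor{n/2}}{i}\binom{\ceil{n/2}}{i}$, but does it in one step by applying $k(n-k)\le\floor{n/2}\ceil{n/2}$ to each factor pair $(k-j)(n-k-j)$, instead of your incremental ratio comparison in $n_0$. The Bernoulli-coupling/log-concavity detour is unnecessary (and, as written, only a heuristic), so you can drop it.
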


\begin{proof}
We show that the worst-case input is when the input values are divided
as equally as possible: $\floor{n/2}$ start with $0$ and $\ceil{n/2}$
start with $1$ (or vice versa). To see this, we use the fact that for
any $n\in\Nat$ and integer $0\le k\le n$,
\begin{equation}
  \label{eq:CS}
 k(n-k)\le\floor{n\over2}\cdot\ceil{n\over2}~. 
\end{equation}
Now, consider any nontrivial input with $0<k<n$ 0's and $n-k$
1's. Assume without loss of generality that $k\le n-k$, i.e.,
$k\le\floor{n/2}$.
Then the probability that $\court$ fails on that input is exactly
\begin{align*}
  \sum_{i=0}^{k}
  &\binom{k}{i}\binom{n-k}{i}p^{2i}q^{n-2i}
  \\
  & =\sum_{i=0}^k\frac{k(k-1)\cdots(k-i+1)}{i!}\cdot\frac{(n-k)(n-k-1)\cdots(n-k-i+1)}{i!}p^{2i}q^{n-2i}
  \\
  &=\sum_{i=0}^k\frac{k(n-k)\cdot(k-1)(n-k-1)\cdots(k-i+1)(n-k-i+1)}
    {(i!)^2}p^{2i}q^{n-2i}
  \\
&\le\sum_{i=0}^k\frac{\floor{n\over2}\ceil{n\over2}\cdot
    (\floor{n\over2}-1)(\ceil{n\over2}-1)\cdots(\floor{n\over2}-i+1)(\ceil{n\over2}-i+1)}
    {(i!)^2}p^{2i}q^{n-2i}
  &\text{by \Eqr{eq:CS}}
  \\
  & \le \sum_{i=0}^{\floor{n/2}}
    \binom{\floor{n/2}}{i}\binom{\ceil{n/2}}{i}p^{2i}q^{n-2i}~,
\end{align*}
as claimed.
\end{proof}

\begin{figure}[htb]
\centering
\includegraphics[scale=.55]{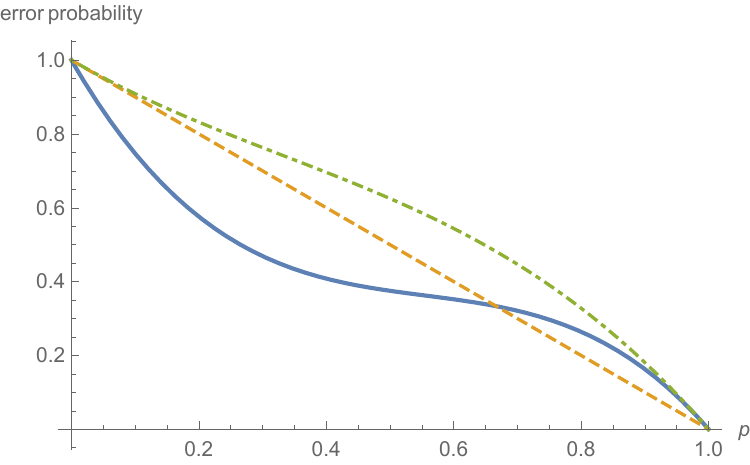}
\caption{The error probability of the  $3$-player algorithms: $\court$
(thick blue line), $\pref1$ (dashed orange line) and \textsf{majority} (dashed-dotted green line).}
\label{fig:maj}
\end{figure} 

We note that for $p=1/2$, the error probability formula  simplifies to 
$
\binom{n}{\floor{n/2}}2^{-n}\in \Theta(1/\sqrt n).
$
\figref{fig:maj} shows the worst-case error probabilities of
\textsf{majority}, $\pref1$ and $\court$ for $n=3$. 

\begin{figure}[htb]
\centering
\includegraphics[scale=.4]{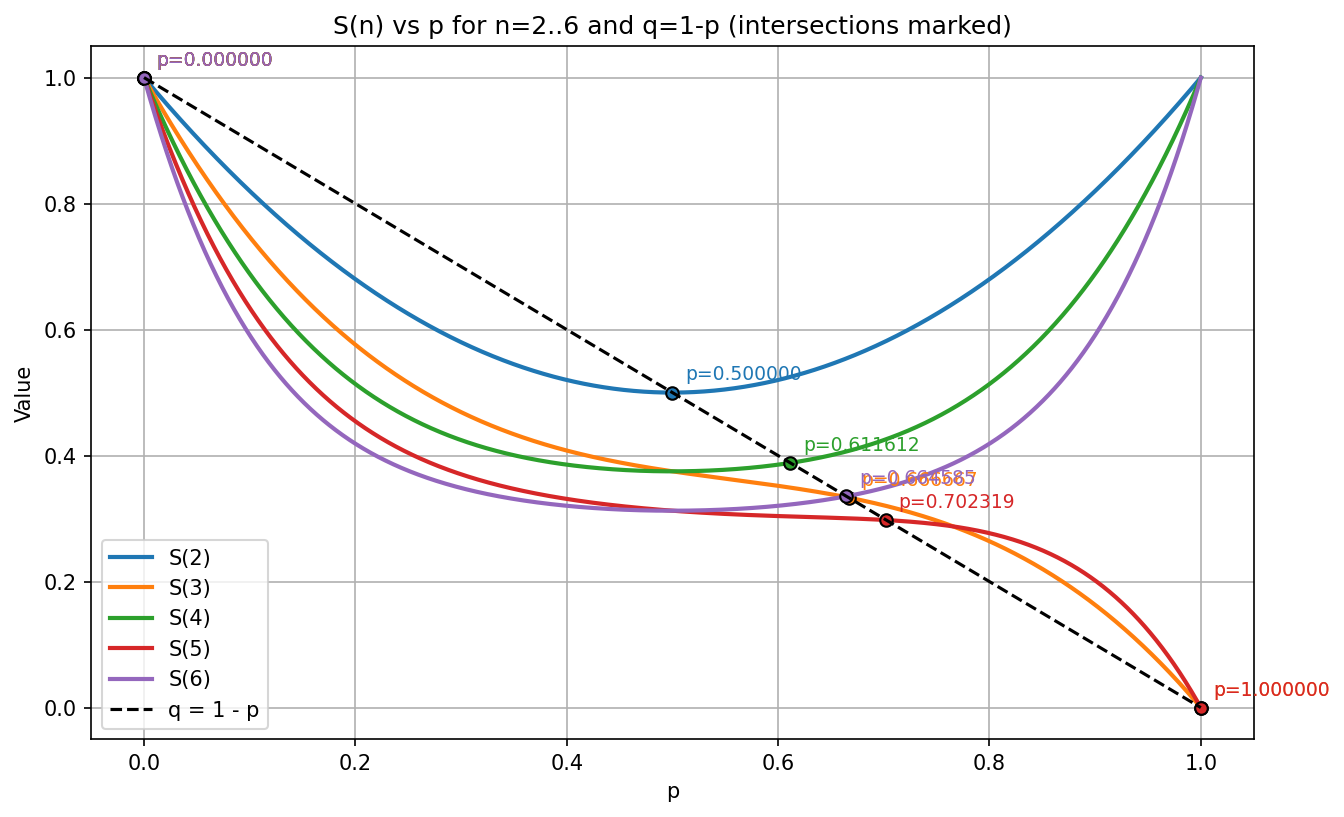}
\caption{The behavior of $\cort$ for small values of $n$. }
\label{fig:plot-cort2To6}
\end{figure}

See Figure~\ref{fig:plot-cort2To6} for graphs of the error probability
of $\cort$. Note the different behaviors for odd and even $n$ values.

%%%%%%%%%%%%%%%%%%%%%%%%%%%%%%%%%
\section{Lower Bound for 1-Round Consensus in 3-Process Systems}
\label{sec:cubes}
%%%%%%%%%%%%%%%%%%%%%%%%%%%%%%%%%

This section is dedicated to proving a structural theorem establishing a one-to-one correspondence between 1-round consensus algorithms in the stochastic broadcast model, and 2-edge-colorings of a certain graph. 
This is the basis for our analysis of the lower bound on the error probability of 1-round consensus algorithms. 

\subsection{The Kripke Graph $\mathcal{K}$} 
\label{sec:kripkeGraph}

The vertices of the Kripke graph~\cite{knowledgeBook,kripke1959}  $\mathcal{K}$ after one round are all the possible configurations of the system after one round. These vertices can thus be defined as all pairs $(\mathbf{x},\phi)$ where $\mathbf{x}\in\{0,1\}^n$ is an $n$-dimensional vector specifying the inputs assigned to the processes, and $\phi\in\{\bot,\top\}^n$ is an $n$-dimensional  vector specifying the message deliveries. There is an edge between two vertices $(\mathbf{x},\phi)$ and $(\mathbf{x}',\phi')$ if there exists a process that cannot distinguish these two configurations. Such an edge is then labeled by the set of processes that cannot distinguish the two configurations. 

Given $\mathbf{x}\in\{0,1\}^n$, the subgraph $\mathcal{K}(\mathbf{x})$ of the Kripke graph $\mathcal{K}$ induced by the $2^n$ vertices $(\mathbf{x},\phi)$ with $\phi\in\{\bot,\top\}^n$ is isomorphic to the $n$-dimensional hypercube with vertex set $\{\bot,\top\}^n$, and edge set consisting of all vertex pairs
$\{\phi,\phi'\}$ that differ in exactly one entry. Moreover, if $\phi$ and $\phi'$ differ in position~$i$, then the edge $\{(\mathbf{x},\phi),(\mathbf{x},\phi')\}$ of $\mathcal{K}(\mathbf{x})$ is labeled~$i$. So, in $\mathcal{K}(\mathbf{x})$, every edge is labeled by exactly one process name $i\in [n]$, which can also be interpreted as the dimension of that hypercube edge. 

It follows that the Kripke graph $\mathcal{K}$ can be decomposed into $2^n$ vertex-disjoint subgraphs $\mathcal{K}(\mathbf{x})$, each one isomorphic to the $n$-dimensional hypercube. The Kripke graph also contains edges between vertices of different hypercubes $\mathcal{K}(\mathbf{x})$ and $\mathcal{K}(\mathbf{x}')$ for $\mathbf{x}\neq \mathbf{x}'$. 
However,  
we will show that these edges can be ignored, and the subcubes can be ``glued'' together, without information loss (w.r.t.\ the Kripke graph), as far as the design of consensus algorithms with low error probability is concerned. 

%%%%%%%%%%%%%%%%%%%%%%%%%%%%%%%%%
\subsection{The Reduced Kripke Graph $\mathcal{Q}$} 
\label{sec:ReducedKripkeGraph}
%%%%%%%%%%%%%%%%%%%%%%%%%%%%%%%%%

For the purpose of defining the reduced Kripke graph, denoted by $\mathcal{Q}$, it is convenient to embed the vertices of that graph in~$\mathbb{R}^n$. Specifically, the vertices of~$\mathcal{Q}$ are $\{-1,0,1\}^n$. Note that $\mathcal{K}$ has more than $3^k$ vertices, meaning that more than one vertex may be embedded at the same coordinates in~$\mathbb{R}^n$, i.e., to the same vertex in~$\mathcal{Q}$. 

\subsubsection{Construction of the Reduced Kripke Graph}

Let us define, for every bit $b\in\{0,1\}$, and every delivery symbol $\delta\in\{\bot,\top\}$, the function 
\[
g(b,\delta)=\begin{cases}
    0 &\text{if $\delta=\bot$}\\
    -1 &\text{if $\delta=\top$ and $b=0$}\\ 
     1 &\text{if $\delta=\top$ and $b=1$}. 
  \end{cases}
\]
Every vertex $(\mathbf{x},\phi)\in\{0,1\}^n\times \{\bot,\top\}^n$ of $\mathcal{K}$ with $\mathbf{x}=(x_1,\dots,x_n)$ and $\phi=(\phi_1,\dots,\phi_n)$ is mapped to the point 
\[
g(\mathbf{x},\phi)\DEF (g(x_1,\phi_1),\dots,g(x_n,\phi_n))\in\{-1,0,1\}^n.
\]
In particular, for every $\mathbf{x}=(x_1,\dots,x_n)\in\{0,1\}^n$, the subgraph $\mathcal{K}(\mathbf{x})$ of $\mathcal{K}$ is thus mapped to the geometric hypercube 
\[
Q_\mathbf{x}=[0,g(x_1,\top)]\times \dots \times [0,g(x_n,\top)].
\]
Note that, for every vertex $v$ of~$\mathcal{Q}$, i.e., for every position $\mathbf{v}=(v_1,\dots,v_n) \in\Set{-1,0,1}^n$,  all vertices of $\mathcal{K}$ that are mapped to $\mathbf{v}$ have the same delivery pattern $\varphi(\mathbf{v})$ defined as 
\begin{equation}
\label{eq:delivery-pattern-associated-to-vertices}
\varphi(\mathbf{v})\DEF (\varphi(v_1),\dots,\varphi(v_n))~,
\mbox{ where, for $i\in [n]$,\hspace*{2mm} }
\varphi(v_i)=\begin{cases}
    \top &\text{if $|v_i|=1$}\\
    \bot &\text{if $v_i=0$}~.
\end{cases}
\end{equation}
The weight associated with a point $\mathbf{v}\in\Set{-1,0,1}$ is $p^{\|\mathbf{v}\|}q^{n-\|\mathbf{v}\|}$, where $p$ is the model probability parameter, and $\|\mathbf{v}\|=\sum_{i=1}^n|v_i|$, i.e., $\|\mathbf{v}\|$ is
the number of $\top$ symbols in $\varphi(\mathbf{v})$. (The weight
of a point in $\mathcal{Q}$ is the probability that its associated delivery 
pattern occurs.)
The resulting graph  $\mathcal{Q}$ in $\mathbb{R}^n$ is called the \emph{reduced Kripke graph}.
See Figure~\ref{fig:reduced-Kripke-graph-mainText}. 

\subsubsection{Properties of the Reduced Kripke Graph}

It follows from the embedding that the vertices of $\mathcal{Q}$ are the $3^n$ points  $v\in\mathbb{R}^n$ at coordinates $\{-1,0,1\}^n$, and there is an edge between two vertices $\mathbf{v}=(v_1,\dots,v_n)$ and $\mathbf{v}'=(v'_1,\dots,v'_n)$ of $\mathcal{Q}$ if and only if $\mathbf{v}$ and $\mathbf{v}'$ differ in a single entry~$i$ by $1$, i.e., for some index $i\in[n]$,  $|v_i-v'_i|=1$, and for all $j\ne i$,
$v_j=v'_j$. The index $i$ is called the \emph{dimension} of the edge $\{\mathbf{v},\mathbf{v}'\}$ of~$\mathcal{Q}$. 

Before analyzing the semantics of $\mathcal{Q}$, let us consider 
its geometry briefly. Let us denote by $\mathcal{Q}(\mathbf{x})$ the embedding of $\mathcal{K}(\mathbf{x})$ in $\mathbb{R}^n$. Note that while $\mathcal{K}(\mathbf{x})$ and $\mathcal{K}(\mathbf{x}')$ are vertex-disjoint for $\mathbf{x}\neq \mathbf{x}'$, the corresponding cubes $\mathcal{Q}(\mathbf{x})$ and $\mathcal{Q}(\mathbf{x}')$ share vertices. For instance, the
origin of $\mathbb{R}^n$, which is mapped by $\varphi$ to the pattern~$\bot^n$, is common to all the hypercubes $\mathcal{Q}(\mathbf{x})$ --- it is actually the only vertex common to all $\mathcal{Q}(\mathbf{x})$'s. Conversely, each ``corner vertex,''
i.e., each vertex $\mathbf{v}\in\Set{-1,1}^n$ of $\mathcal{Q}$,  is mapped to the
pattern~$\top^n$, and belongs to a single $\mathcal{Q}(\mathbf{x})$. 

We now show that no information about the different executions (inputs and delivery patterns)
has been lost  by considering the reduced Kripke graph $\mathcal{Q}$ instead of the ``complete'' Kripke graph $\mathcal{K}$, as far as algorithm design is concerned. 

\begin{lemma}\label{lem:reduced-kripke-graph-preserve-information}
  Let $e=\{\mathbf{v},\mathbf{v}'\}$ be an edge of the reduced Kripke graph $\mathcal{Q}$, of dimension~$i\in [n]$, and let $\mathbf{x}\in\{0,1\}^n$ such that $e \in E(\mathcal{Q}(\mathbf{x}))$.
  \begin{itemize}
    \item For the input $\mathbf{x}$, process $i$ has the same local view in both executions $\varphi(\mathbf{v})$ and $\varphi(\mathbf{v}')$. 

    \item If there exists $\mathbf{x}'\neq \mathbf{x}$ such that $e\in E(\mathcal{Q}(\mathbf{x}))\cap E(\mathcal{Q}(\mathbf{x}'))$ then process $i$ has the same view in all  executions that start with input assignment $\mathbf{x}$ or $\mathbf{x}'$, and have delivery patterns $\varphi(\mathbf{v})$ or $\varphi(\mathbf{v}')$.
    
    \item  Let $e'=\{\mathbf{w},\mathbf{w}'\}$ be an edge of the reduced Kripke graph $\mathcal{Q}$, distinct from~$e$, but of same dimension~$i\in [n]$. The local view of process $i$ in $e$ is different from the local view of process $i$ in $e'$.
  \end{itemize}
\end{lemma}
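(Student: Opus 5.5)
The plan is to first pin down exactly what process $i$'s one-round view is, and then read all three items off the coordinates of $\mathcal{Q}$. In a one-round full-information execution with input $\mathbf{x}$ and delivery pattern $\phi$, the view of process $i$ consists of its own input $x_i$ together with the set of pairs $(j,x_j)$ for $j\ne i$ with $\phi_j=\top$. It does not depend on $\phi_i$, since the sender gets no feedback. In the coordinates of $\mathcal{Q}$, a coordinate $v_j=g(x_j,\phi_j)$ is $0$ if $j$'s broadcast failed and equals $\pm1$ according to $x_j$ otherwise. So for $j\neq i$, the entry $v_j$ encodes exactly what $i$ learns about $j$. Hence the view of $i$ at the vertex $g(\mathbf{x},\phi)$ is determined by, and determines, the pair $(x_i,\ (v_j)_{j\ne i})$. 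I will state this encoding as the key observation.

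For the first item, take an edge $e=\{\mathbf{v},\mathbf{v}'\}$ of dimension $i$ lying in $\mathcal{Q}(\mathbf{x})$. Then $\mathbf{v}$ and $\mathbf{v}'$ agree on all coordinates $j\ne i$, and the input $x_i$ is fixed. By the observation, process $i$ has the same view in the executions $(\mathbf{x},\varphi(\mathbf{v}))$ and $(\mathbf{x},\varphi(\mathbf{v}'))$. Here I use that $\varphi(\mathbf{v})$ is the delivery pattern, by \Eqr{eq:delivery-pattern-associated-to-vertices}.

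For the second item, I first characterize when $e\in E(\mathcal{Q}(\mathbf{x}))$. Since $Q_\mathbf{x}=\prod_j[0,g(x_j,\top)]$, the edge lies in $Q_\mathbf{x}$ iff its $i$-th coordinates are $\{0,g(x_i,\top)\}$ and, for each $j\ne i$, the common coordinate $v_j$ lies in $\{0,g(x_j,\top)\}$. The $i$-th coordinate pair of a dimension-$i$ edge is either $\{0,1\}$ or $\{-1,0\}$, so this pair forces $x_i$. Therefore if $e$ lies in both $\mathcal{Q}(\mathbf{x})$ and $\mathcal{Q}(\mathbf{x}')$, we get $x_i=x'_i$. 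Moreover, wherever $v_j\neq0$ the value $v_j$ forces $x_j=x'_j$; the two inputs can differ only at indices $j\ne i$ with $v_j=0$, i.e.\ at failed senders. In all four executions (inputs $\mathbf{x}$ or $\mathbf{x}'$, patterns $\varphi(\mathbf{v})$ or $\varphi(\mathbf{v}')$), the pair $(x_i,(v_j)_{j\neq i})$ is therefore identical. By the observation, the view of $i$ is identical too.

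For the third item, let $e'=\{\mathbf{w},\mathbf{w}'\}$ be another edge of dimension $i$, and consider two cases. If its $i$-th coordinate pair differs from that of $e$ (one is $\{0,1\}$, the other $\{-1,0\}$), then the input $x_i$ differs and the views differ. Otherwise the two edges share the same $i$-th coordinate pair, so being distinct they must differ in some common coordinate $j\ne i$. Then process $i$ either hears from $j$ in one edge and not the other, or hears different values, and again the views differ. The main subtlety, which I will state carefully, is that the view of $i$ on an edge is well defined. This is exactly what the first two items provide, so the third item is the injectivity half of the same encoding.
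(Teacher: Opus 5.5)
Your proof is correct and follows essentially the same route as the paper: item 1 holds because $\phi_i$ is invisible to process $i$, item 2 follows from $x_i=x'_i$ together with the fact that $\mathbf{x},\mathbf{x}'$ agree on the nonzero coordinates other than $i$, and item 3 holds because two distinct dimension-$i$ edges differ either in $x_i$ or in some coordinate $j\neq i$ that is visible to $i$. Routing all three items through the explicit encoding of $i$'s view as $(x_i,(v_j)_{j\neq i})$ is a cleaner organization than the paper's same-cube versus different-cube case split in the third item, but the underlying argument is the same.
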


\begin{proof}
For establishing the first item, it is sufficient to observe that, under a fixed input,
the only difference between the executions represented by $\varphi(\mathbf{v})$ and $\varphi(\mathbf{v}')$ is that in one the message broadcasted by process $i$ was successfully
delivered, while in the other the message was dropped. These events are
indistinguishable to process~$i$.

Thanks to the first item, for establishing the second item, it is sufficient to
show that, for the pattern $\varphi(\mathbf{v})$, the view of process $i$
is the same under input assignments $\mathbf{x}$ and $\mathbf{x}'$.
By construction, $\mathbf{v}$ and $\mathbf{v}'$ differ only in
coordinate~$i$, in which one of them has coordinate~$0$, and the other has
coordinate either $1$ or~$-1$.
Moreover, also by construction, since the edge $e$ has dimension~$i$, and belongs both to $\mathcal{Q}(\mathbf{x})$ and $\mathcal{Q}(\mathbf{x}')$, it must be the case that $x_i=x'_i$,
i.e., the local input of $i$ is the same in $\mathbf{x}=(x_1,\dots,x_n)$ and $\mathbf{x}'=(x'_1,\dots,x'_n)$. 
Let $\bar{Z} \subseteq[n]\smallsetminus\Set{i}$ denote the set of non-zero coordinates of~$\mathbf{v}$, other than~$i$.
Since for all $j\ne i$ we have $v_j=v'_j$, it follows that
$x_j=x'_j$ for all $j\in \bar{Z}$. Moreover,
the messages that are successfully delivered in
$\varphi(\mathbf{v})$ and $\varphi(\mathbf{v}')$ (with the exception of process~$i$)
are sent exactly by the  processes in $\bar{Z} $. It follows that process $i$ has
the same view in the execution with input $\mathbf{x}$ and delivery pattern
$\varphi(\mathbf{v})$ as in the execution with input $\mathbf{x}'$ and delivery pattern
$\varphi(\mathbf{v})$, as required.

For the third item, note that if $e$ and $e'$ are in the same subgraph $\mathcal{Q}(\mathbf{x})$, then the delivery patterns
$\varphi(\mathbf{v})$ and $\varphi(\mathbf{w})$ differ in the delivery status of some messages from processes other than~$i$, and
therefore the local views of process $i$ corresponding to $e$ and $e'$ are different in this case. So we assume now that 
$e$ belongs to some cube $\mathcal{Q}(\mathbf{x})$, and $e'$ to another subgraph, $\mathcal{Q}(\mathbf{x}')$. If $x_i \ne x'_i]$, or if the delivery patterns in $e$ and
$e'$ are different, then the local views of process $i$ are different, and we are done.
So suppose now, for contradiction, that (1)~process $i$ has the same local input in
$e\in E(\mathcal{Q}(\mathbf{x}))$ and $e'\in \mathcal{Q}(\mathbf{x}')$, (2)~$\varphi(\mathbf{v})=\varphi(\mathbf{w})$,  (3)~$\varphi(\mathbf{v}')=\varphi(\mathbf{w}'))$, 
and (4)~the local views of process $i$ in $e$ and $e'$
are the same. Let $S$ be the set of processes, excluding $i$, whose
messages are successfully transmitted under the delivery
pattern $\varphi(\mathbf{v})$. 
It must be the case
that the messages sent by processes in $S$
have the same value in $e$ and $e'$. As a consequence, $e$ and $e'$ agree
on the coordinates in $S$, and all other coordinates (excluding $i$)
must be~$0$, which implies that $e=e'$, a contradiction.
\end{proof}
\subsection{Characterizing Consensus Solvability Using the Reduced Kripke Graph}

Thanks to Lemma~\ref{lem:reduced-kripke-graph-preserve-information}, the error probabilities of consensus algorithms can be measured by using properties of specific edge-coloring of the reduced Kripke graph~$\mathcal{Q}$. 

\begin{lemma}\label{lem:characterization-by-2-coloring}
    There is a one-to-one correspondence between: 
    \begin{itemize}
        \item the set of 1-round full-information consensus algorithms satisfying the validity condition in the stochastic broadcast model with $n$ processes, and 
        \item the set of 2-edge-colorings of the $n$-dimensional reduced Kripke graph $\mathcal{Q}$ satisfying that all edges of $\mathcal{Q}(0^n)$ are colored~0, and  all edges of $\mathcal{Q}(1^n)$ are colored~1.
    \end{itemize}
\end{lemma}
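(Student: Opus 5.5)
The plan is to construct explicit maps in both directions and check that they are mutually inverse, using Lemma~\ref{lem:reduced-kripke-graph-preserve-information} throughout. The basic identification is this: in a 1-round full-information algorithm, process $i$'s decision is a function of its local view. A local view is determined by its own input together with the set of other processes whose broadcast succeeded and the values they sent. We then set up a bijection between the local views of process $i$ and the edges of dimension~$i$ of $\mathcal{Q}$.

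Given a view of $i$ with input $b$, successful senders $S\subseteq[n]\smallsetminus\{i\}$, and their values $(x_j)_{j\in S}$, define the point $\mathbf{v}$ by $v_j=g(x_j,\top)$ for $j\in S$, $v_j=0$ for $j\notin S\cup\{i\}$, and $v_i=0$. Let $\mathbf{v}'$ agree with $\mathbf{v}$ except that $v'_i=g(b,\top)$. Then $\{\mathbf{v},\mathbf{v}'\}$ is a dimension-$i$ edge of $\mathcal{Q}$.

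This map has three properties.
\begin{itemize}
\item It is well defined and consistent. By the first two items of Lemma~\ref{lem:reduced-kripke-graph-preserve-information}, every execution $(\mathbf{x},\phi)$ in which $i$ has this view lies in some $\mathcal{Q}(\mathbf{x})$ at an endpoint of this edge. Conversely, all executions mapped to the endpoints of the edge, for any $\mathbf{x}$ with $e\in E(\mathcal{Q}(\mathbf{x}))$, give $i$ the same view.
\item It is injective, by the third item of the lemma.
\item It is surjective. Any dimension-$i$ edge has one endpoint with $v_i=0$ and the other with $v_i=\pm1$. We read off $b$ from the sign of $v'_i$, and $S$ together with its values from the nonzero coordinates $j\neq i$. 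Every such edge lies in some $\mathcal{Q}(\mathbf{x})$: choose $x_j$ to match the nonzero coordinates, $x_i=b$, and the remaining coordinates arbitrarily.
\end{itemize}
Since every edge of $\mathcal{Q}$ has exactly one dimension, the local views of all processes are in bijection with $E(\mathcal{Q})$.

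With this bijection, a decision function (an assignment of $\{0,1\}$ to every view of every process) is exactly a 2-edge-coloring of $\mathcal{Q}$. The map algorithm $\mapsto$ coloring sends an algorithm to its decision values placed on the corresponding edges. The inverse map reads a coloring as the rule "on view $w$ of process $i$, output the color of the edge of $w$." Both maps are bijective because the view--edge correspondence is.

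It remains to match validity with the boundary condition. Validity says that on input $0^n$ every process outputs $0$ under every delivery pattern. The views that occur under input $0^n$ are exactly those whose edges lie in $\mathcal{Q}(0^n)$: each edge of $\mathcal{Q}(0^n)=[-1,0]^n$ has dimension some $i$ and corresponds to a view of $i$ arising in an execution with input $0^n$, and conversely. So validity for input $0^n$ holds if and only if all edges of $\mathcal{Q}(0^n)$ are colored $0$, and symmetrically for $1^n$ and $[0,1]^n$.

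The main obstacle I expect is the well-definedness step. We must make sure that a view arising under different inputs $\mathbf{x}\neq\mathbf{x}'$ always yields the same geometric edge, so that a decision is never forced to give two colors to one edge, and that each edge receives a single color. This is precisely the content of the second and third items of Lemma~\ref{lem:reduced-kripke-graph-preserve-information}, so the care lies in invoking them correctly. The coordinate $v_i$ encodes only $i$'s own input, through the non-zero endpoint, and coordinates outside $S\cup\{i\}$ are $0$ independently of the unknown inputs.
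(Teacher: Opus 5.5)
Your proposal is correct and follows essentially the paper's route. Your pair $\mathbf{v},\mathbf{v}'$ is exactly the paper's map $h(i,\mathbf{s})$ from views of process $i$ to dimension-$i$ edges, backed by Lemma~\ref{lem:reduced-kripke-graph-preserve-information}, so that decision functions and edge-colorings coincide; you also check the validity/boundary-coloring equivalence, which the paper leaves implicit. One minor slip: injectivity of your view-to-edge map follows from the fact that the edge's coordinates recover the view, while item~3 of that lemma gives injectivity of the reverse edge-to-view map (equivalently, surjectivity of yours); since your read-off argument covers both directions, nothing is actually missing.
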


\begin{proof}
Given an edge $e=\{\mathbf{v},\mathbf{v}'\}$ of $\mathcal{Q}$, of dimension~$i\in [n]$, let $\mathbf{x}\in\{0,1\}^n$ be such that $e \in E(\mathcal{Q}(\mathbf{x}))$. By item~1 of Lemma~\ref{lem:reduced-kripke-graph-preserve-information}, this edge can be mapped to a state of process~$i$ after one round of communication, as the local view of~$i$ is identical in the two delivery patterns, $\varphi(\mathbf{v})$ and $\varphi(\mathbf{v}')$, with input~$\mathbf{x}$. This mapping is well defined by item~2 of Lemma~\ref{lem:reduced-kripke-graph-preserve-information}. Finally, this mapping is injective by item~3 of Lemma~\ref{lem:reduced-kripke-graph-preserve-information}.

Conversely, after one round of communication, the state of each process~$i$ can be modeled as a vector $\mathbf{s}=(s_1,\dots,s_n)$ with $s_i=x_i$ and, for every  $j\neq i$, $s_j\in \{0,1,\star\}$ where, $s_j=0$ (resp., $s_j=1$) means that process $i$ has received the input~0 (resp.,~1) from process~$j$, and $s_j=\star$ means that process~$i$ has received no messages from process~$j$. Let us then define, for every $j\in [n]$, 
\[
h(s_j)=\begin{cases}
    -1 &\text{if $s_j=0$}\\
    1 &\text{if $s_j=1$}\\
    0 & \text{if $s_j=\star$}.
\end{cases}
\]
The function $h$ enables us to explicitly define the desired one-to-one correspondence between the set of possible states of the $n$ processes, and the edges of~$\mathcal{Q}$. Specifically, the state $\mathbf{s}$ of process $i$ is mapped to the edge $e$ between the vertices 
\[
\big(h(s_1),\dots,h(s_{i-1}),0,h(s_{i+1}),\dots,h(s_n)\big) 
\;\text{and}\;
\big(h(s_1),\dots,h(s_{i-1}),h(x_i),h(s_{i+1}),\dots,h(s_n)\big) 
\]
of~$\mathcal{Q}$. Let us denote this edge $e=h(i,\mathbf{s})$. This one-to-one correspondence between the local states of the processes and the edges of~$\mathcal{Q}$, allows us to infer a correspondence between binary consensus algorithms and 2-edge-coloring of~$\mathcal{Q}$. Given $\alg$, if process~$i$ outputs $y_i\in\{0,1\}$ whenever in state $\mathbf{s}$ after one round, then the edge $h(i,\mathbf{s})$ is colored~$y_i$. Conversely, given a 2-edge-coloring $c:E(\mathcal{Q}\to\{0,1\}$, if the edge 
$e=\{\mathbf{v},\mathbf{v}'\}\in\mathcal{Q}(\mathbf{x})$ is colored $b=c(e)$, where $\mathbf{v}=(v_1,\dots,v_n)$, and $\mathbf{v}'=(v'_1,\dots,v'_n)$ with $|v_i-v'_i|=1$ and $v'_j=v_j$ for all $j\neq i$, then process $i$ in state $(h^{-1}(v_1),\dots,h^{-1}(v_{i-1}),x_i,h^{-1}(v_{i+1}),\dots,h^{-1}(v_n))$ outputs~$b$. 
\end{proof}
Thanks to Lemma~\ref{lem:characterization-by-2-coloring}, we can interchangeably refer to consensus algorithms or 2-edge-coloring of~$\mathcal{Q}$. Given input $\mathbf{x}\in\{0,1\}^n$, each of the $2^n$ delivery patterns $\phi\in\{\bot,\top\}^n$ appears as a single vertex $v_\phi$ of~$\mathcal{Q}(\mathbf{x})$. This vertex has a  probability $\Pr(\phi)=p^kq^{n-k}$ associated with it, where $k$ denotes the number of occurrences of $\top$ in~$\phi$. Moreover, if an algorithm colors the incident edges of $v_\phi$ with different colors, then the corresponding algorithm fails to solve consensus for input $x$ and delivery pattern~$\phi$. This motivates the following definition. 

\begin{definition}
Let $\col(\mathcal{Q})$  denote the set of all colorings $c:E(\mathcal{Q})\to \{0,1\}$ of the edges of $\mathcal{Q}$ with colors in $\{0,1\}$ such that, for  $b=0,1$, we have $c(e)=b$ for all edges $e\in E(Q_{b^n})$. 
Given $c\in\col(\mathcal{Q})$, and for every subgraph $\mathcal{Q}(\mathbf{x})$ of $\mathcal{Q}$, the \emph{cut of $\mathcal{Q}(\mathbf{x})$ induced by $c$},
 is the set of vertices of $\mathcal{Q}(\mathbf{x})$ incident to an edge of $\mathcal{Q}(\mathbf{x})$ colored~0, and to an edge of $\mathcal{Q}(\mathbf{x})$ colored~1. Formally:
$$
\cut_c(\mathcal{Q}(\mathbf{x}))\DEF\Set{v\in V(\mathcal{Q}(\mathbf{x}))
\mid\exists u,w\in V(\mathcal{Q}(\mathbf{x}))\mbox{ s.t.~} c(v,u)\ne c(v,w)}
$$
\end{definition}

Note that, for every $c\in\col(\mathcal{Q})$, and every $b\in \{0,1\}$, $\cut_c(\mathcal{Q}(b^n))=\varnothing$. Note also that a vertex $\mathbf{v}\in\{-1,0,1\}^n$ of $\mathcal{Q}$ may be in $\cut_c(\mathcal{Q}(\mathbf{x}))$ for some $\mathbf{x}\in\{0,1\}^n$, but not in $\cut_c(\mathcal{Q}(\mathbf{x}'))$ for some other $\mathbf{x}'\in\{0,1\}^n$. We can now formulate the main result of this section, which is a graph-theoretical characterization of the  algorithms minimizing the error probability of solving consensus. 

\begin{theorem}
\label{thm:opt-error-proba-genreal}
    For every $p\in [0,1]$, the optimal error probability of any consensus algorithm in an $n$-process system with delivering probability~$p$ is  
 \(\displaystyle
\min_{c\in\col(\mathcal{Q})} \; \max_{\mathbf{x}\in\{0,1\}^n} \; \sum_{\mathbf{v}\in\cut_c(\mathcal{Q}(\mathbf{x}))}\Pr[\mathbf{v}].
\)
\end{theorem}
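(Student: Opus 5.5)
The argument combines Lemma~\ref{lem:characterization-by-2-coloring} with an exact expression for the error probability of a single algorithm on a single input. It has three steps.

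\textbf{Step 1: reduce to full-information algorithms.} Any deterministic 1-round algorithm can be simulated by a full-information one. In the simulation, each process broadcasts its input, and then applies the original algorithm's decision function to what it has seen. Under a fixed delivery pattern, the messages received in the original algorithm are functions of the senders' inputs. So the full-information view determines the original view, and the decisions coincide execution by execution. The two algorithms therefore have identical error probabilities on every input, and validity is preserved. Consequently the optimum over all algorithms equals the optimum over full-information algorithms satisfying validity. By Lemma~\ref{lem:characterization-by-2-coloring}, the latter are in bijection with $\col(\mathcal{Q})$: validity corresponds exactly to requiring that the edges of $\mathcal{Q}(0^n)$ are colored $0$ and those of $\mathcal{Q}(1^n)$ are colored $1$.

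\textbf{Step 2: fix an algorithm $\alg$ with coloring $c$ and an input $\mathbf{x}$, and express the error exactly.} The probability space is the set of delivery patterns $\phi\in\{\bot,\top\}^n$, where $\phi$ has probability $p^kq^{n-k}$ and $k$ is the number of $\top$ entries. Each $\phi$ corresponds to exactly one vertex $\mathbf{v}=g(\mathbf{x},\phi)$ of $\mathcal{Q}(\mathbf{x})$, and by definition $\Pr[\mathbf{v}]=\Pr[\phi]$. The vertex $\mathbf{v}$ has exactly $n$ incident edges in $\mathcal{Q}(\mathbf{x})$, one in each dimension $i$. By item~1 of Lemma~\ref{lem:reduced-kripke-graph-preserve-information} and the construction in Lemma~\ref{lem:characterization-by-2-coloring}, the dimension-$i$ edge at $\mathbf{v}$ is the edge representing process $i$'s view in execution $(\mathbf{x},\phi)$. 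Its color is therefore $\Out_i$. It follows that disagreement occurs in $(\mathbf{x},\phi)$ if and only if two incident edges of $\mathbf{v}$ in $\mathcal{Q}(\mathbf{x})$ receive different colors, that is, if and only if $\mathbf{v}\in\cut_c(\mathcal{Q}(\mathbf{x}))$. Summing over the disjoint events $\phi$ gives
\[
\Pr[\alg\text{ fails on }\mathbf{x}]=\sum_{\mathbf{v}\in\cut_c(\mathcal{Q}(\mathbf{x}))}\Pr[\mathbf{v}].
\]

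\textbf{Step 3: conclude.} The error probability of $\alg$ is the maximum of the quantity in Step~2 over $\mathbf{x}$. Minimizing over algorithms, which by Step~1 means minimizing over $c\in\col(\mathcal{Q})$, yields the stated formula. The minimum is attained because $\col(\mathcal{Q})$ is finite.

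The main obstacle is the bookkeeping in Step~2. One must verify that the edge used for process $i$'s decision is the dimension-$i$ edge of $\mathcal{Q}(\mathbf{x})$ incident to $g(\mathbf{x},\phi)$, and not an edge belonging to some other cube that happens to share the vertex. A vertex of $\mathcal{Q}$ may lie in several cubes, and its incident edges from other cubes must be excluded. This is exactly why the cut is defined relative to $\mathcal{Q}(\mathbf{x})$ only, and it requires checking that the map $h$ from Lemma~\ref{lem:characterization-by-2-coloring} sends the state of $i$ in $(\mathbf{x},\phi)$ to an edge lying inside $\mathcal{Q}(\mathbf{x})$.
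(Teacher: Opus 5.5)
Your proposal is correct and follows the paper's argument. Both identify each execution $(\mathbf{x},\phi)$ with the vertex $g(\mathbf{x},\phi)$ of $\mathcal{Q}(\mathbf{x})$, note that disagreement occurs exactly when its incident edges within $\mathcal{Q}(\mathbf{x})$ are bichromatic, and then take the maximum over inputs and the minimum over colorings via Lemma~\ref{lem:characterization-by-2-coloring}; your explicit reduction to full-information algorithms and your check that $h(i,\mathbf{s})$ lies in $\mathcal{Q}(\mathbf{x})$ are details the paper leaves implicit, and you handle them correctly.
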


\begin{proof}
The specification of consensus is that all processes output
the same value. Each execution for an input $x$ 
corresponds to a delivery pattern $\phi$, and thus to the vertex $\mathbf{v}=g(\mathbf{x},\phi)$ of $\mathcal{Q}(\mathbf{x})$, which is associated to the  probability $\Pr[\mathbf{v}]=\Pr[\varphi(\mathbf{v})]=\Pr(\phi]$ determined by
the delivery pattern~$\phi$. The algorithm is successful for the pair $(\mathbf{x},\phi)$ if and only if all edges incident to $\mathbf{v}$ in $\mathcal{Q}(\mathbf{x})$ are colored using the same color.
\end{proof}

Note that, for every $\mathbf{x}\in\{0,1\}^n\smallsetminus \{0^n,1^n\}$, the center vertex $\mathbf{v}=(0,\dots,0)$ of $\mathcal{Q}$ belongs to $\cut_c(\mathcal{Q}(\mathbf{x}))$ for all $c\in\col(\mathcal{Q})$. Indeed, for every such~$\mathbf{x}$, the cube $\mathcal{Q}(\mathbf{x}))$ contains an edge from $\mathcal{Q}(0^n))$, and an edge from~$\mathcal{Q}(1^n))$. So, in particular, no consensus algorithm can achieve an error probability smaller than  $q^n=(1-p)^n$, for every~$p\in[0,1]$.

%%%%%%%%%%%%%%%%%%%%%%%%%%%%%%%%%
\subsection{1-Round  Algorithms Revisited}
%%%%%%%%%%%%%%%%%%%%%%%%%%%%%%%%%

It is interesting to apply the ``edge coloring perspective'' of Theorem~\ref{thm:opt-error-proba-genreal} to the consensus algorithms presented in Section~\ref{sec:1-round-algo-operational}.
Recall that, in Algorithm $\pref1$, processes output $1$ unless all the values they know
of are $0$. This means that all edges are colored $1$ except for the edges of $Q_{0^n}$.
Under this coloring, all vertices of the cube $Q_{0^n}$, with the exception of the vertex $\top^n$,
are cut vertices in the other cubes they belong to (the $\top^n$ vertices are ``corner vertices''
that belong to a single cube). As can be seen in \figref{fig-pref-courteous-hyperplane}, under this algorithm, 
the cut in some cubes is a full face, and in others it is an edge.

\begin{figure}[ht]
\centering
\includegraphics%[scale=.2]
[width=4.5cm,height=4cm]{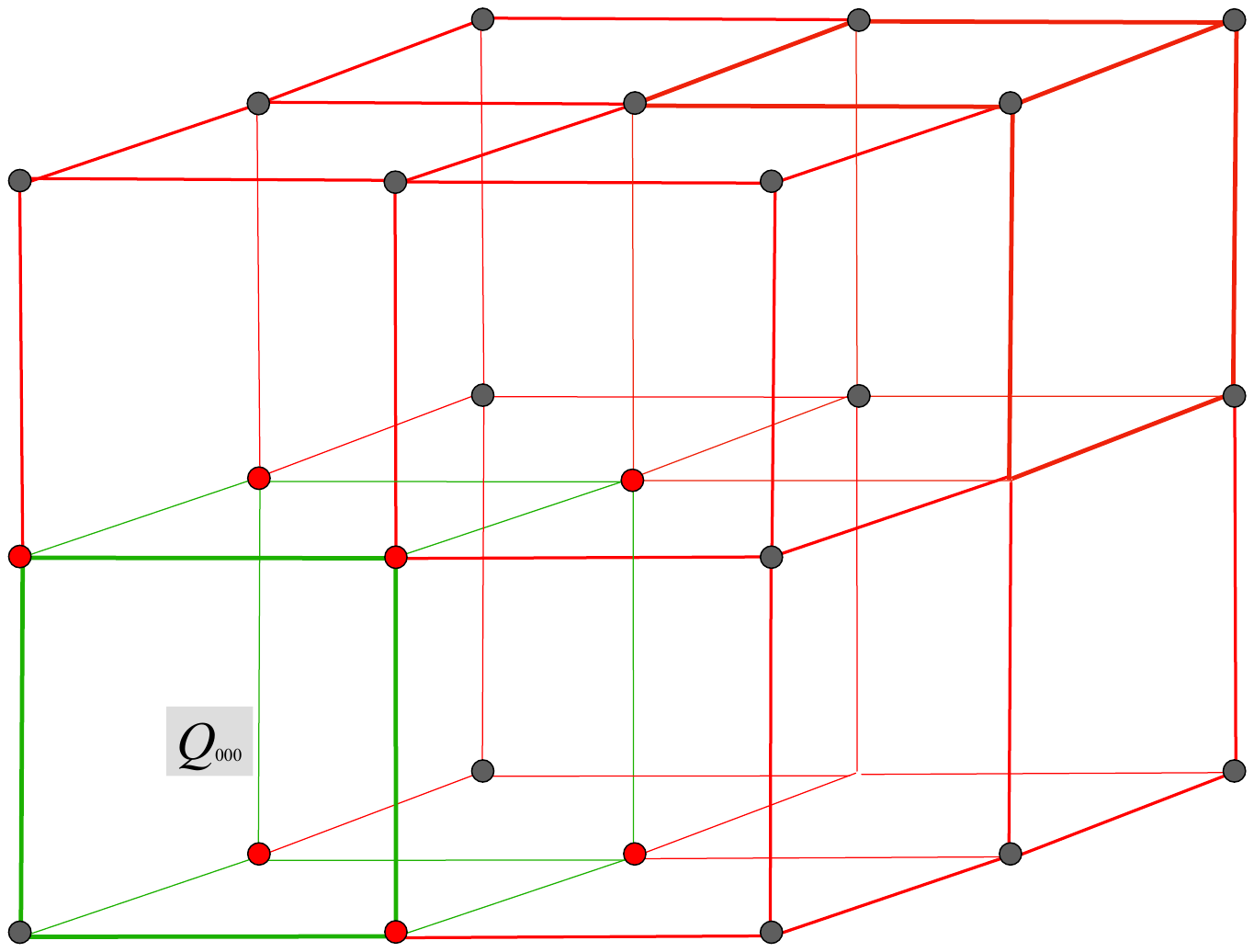}
\includegraphics%[scale=.2]
[width=4.5cm,height=4cm]{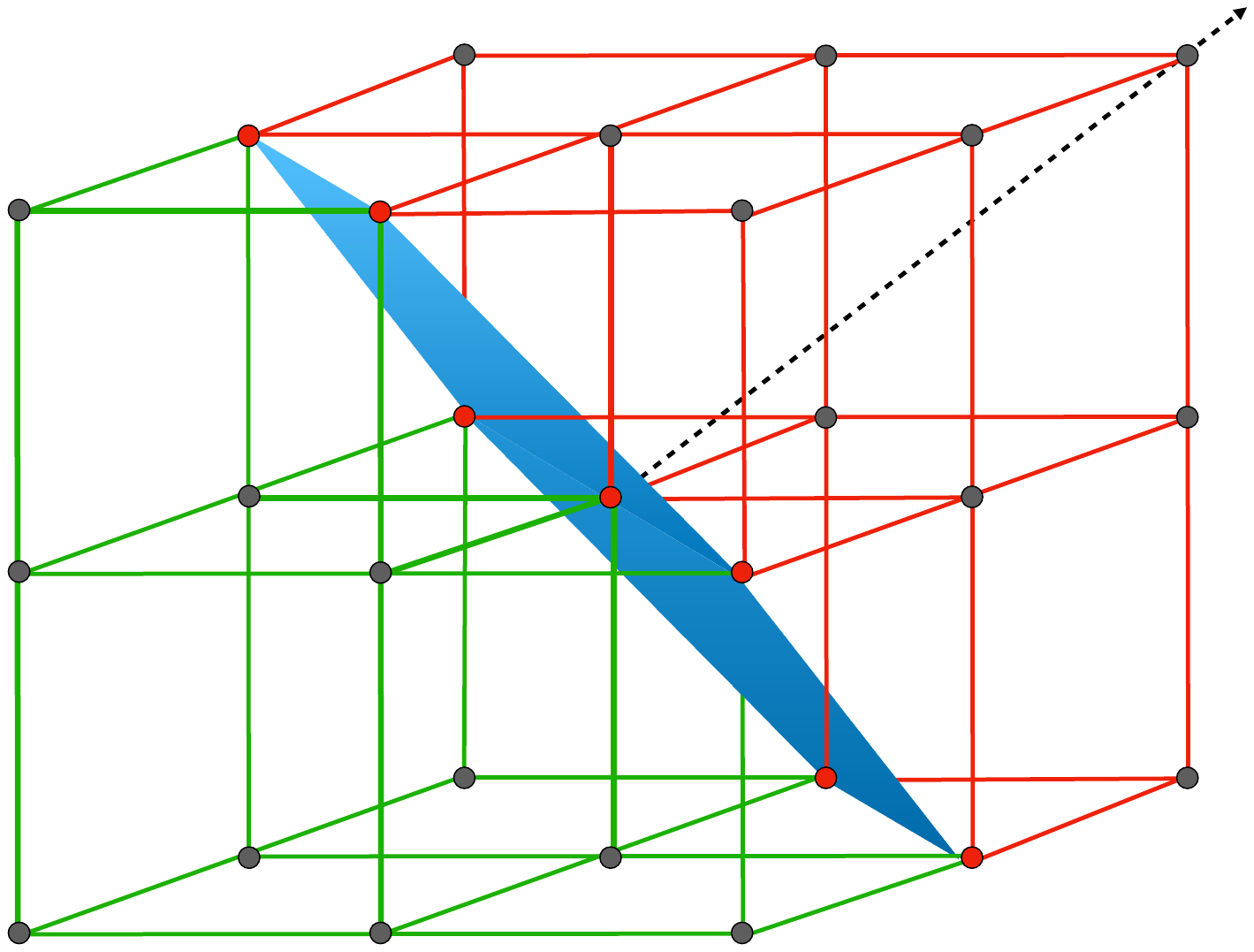}
\caption{(Left) Algorithm $\pref1$ colors all edges 1, except for the edges of $Q_{0^n}$. The figure
shows coloring for $n=3$. (Right) Algorithm $\cort$ coloring is determined by the hyperplane perpendicular
to the vector $1^n$. This means that the vertices in the cutset are exactly the vertices whose sum of 
coordinates is $0$. }
\label{fig-pref-courteous-hyperplane}
\end{figure}

Since $\court$ 
fails only when the number of successful transmissions of $1$ equals the number of successful transmissions of 
$0$, the cut vertices are exactly the vertices in which the sum of coordinates equals $0$.
Therefore, the edge colors are fixed in each of the two half-spaces defined by the hyperplane
perpendicular to the vector $1^n$. See \figref{fig-pref-courteous-hyperplane} for a
representation of the $n=3$ case.

%%%%%%%%%%%%%%%%%%%%%%%%%%%%%%%
\subsection{The Lower Bound}
%%%%%%%%%%%%%%%%%%%%%%%%%%%%%%%

Analyzing the properties of the Reduced Kripke graph, we show that
$\court$ and $\pref0$ achieve the optimal error probability (in
different ranges of $p$). More precisely, we prove the following.

\begin{theorem}\label{thm:lower-bound}
  Every 1-round consensus algorithm for a 3-process stochastic system
  satisfies
  \[
    \Pr[\mbox{error}]\geq
    \begin{cases}
      2p^2q+q^3 & \mbox{if $p\leq \nicefrac23$}\\
      q         & \mbox{otherwise.}
    \end{cases}
  \]
\end{theorem}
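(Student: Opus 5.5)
The plan is to work entirely in the reduced Kripke graph~$\mathcal{Q}$ for $n=3$ and use Theorem~\ref{thm:opt-error-proba-genreal}. It suffices to show that for every $c\in\col(\mathcal{Q})$ some mixed cube $\mathcal{Q}(\mathbf{x})$, with $\mathbf{x}\notin\{000,111\}$, has cut weight at least $\min\{2p^2q+q^3,\;q\}$. Indeed,
\[
q-(q^3+2p^2q)=q\bigl(1-q^2-2p^2\bigr)=qp(2-3p),
\]
which is nonnegative exactly when $p\le \nicefrac23$, so the minimum has the piecewise form in the statement.

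\textbf{Step 1 (structure of a mixed cube).} Inside each mixed cube the cut vertices form a vertex separator. If a vertex of $\mathcal{Q}(\mathbf{x})$ is not cut, all its incident cube edges share one color. Hence any path in $\mathcal{Q}(\mathbf{x})$ from a $0$-colored edge to a $1$-colored edge must pass through a cut vertex.

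Take, for example, $\mathbf{x}=011$, with cube $[-1,0]\times[0,1]\times[0,1]$. The dimension-1 edge $\{0,-e_1\}$ lies in $\mathcal{Q}(000)$ and is forced to color~$0$. The face $\{0\}\times[0,1]^2$ lies in $\mathcal{Q}(111)$ and is forced to color~$1$. The situation for two-zero cubes such as $001$ is symmetric: a whole face is forced to~$0$ and one edge is forced to~$1$.

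The origin (weight $q^3$) is always cut. The remaining weights are $pq^2$ at $\pm e_i$, then $p^2q$, then $p^3$ at the far corner. Every candidate cut weight therefore has the form $q^3+a\,pq^2+b\,p^2q+c\,p^3$, and it is enough to exhibit a cube that achieves one of the two patterns $(a,b)\ge(0,2)$ or $(a,b)\ge(2,1)$, since $q=q^3+2pq^2+p^2q$. I would also check whether dominated combinations such as $(a,b)=(1,2)$ or $(3,0)$ need separate treatment. Because $pq^2$ dominates $p^2q$ when $p\le\nicefrac12$, $(3,0)$ is fine for small~$p$ but must be avoided, or boosted, for large~$p$.

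\textbf{Step 2 (the free edges).} The only edges not fixed by validity are the \emph{middle} edges. These are edges from $\pm e_i$ in a dimension $j\neq i$, together with edges between weight-$p^2q$ vertices and corners. Each middle edge incident to $\pm e_i$ lies in exactly two mixed cubes, the ones agreeing on the two coordinates it determines.

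I will case-split on whether the vertices $\pm e_i$ are cut, cube by cube. Vertex $-e_1$ is uncut in $\mathcal{Q}(011)$ iff both its dimension-2 and dimension-3 edges there are colored~$0$. Those same edges also live in $\mathcal{Q}(010)$ and $\mathcal{Q}(001)$, where they then force the separator to pass through $p^2q$ vertices such as $(-1,1,0)$ and $(-1,0,1)$.

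Using the symmetries of the problem (permuting the processes, and swapping $0\leftrightarrow1$ together with the negation $\mathbf{v}\mapsto-\mathbf{v}$), I can reduce to a few representative configurations:
\begin{itemize}
\item[(i)] In some cube, two of its three weight-$pq^2$ vertices are cut. Then, because the forced monochromatic face and edge still have to be separated, one further $p^2q$ vertex is also cut, giving weight at least~$q$.
\item[(ii)] Some $pq^2$ vertex is uncut. Its forced colors propagate into a neighboring cube, where they force two $p^2q$ vertices into the cut, giving weight at least $q^3+2p^2q$.
\end{itemize}

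\textbf{Main obstacle.} The hard step is making this case analysis exhaustive and tight. A single cube may cut only one $pq^2$ vertex and one $p^2q$ vertex, and then the loss must be recovered in another cube by tracing how the shared middle edges propagate. I would first try to organize the argument around the origin's six neighbors $\pm e_i$ and a counting or pigeonhole step over the six mixed cubes: each $\pm e_i$ appears in three mixed cubes, so an uncut vertex in one of them transfers constraints to the other two. If the direct case analysis becomes unwieldy, the fallback is to verify the finitely many colorings of the middle edges up to symmetry. These are the only degrees of freedom, and each cube's minimal separator can be computed explicitly.
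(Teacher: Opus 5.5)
\textbf{There is a genuine gap.} Your setup is the same as the paper's: the reduced Kripke graph, cut vertices of a mixed cube separating its forced face from its forced edge, weights $q^3+a\,pq^2+b\,p^2q+c\,p^3$, and propagation through edges shared by two mixed cubes. But the step that actually proves the theorem is never carried out, and both case claims you state are false as written.

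Claim (ii) fails for $\pref1$. Its cut in $Q_{011}$ is $\{\mathbf 0,-e_1\}$, so $e_2$ is an uncut $pq^2$ vertex. Yet no mixed cube has two cut $p^2q$ vertices: a cube with one $0$ has cut $\{\mathbf 0,-e_i\}$, and a cube with two $0$s has as cut a face of $Q_{000}$, which contains a single $p^2q$ vertex.

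Claim (i) gives the wrong mechanism. Color the forced face of $Q_{011}$ and the edge $\{(0,1,1),(-1,1,1)\}$ by $1$, and all other edges of $Q_{011}$ by $0$. This yields the cut $\{\mathbf 0,e_2,e_3,(-1,1,1)\}$, which has no $p^2q$ vertex and weight $q^3+2pq^2+p^3<q$ for $p<\nicefrac12$. The case survives because $q^3+2pq^2+p^3\ge 2p^2q+q^3$. However, it shows that aiming for the two patterns $(a,b)\ge(0,2)$ or $(a,b)\ge(2,1)$ is too narrow. Realizable cuts like this one, or $\{\mathbf 0,-e_1,(-1,1,0),(-1,1,1)\}$, meet the bound without fitting either pattern. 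You have to compare with $\min\{2p^2q+q^3,q\}$ directly.

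What is missing is the paper's two-stage exhaustive argument.

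\textbf{Stage 1 (one cube).} Inside a single mixed cube, say $Q_{011}$, apply Lemma~\ref{lem:each-path-has-a-cut} to paths from the forced face to the forced edge $\{\mathbf 0,-e_1\}$. This eliminates every weight pattern below the bound except two cut shapes:
\begin{itemize}
\item $\{\mathbf 0,-e_1\}$, of weight $q^2$ (the $\pref1$ shape);
\item $\{\mathbf 0,-e_1,\mathbf a\}$ with $\mathbf a\in\{(-1,1,0),(-1,0,1)\}$, of weight $q^3+pq^2+p^2q$, which matters only for $p>\nicefrac12$.
\end{itemize}

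\textbf{Stage 2 (several cubes).} Take a mixed cube of maximum weight and assume it has one of these cuts.
\begin{itemize}
\item For $\{\mathbf 0,-e_1\}$, every other edge of the cube is colored $1$. So in the neighbouring mixed cube $Q_{010}$, both $\mathbf 0$ and $-e_1$ are cut, and the path $(-e_3,(0,1,-1),e_2)$ still needs a cut vertex. This makes $Q_{010}$ strictly heavier.
\item For $\mathbf a=(-1,1,0)$ (the other choice is symmetric under exchanging processes $2$ and $3$), the edge $\{-e_1,(-1,1,0)\}$ must be colored $0$. This makes $Q_{010}$ strictly heavier unless its cut is exactly $\{\mathbf 0,e_2,(-1,1,0)\}$. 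In that case $Q_{110}$ is forced to weight at least $q^3+2pq^2+p^3>q^3+pq^2+p^2q$.
\end{itemize}
Either way, maximality is contradicted.

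Your brute-force fallback would also be a valid route: only the $12$ free edges lying in two mixed cubes couple the cubes, and the $18$ free corner edges each lie in a single cube and can be optimized locally. As it stands, though, it is a plan rather than a proof.
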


\begin{table}\centering
\begin{tabular}{|l|c|l|}\hline
 polynomials of the form& count& argument\\
 $a\cdot p^3+b\cdot p^2q+c\cdot pq^2+q^3$&&\\
\hline\hline
$a+c>0$ and $b>1$& 21& algebraic: each polynomial is larger than $2p^2+q^3$ \\
$a=1,b=1,c>1$& & for $p\in[0,2/3]$, and larger than $q$ for $p\in[2/3,1]$\\
$a=0,~b=1,~c=3$ & & \\
$a=1,~b=0,~c>1$ & & \\
$a=1,~b=1,~c=1$             & &  \\
\hline
$a=0,~b=0,~c=0$&7& topological: these polynomials cannot correspond to\\
$a=1,~b=0,~c=0$&& the error probability of a protocol\\
$a=0,~b=1,~c=0$&&\\
$a=1,~b=1,~c=0$&&\\
$a=1,~b=0,~c=0$&& \\
$a=0,~b=0,~c>1$&& \\
\hline
$a=0,~b=0,~c=1$&2& topological + indistinguishability: considering an adjacent  \\
$a=1,~b=0,~c=1$&&cube, corresponding to a different input assignment\\
\hline
\end{tabular}
    \caption{A classification of the arguments used in the proof. We rule  out 
    30 possible polynomials as representing the error probability of
    protocols better than the polynomials corresponding to $\court$ and $\pref0$. 
    }
    \label{tab:proof}
\end{table}

\theoremref{thm:lower-bound} intuitively, says that $\cort$ is optimal for $p\leq \nicefrac23$,
while algorithms such as $\pref_0$ and $\pref_1$ are all optimal for
$p\geq \nicefrac23$. The proof heavily uses the reduced Kripke
graph~$\mathcal{Q}$ (see
Figure~\ref{fig:reduced-Kripke-graph-mainText}). Recall that $\cort$
has error probability $2p^2q+q^3$ whereas $\pref_0$ and $\pref_1$ have
error probability~$q$.  More specifically, the
idea of the 
proof is as follows. 
We consider all possible error
polynomials, and we show that the above two are the smallest ones (in
their respective range) among all polynomials that correspond to protocols. Some of the polynomials  are ruled out  for
algebraic reasons: they are larger
than the two polynomials above in the relevant value ranges of $p$
(see
\tableref{tab:proof}). To disqualify the other polynomials, we
interpret them as vertices in some non-trivial cube of the Reduced Kripke
Graph (a cube that corresponds to a non-trivial input assignment),
and then show that these vertices cannot 
constitute a vertex cut in that cube, relying on the theory 
developed in this section.
We are then left with only two polynomials, which
require another argument, because
they correspond to protocols that are actually better than
$\court$ in the given cube. 
However, we can show that these protocols
are worse than
$\court$ for other inputs: the coloring induced by these protocols
implies a worse error probability in a certain adjacent cube (due to
the face it shares with the cube we consider). More intuitively, the reason is that while there
are protocols that are better than $\court$ for each input, they must
be worse than $\court$ on other inputs ($\court$ has the
same error probability for all non-trivial input assignments). We note that
appealing to an adjacent cube is using implicitly an indistinguishability
argument: saying that common edges must have the same color in both cubes
is tantamount to saying that a process must  make the same output on the same view under both input
assignments.

\begin{proof}[Proof of Theorem~\ref{thm:lower-bound}.] By \lemmaref{lem:characterization-by-2-coloring},
any consensus algorithm produces an edge-coloring ${c:E(\mathcal{Q})\to \{0,1\}}$  of the reduced Kripke graph~$\mathcal{Q}$, and conversely. To respect validity, the coloring must satisfy that, for every $b\in \{0,1\}$, and for every edge $e\in E(Q_{bbb})$, $c(e)=b$. That is, for input $000$ (resp., $111$) the output of every process must be~0 (resp.,~1). Given a coloring ${c:E(\mathcal{Q})\to \{0,1\}}$, and given an input $(x,y,z)\in\{0,1\}^3$, the probability of error of the algorithm defined by~$c$ for input $(x,y,z)$, can be merely computed by inspecting the vertices of the cube $Q_{xyz}$ of~$\mathcal{Q}$. Each vertex of $Q_{xyz}$ is a vertex $v_\phi$ for the failure pattern $\phi\in\{\bot,\top\}^3$. The outputs of the three processes for $\phi$ is the multiset of colors of the edges of $Q_{xyz}$ incident on~$v_\phi$. The coloring does not err for input  $(x,y,z)$ and failure pattern~$\phi$ if all edges incident on $v_\phi$ have the same color. Moreover, each failure pattern~$\phi$ has a probability of occurring, which depends on~$p$. Therefore, for input  $(x,y,z)$, the coloring errs with probability $\sum \Pr[v_\phi]$ where the sum is taken over all vertices $v_\phi$ that are bichromatic, i.e., for which not all incident edges are of the same color.

\begin{figure}[htb]
\centering
\includegraphics[scale=.8]{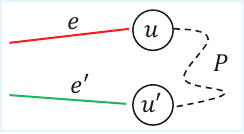}
\caption{Illustration for \lemmaref{lem:each-path-has-a-cut}.
$e$ and $e'$ may share their left enpoint, but $u\ne u'$.}
\label{fig:lem6-2}
\end{figure}

The following simple lemma plays an important role in our analysis. 

\begin{lemma}\label{lem:each-path-has-a-cut}
Let $c:E(\mathcal{Q})\to \{0,1\}$ be an edge-coloring of $\mathcal{Q}$, and let $G$ be a subgraph of~$\mathcal{Q}$. Let $e=uv$ and $e'=u'v'$ be two edges of $G$ with $u\neq u'$ and $c(e)\neq c(e')$. Let $P$ be a path in $G$ with two extremities  $u$ and $u'$, not containing $v$ nor~$v'$. At least one vertex of $P$ belongs to $\cut_c(G)$. 
\end{lemma}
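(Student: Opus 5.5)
The plan is a discrete intermediate-value argument along the walk $e, P, e'$. The cut of $G$ is defined, exactly as for $\mathcal{Q}(\mathbf{x})$, as the set of vertices of $G$ incident in $G$ to edges of both colors.

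Write $P=(u=w_0,w_1,\dots,w_k=u')$ with $k\ge 1$, since $u\neq u'$. Consider the sequence of edges
\[
e=vw_0,\; w_0w_1,\; w_1w_2,\;\dots,\; w_{k-1}w_k,\; w_kv'=e'.
\]
Consecutive edges in this sequence share an endpoint that lies on $P$: the edge $e$ shares $w_0=u$ with $w_0w_1$, each $w_{j-1}w_j$ shares $w_j$ with $w_jw_{j+1}$, and $w_{k-1}w_k$ shares $w_k=u'$ with $e'$. All these edges belong to $G$, because $e,e'\in E(G)$ and $P$ is a path in $G$.

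Since $c(e)\neq c(e')$, the colors cannot be constant along the sequence. Hence there is a first index where two consecutive edges $f,f'$ have $c(f)\neq c(f')$. Their common endpoint $w_j$ is a vertex of $P$ and is incident in $G$ to two edges of different colors, so $w_j\in\cut_c(G)$ by definition.

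The only point needing care is that every shared endpoint used is genuinely a vertex of $P$ and not $v$ or $v'$. This holds by construction: the shared endpoints are exactly $w_0,\dots,w_k$. The hypothesis that $P$ avoids $v$ and $v'$ guarantees that $e$ and $e'$ are attached to $P$ only through $u$ and $u'$, as in \figref{fig:lem6-2}. It also ensures that $e$ and $e'$ are distinct from the path edges incident to $u$ and $u'$, so no degenerate "edge adjacent to itself" situation arises. No further obstacle is expected; the lemma is a one-line pigeonhole along the walk.
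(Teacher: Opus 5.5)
Your argument is correct and is essentially the paper's proof. The paper argues by contradiction that if no vertex of $P$ is in $\cut_c(G)$, then all edges of $P$ share one color $b$, so whichever of $u,u'$ is attached to the edge among $e,e'$ of color $\neq b$ lies in the cut; this is your first-color-change argument along the walk $e,P,e'$, stated contrapositively. Note also that the hypothesis that $P$ avoids $v$ and $v'$ is not actually used: the shared endpoints in your edge sequence are $w_0,\dots,w_k$ regardless.
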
 

\begin{proof}
See \figref{fig:lem6-2}.
If no vertices of $P$ are in $\cut_c(G)$, then all its edges are of the same color $b\in\{0,1\}$. Since $c(e)\neq c(e')$, let us assume, w.l.o.g., that $b=c(e)$. It follows that $u'$ is incident to an edge of~$P$, with color $b=c(e)$, and is also incident to the edge~$e'$, with color $c(e')\neq c(e)=b$. As a consequence, $u'\in \cut_c(G)$, in contradiction with the assumption that  no vertices of $P$ are in~$\cut_c(G)$. 
\end{proof}

Another simple lemma plays an important role in our analysis, which can be checked by a mere inspection of  Figure~\ref{fig:reduced-Kripke-graph-mainText}. 

\begin{lemma}\label{lem:one-face-and-one-edge}
    For every $(x,y,z)\in\{0,1\}^3\smallsetminus \{(0,0,0),(1,1,1)\}$, there exists a unique $b\in\{0,1\}$ such that the cube $Q_{xyz}$ of $\mathcal{Q}$ shares exactly one 2-dimensional face $F$ with $Q_{bbb}$, and exactly one edge $e$ with $Q_{\bar b \bar b \bar b}$.  
\end{lemma}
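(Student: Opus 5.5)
The proof is a direct computation in the coordinates of the embedding $g$. By construction, $Q_{xyz}=[0,s_1]\times[0,s_2]\times[0,s_3]$ with $s_i=g(x_i,\top)\in\{-1,1\}$, where $s_i=-1$ if $x_i=0$ and $s_i=1$ if $x_i=1$. Likewise $Q_{000}=[-1,0]^3$ and $Q_{111}=[0,1]^3$. Since every cube is a product of unit intervals anchored at the origin, the intersection of two cubes is a product of per-coordinate intersections.

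For each coordinate $i$, the interval $[0,s_i]\cap[0,-1]$ equals $[-1,0]$ when $s_i=-1$ and equals $\{0\}$ when $s_i=1$. Symmetrically, $[0,s_i]\cap[0,1]$ is $[0,1]$ when $s_i=1$ and $\{0\}$ when $s_i=-1$. Therefore:
\begin{itemize}
\item $Q_{xyz}\cap Q_{000}$ is a cube whose dimension equals the number of zeros in $(x,y,z)$.
\item $Q_{xyz}\cap Q_{111}$ is a cube whose dimension equals the number of ones in $(x,y,z)$.
\end{itemize}
Both intersections are faces of $Q_{xyz}$ containing the origin, and their vertices and edges are exactly the vertices and edges of $\mathcal{Q}$ lying in them. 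This holds because edges of $\mathcal{Q}$ change a single coordinate by $1$, so an edge lies in both cubes precisely when both of its endpoints do.

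For $(x,y,z)\notin\{000,111\}$, the numbers of zeros and ones are $\{1,2\}$ in some order. Let $b$ be the majority bit of $(x,y,z)$, which occurs exactly twice. Then $Q_{xyz}\cap Q_{bbb}$ is a $2$-dimensional square face $F$, spanned by the two coordinates $i$ with $x_i=b$. Meanwhile $Q_{xyz}\cap Q_{\bar b\bar b\bar b}$ is a single edge $e$ from the origin along the coordinate where $x_i=\bar b$.

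Uniqueness of $b$ is immediate. With the roles swapped, $Q_{xyz}$ would share only an edge with $Q_{bbb}$ and a $2$-face with $Q_{\bar b\bar b\bar b}$, so it would share no $2$-dimensional face with $Q_{\bar b\bar b\bar b}$. Finally, "exactly one face" and "exactly one edge" hold because each intersection is a single set, a square in one case and a segment in the other.

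There is essentially no obstacle here. The only care needed is to state that edges of $\mathcal{Q}$ shared by the two cubes are exactly the lattice edges inside their geometric intersection. This follows from the edge description of $\mathcal{Q}$ given in the properties subsection above.
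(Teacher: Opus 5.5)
Your proof is correct. The paper gives no real argument for this lemma: it only says the statement ``can be checked by a mere inspection'' of Figure~\ref{fig:reduced-Kripke-graph-mainText}. In other words, it reads off from the picture that each non-constant cube meets the corner cube of its majority bit in a square, and the other corner cube in a single edge.

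Your route replaces the picture by the coordinate description $Q_{\mathbf{x}}=\prod_i[0,g(x_i,\top)]$ and computes the intersections coordinate by coordinate. This has two advantages.
\begin{itemize}
\item It is an actual proof. It covers two points the figure leaves implicit: that the edges of $\mathcal{Q}$ shared by two cubes are exactly the lattice edges inside their geometric intersection, and that $b$ is unique.
\item It immediately gives the $n$-dimensional version. $Q_{\mathbf{x}}\cap Q_{1^n}$ is a sub-cube whose dimension is the number of ones in $\mathbf{x}$, and $Q_{\mathbf{x}}\cap Q_{0^n}$ is one whose dimension is the number of zeros. This is precisely the fact the paper asserts without justification at the start of the proof of Theorem~\ref{thm:lower-bound-n-arbitrary-p-small}.
\end{itemize}

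The figure-based approach only buys brevity, and the visual labeling that the subsequent case analysis relies on.
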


Note that all the faces and edges that are shared between a cube $Q_{xyz}$ and the cubes $Q_{000}$ and $Q_{111}$ include the center vertex $v_\phi$ of~$\mathcal{Q}$ corresponding to the failure pattern $\phi=\bot\bot\bot$. Using the notations in the statement of Lemma~\ref{lem:one-face-and-one-edge}, we have $V(e)\cap V(F)=v_\phi$. 

As a quick warmup, let us start by showing that, for every $p\leq \nicefrac13$, $\cort$ is optimal.%
\footnote{
This does not cover the range $p\in (\frac13,\frac23]$ where Theorem~\ref{thm:lower-bound} claims that $\cort$ is also optimal, but it gives a good intuition of the proof, and provides us with arguments that can be reused later in the proof. 
}

\begin{figure}[htb]
\centering
\includegraphics[scale=.3]{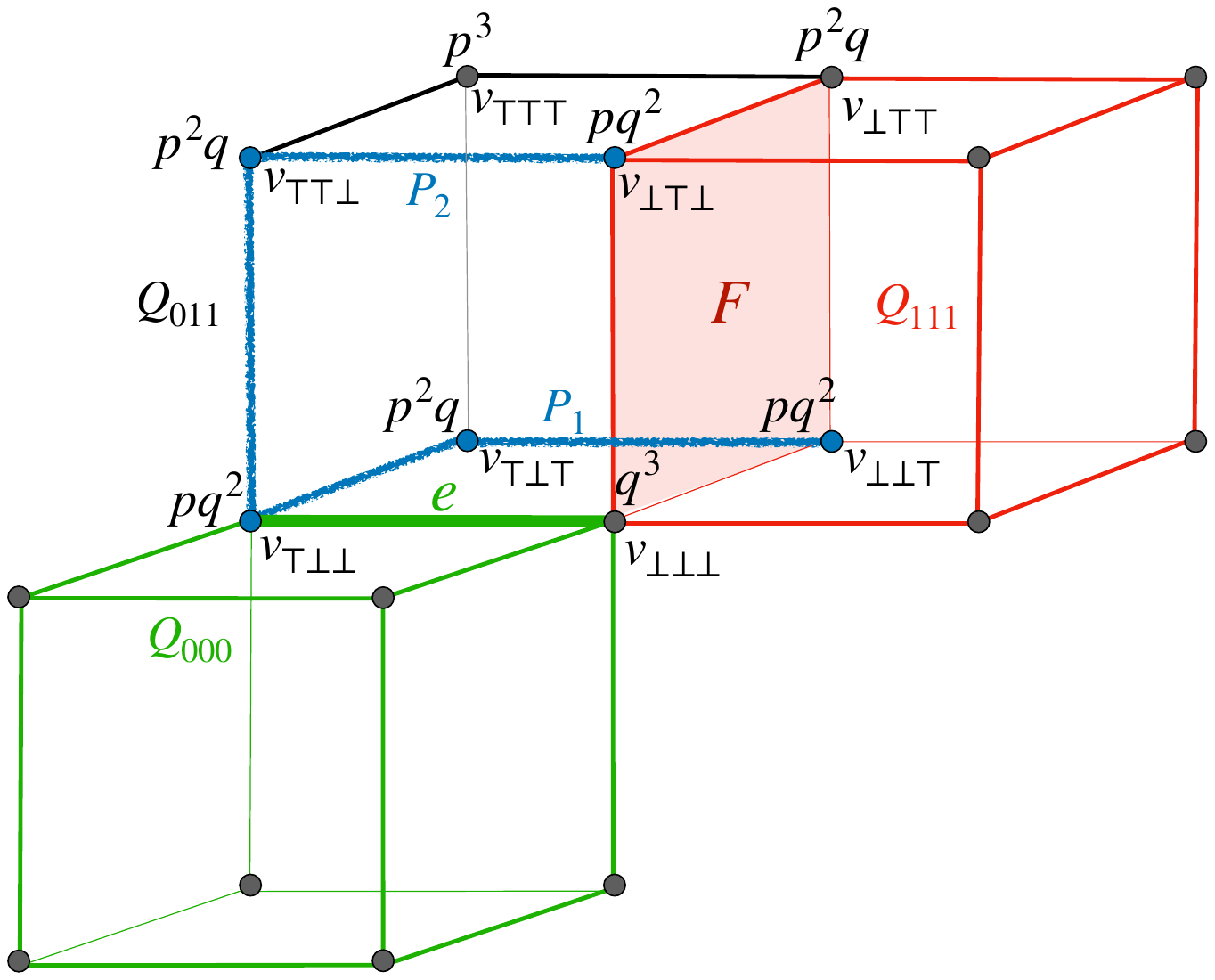}
\caption{Construction in the proof of Proposition~\ref{prop:lwb-simple}. The cube $Q_{011}$ shares the face $F$ with $Q_{111}$, and the edge $e$ with $Q_{000}$. The validity condition imposes that all edges of $F$ are colored~1, whereas the edge $e$ must be colored~0. }
\label{fig:lwb-corteous-p-small}
\end{figure}

\begin{proposition}\label{prop:lwb-simple}
For $p\leq \nicefrac13$, every 1-round consensus algorithm for a 3-process stochastic system satisfies $\Pr[\mbox{error}]\geq 2p^2q+q^3$. 
\end{proposition}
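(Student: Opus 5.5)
We work in a single non-trivial cube, say $Q_{011}$ (see Figure~\ref{fig:lwb-corteous-p-small}); by symmetry the same argument applies to any cube. By Lemma~\ref{lem:one-face-and-one-edge}, $Q_{011}$ shares a 2-dimensional face $F$ with $Q_{111}$, all of whose edges must be colored~1, and an edge $e$ with $Q_{000}$, which must be colored~0. Both $F$ and $e$ contain the center vertex $o$ (pattern $\bot\bot\bot$, weight $q^3$), and $o$ is therefore always in the cut. The plan is to show that, besides $o$, the cut of $Q_{011}$ must contain either two vertices of weight $p^2q$, or some set of vertices whose total weight is at least $2p^2q$ once $p\le\nicefrac13$.

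Concretely, let $e=\{o,a\}$, with $a$ the vertex at distance~1 from $o$ in the dimension not spanned by $F$ (weight $pq^2$). Let $F=\{o,b_1,b_2,d\}$, where $b_1,b_2$ have weight $pq^2$ and $d$ has weight $p^2q$. The edges leaving $F$ inside the cube go from $b_1$, $b_2$, $d$ to the top face $\{a,a_1,a_2,t\}$, where $a_1,a_2$ have weight $p^2q$ and $t=\top\top\top$ has weight $p^3$. There are two cases.

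\emph{Case 1: $a\in\cut$.} Then the cut already has weight at least $q^3+pq^2$. For $p\le\nicefrac13$ we have $q\ge 2p$, so $pq^2\ge 2p^2q$, and we are done.

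\emph{Case 2: $a\notin\cut$.} Then every edge at $a$ is colored~0, in particular $\{a,a_1\}$ and $\{a,a_2\}$. Now apply Lemma~\ref{lem:each-path-has-a-cut} to the paths $a_1$--$b_1$ and $a_2$--$b_2$, each consisting of a single vertical edge. In the first application, take the edge $\{a,a_1\}$ (colored~0) and the edge $\{b_1,o\}$ (colored~1, since it lies in $F$); in the second, take $\{a,a_2\}$ and $\{b_2,o\}$ similarly. The two paths are vertex-disjoint, so the cut contains a vertex from $\{a_1,b_1\}$ and a vertex from $\{a_2,b_2\}$. Each of these vertices has weight at least $\min(p^2q,pq^2)=p^2q$ for $p\le\nicefrac12$. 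Adding $o$ gives a total cut weight of at least $q^3+2p^2q$.

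Combining the two cases, for every coloring in $\col(\mathcal{Q})$ the cut of $Q_{011}$ alone has weight at least $2p^2q+q^3$. By Theorem~\ref{thm:opt-error-proba-genreal}, the maximum over inputs is at least this value. The main point to verify is the hypothesis of Lemma~\ref{lem:each-path-has-a-cut}, namely that the excluded endpoints $a$ and $o$ do not lie on the chosen paths, which holds by construction. The only place where $p\le\nicefrac13$ is needed is Case~1.
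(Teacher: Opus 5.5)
Your proof is correct and is essentially the paper's argument. It uses the same cube $Q_{011}$, the same face $F$ and edge $e$ from Lemma~\ref{lem:one-face-and-one-edge}, and Lemma~\ref{lem:each-path-has-a-cut} on the routes from $b_1,b_2$ through $a_1,a_2$ to $a$ (the paper's $P_2$ and $P_1$); the only difference is that the paper splits on whether any of $a,b_1,b_2$ is cut, while you split on $a$ alone and absorb a cut $b_i$ into Case~2 via $pq^2\ge p^2q$.
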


\begin{proof}
By Lemma~\ref{lem:one-face-and-one-edge}, and as illustrated in Figure~\ref{fig:lwb-corteous-p-small}, the cube $Q_{011}$ is sharing the face $F=\{v_{\bot\bot\bot},v_{\bot\top\bot},v_{\bot\top\top},v_{\bot\bot\top}\}$ with~$Q_{111}$, and the edge $e=\{v_{\bot\bot\bot},v_{\top\bot\bot}\}$ with~$Q_{000}$. 
Let us thus consider the two paths 
\[
P_1=(v_{\bot\bot\top},v_{\top\bot\top},v_{\top\bot\bot}) \;\mbox{and}\; P_2=(v_{\bot\top\bot},v_{\top\top\bot},v_{\top\bot\bot})
\]
in  $Q_{011}$, as displayed in  Figure~\ref{fig:lwb-corteous-p-small}.  Both $v_{\bot\bot\top}$ and $v_{\bot\top\bot}$ have incident edges colored~1, while $v_{\top\bot\bot}$ has an incident edge colored~0. Therefore, thanks to Lemma~\ref{lem:each-path-has-a-cut} applied with $G=Q_{011}$, each of the two paths $P_1$ and $P_2$ must have a vertex in~$\cut_c(Q_{011})$. 
If any of the three vertices of $P_1\cup P_2$ with probability $pq^2$ is in $\cut_c(Q_{011})$,  then 
$
\sum_{v_\phi\in\cut_c(Q_{011})}\Pr[v_\phi]\geq pq^2+q^3.
$
For $p\leq 1/3$, we have $pq^2\geq 2p^2q$, and thus
$
\sum_{v_\phi\in\cut_c(Q_{011})}\Pr[v_\phi]\geq  2p^2q+q^3, 
$ 
as claimed. On the other hand, if none of these three vertices are in~$\cut_c(Q_{011})$, then,  thanks to Lemma~\ref{lem:each-path-has-a-cut}, it must be the case that both $v_{\top\bot\top}$ and $v_{\top\top\bot}$ are in the cut. As a consequence, 
$
\sum_{v_\phi\in\cut_c(Q_{011})}\Pr[v_\phi]\geq 2p^2q+q^3 
$ 
in this case as well, which completes the proof. 
\end{proof}

\subparagraph{Remark.}

The proof of Proposition~\ref{prop:lwb-simple} is ``local'' in the sense that it involves only one cube $Q_{xyz}$ with $(x,y,z)\notin\{(0,0,0),(1,1,1)\}$. It follows that the lower bound $\Pr[\mbox{error}]\geq 2p^2q+q^3$ for $p\leq \nicefrac13$ is not only a worst-case lower bound, but it also holds for each and every input $(x,y,z)\notin\{(0,0,0),(1,1,1)\}$. (The proof is given for $Q_{011}$ but it holds for every $Q_{xyz}$, by mere symmetry.) In other words, for any non-trivial input $(x,y,z)$, any consensus algorithm errs with probability at least $2p^2q+q^3$ on that input, whenever $p\leq \nicefrac13$. In other words, for $p\leq \nicefrac13$, $\cort$ is optimal  for every input as far as the error probability is concerned.
\medskip

At this point, we have all the tools for dealing with the general case $p\in[0,1]$. Every cube $Q_{xyz}$ has one vertex corresponding to a failure pattern occurring with probability~$p^3$, three vertices corresponding to a failure pattern occurring with probability~$p^2q$, three vertices corresponding to a failure pattern occurring with probability~$pq^2$, and one vertex corresponding to a failure pattern occurring with probability~$q^3$. It is therefore convenient to express the probability of failure of an algorithm, or equivalently of an edge-coloring of~$\mathcal{Q}$, as a degree-3 polynomial with two variables, $p$ and $q=1-p$, of the form 
\begin{equation}\label{eq:poly-f-abc}
    f_{a,b,c}(p)=a \, p^3 + b \, p^2q +c \, pq^2 + q^3
\end{equation}
where the three integers $a,b$, and $c$ satisfy: 
\[
0\leq a \leq 1, \;\;\;\; 0\leq b \leq 3, \;\; \mbox{and} \;\; 0\leq c \leq 3.
\]
In particular, the error probability of $\cort$ is captured by the polynomial 
\[
f_{0,2,0}(p)=2p^2q+q^3,
\]
and the error probability of, e.g., $\pref1$ is captured by the polynomial 
\[
f_{0,1,2}(p)=p^2q+2pq^2+q^3=q(p+q)^2=q.
\]
This yields 30 other  forms of error probabilities. However, many can be easily ruled out. This is, for instance, the case of the seven polynomials $f_{a,2,c}$ with $a+c>0$, and of the eight polynomials $f_{a,3,c}$ with $a+c\geq 0$, by direct comparison to the error probability of $\cort$. This is also the case for the three polynomials $f_{1,1,2}$, $f_{0,1,3}$, and $f_{1,1,3}$ by direct comparison with the error probability of $\pref1$. It also holds that, for every $p\in[0,1]$, $f_{1,1,1}(p)\geq 2p^2q+q^3$, and $f_{1,0,3}(p)\geq f_{1,0,2}(p)\geq 2p^2q+q^3$, which rules out another group of three polynomials. 

\begin{figure}[htb]
\centering
\includegraphics[scale=.3]{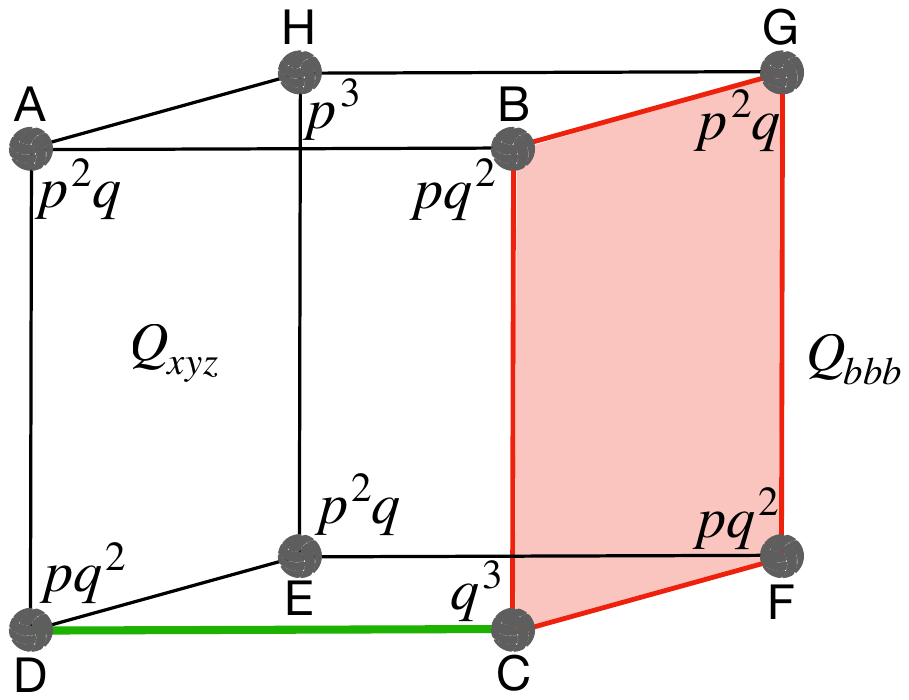}
\caption{Local arguments in the proof of Theorem~\ref{thm:lower-bound}. The ``non-trivial'' cube $Q_{xyz}$, i.e., $(x,y,z)\notin\{(0,0,0),(1,1,1)\}$, shares the marked face with $Q_{bbb}$, and the  edge $\{C,D\}$ with $Q_{\bar b,\bar b,\bar b}$, where $b\in\{0,1\}$. }
\label{fig:lower-bound-local}
\end{figure}

Other polynomials can be eliminated by using similar arguments as in the proof of Proposition~\ref{prop:lwb-simple}. Let us consider Figure~\ref{fig:lower-bound-local}, which displays a cube $Q_{xyz}$ sharing a 2-dimensional face $\{B,C,F,G\}$ with $Q_{bbb}$, and one edge $\{C,D\}$ with $Q_{\bar b,\bar b,\bar b}$, for some $b\in\{0,1\}$ --- by Lemma~\ref{lem:one-face-and-one-edge}, such a scenario holds for every ``non-trivial'' cube $Q_{xyz}$. Lemma~\ref{lem:each-path-has-a-cut} states that, for every coloring~$c$, each path from any of the three vertices $B,F$, or $G$ to $D$ must include a vertex in $\cut_c(Q_{xyz})$. This directly eliminates the polynomial $f_{0,0,0}$ as it corresponds to a coloring with no vertex cuts (but~$C$). Lemma~\ref{lem:each-path-has-a-cut} also rules out $f_{1,0,0}$ as it corresponds to a coloring with only $H$ (and $C$) in the cut. Similarly, 
$f_{0,1,0}$ (which corresponds to $C$, and $A,E$ or $G$ in the cut), and 
$f_{1,1,0}$ (which corresponds to $C,H$, and $A,E$ or $G$ in the cut) can also be eliminated by applying Lemma~\ref{lem:each-path-has-a-cut}. 

Unfortunately, $f_{0,0,1}$ cannot be discarded by such ``local arguments.'' Indeed, $C$ and $D$ may be the only cut vertices. For instance, in $Q_{011}$ (see Figure~\ref{fig:lwb-corteous-p-small}), $v_{\bot\bot\bot}$ and $v_{\top\bot\bot}$ may be the only vertices in the cut. This is, for instance, the cut that would result from Algorithm $\pref_1$. So, we are left with five polynomials, which are 
\[
f_{0,0,1}, 
f_{0,0,2}, 
f_{0,0,3},
f_{0,1,1},
\; \mbox{and} \;
f_{1,0,1}.  
\]
Interestingly, while $f_{0,0,1}$ cannot be eliminated by a local argument using Figure~\ref{fig:lower-bound-local}, such an argument can still be used to rule out $f_{0,0,2}$, $f_{0,0,3}$, and $f_{1,0,1}$ from consideration. 
\begin{itemize}
    \item Indeed, $f_{0,0,3}$ corresponds to the four vertices $B,C,D,F$ in the cut. Having $B$ in the cut means that the edge $\{A,B\}$ has a color different from $\{B,G\}$. This leaves the path $(A,H,G)$ without any vertex in the cut, contradicting Lemma~\ref{lem:each-path-has-a-cut}. 
    
    \item Eliminating $f_{0,0,2}$ requires considering two cases. If $D$ is not in the cut (but only $C,B$, and~$F$), then there are no cut vertices along the path $(D,A,H,G)$, contradicting Lemma~\ref{lem:each-path-has-a-cut}. If $D$ is in the cut, then let us assume, w.l.o.g., that $B$ and $C$ are the other two vertices in the cut (the case $C$ and $F$ is identical by symmetry). As $B$ is in the cut, the edge $\{A,B\}$ has a color different from the edge $\{B,G\}$, but there are no cut vertices along the path $(A,H,G)$, contradicting Lemma~\ref{lem:each-path-has-a-cut}. 

    \item Finally, $f_{1,0,1}$ corresponds to vertices $C$ and $H$ in the cut, plus one of the three vertices $B,D$, and~$F$. Since $H$ is in the cut, the edge $\{G,H\}$ is of a color different from $\{C,D\}$, and at least one of the other two edges incident to $H$ has a color equal to the color of $\{C,D\}$. By symmetry, one can assume w.l.o.g. that $\{A,H\}$ has the same color as $\{C,D\}$. By Lemma~\ref{lem:each-path-has-a-cut}, this forces $B$ to be in the cut, which must thus be $\{B,C,H\}$. This leaves the path $(D,E,F)$ free of any cut vertices, contradicting Lemma~\ref{lem:each-path-has-a-cut}.
\end{itemize}

\begin{figure}[htb]
\centering
\includegraphics[scale=.3]{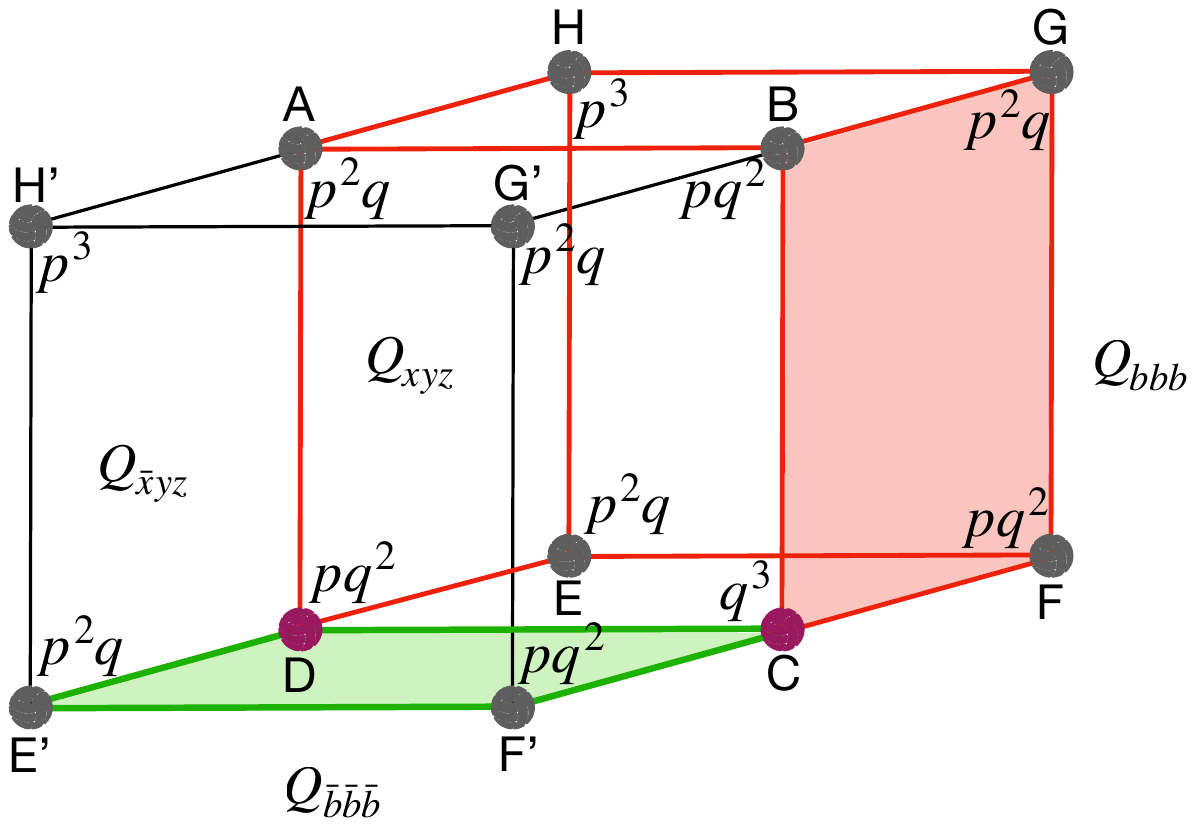}
\caption{Global arguments in the proof of Theorem~\ref{thm:lower-bound}, which proceeds by considering a ``nontrivial'' cube sharing the edge $\{C,D\}$ with $Q_{xyz}$, say $Q_{\bar xyz}$. The latter shares the marked face with $Q_{\bar b\bar b\bar b}$, and the edge $\{B,C\}$ with $Q_{bbb}$.}
\label{fig:lower-bound-global}

\end{figure}

So now  only two polynomials remain: $f_{0,0,1}$, and~$f_{0,1,1}$. As already mentioned before, these two polynomials cannot be discarded by considering a single input. Indeed, for the single input $(x,y,z)$, $f_{0,0,1}$ could correspond to the cut $\{C,D\}$, and $f_{0,1,1}$ could correspond to the cuts $\{A,C,D\}$ or $\{C,D,E\}$. In fact, these cuts are the only possible ones corresponding to $f_{0,0,1}$, and~$f_{0,1,1}$. Eliminating these two polynomials requires considering multiple inputs. Figure~\ref{fig:lower-bound-global} displays two cubes: $Q_{xyz}$ as in Figure~\ref{fig:lower-bound-local}, and an adjacent cube, say $Q_{\bar xyz}$, satisfying that 
\[
\{(x,y,z),(\bar x,y,z)\}\cap \{(0,0,0),(1,1,1)\}=\varnothing.
\]
The cube $Q_{xyz}$ shares the face $\{B,C,F,G\}$ with $Q_{bbb}$, while $Q_{\bar xyz}$ shares the face $\{C,D,E',F'\}$ with $Q_{\bar b\bar b\bar b}$. The cube $Q_{\bar xyz}$ shares the edge $\{B,C\}$ with $Q_{bbb}$. 

As stated above, the polynomial $f_{0,0,1}$  corresponds to the cut $\{C,D\}$ in $Q_{xyz}$. This implies that all edges of $Q_{xyz}$ but $\{C,D\}$ are colored~$b$. As a consequence, $C$ and $D$ are in the cut of $Q_{\bar xyz}$, which already amounts to $f_{0,0,1}(p)$. This cut is not sufficient to separate in $Q_{\bar xyz}$ the vertex $A$ incident to the edge $\{A,B\}$ colored $b$ from the vertex $E'$ incident to the edge $\{E',F'\}$ colored~$\bar b$, contradicting Lemma~\ref{lem:each-path-has-a-cut}. 

Finally, the polynomial $f_{0,1,1}$ could correspond to only two possible cuts in~$Q_{xyz}$: $\{A,C,D\}$ or $\{C,D,E\}$. The cut $\{C,D,E\}$ colors the edge $\{A,D\}$ by~$b$, forcing $A$ or $E$ to be in the cut of~$Q_{\bar xyz}$. In both cases, no vertices in the path $(F',G',B)$ are in the cut, contradicting Lemma~\ref{lem:each-path-has-a-cut}. We are left with the case where $\{A,C,D\}$ is the cut corresponding to $f_{0,1,1}$ in $Q_{xyz}$. This implies that $\{A,D\}$ is colored $\bar b$, and thus $A$ and $C$ are in the cut for $Q_{\bar xyz}$. Thanks to Lemma~\ref{lem:each-path-has-a-cut} applied to the paths $(E',H',G',B)$ and $(F',G',B)$, this forces $B$ to be in the cut of $Q_{\bar xyz}$. So the only possible cut for $Q_{\bar xyz}$ not exceeding $f_{0,1,1}$ is $\{A,B,C\}$, as depicted in Figure~\ref{fig:lower-bound-global2}. We thus now focus on the cube $Q_{\bar x\bar yz}$ satisfying  
\[
\{(x,y,z),(\bar x,y,z),(\bar x,\bar y,z)\}\cap \{(0,0,0),(1,1,1)\}=\varnothing.
\]
The two vertices $B$ and $C$ are in the cut of $Q_{\bar x\bar yz}$ (Figure~\ref{fig:lower-bound-global2}). By Lemma~\ref{lem:each-path-has-a-cut}, two more vertices must be added, one for the path $(G',X,A')$, and one for the path $(F',Y,D')$, which cannot be afforded without exceeding an error probability $f_{0,1,1}(p)$ for input $(\bar x,\bar y,z)$ because $p^3+2pq^2+q^3>f_{0,1,1}(p)$ for every $p\in[0,1]$. 

\begin{figure}[htb]
\centering
\includegraphics[scale=.3]{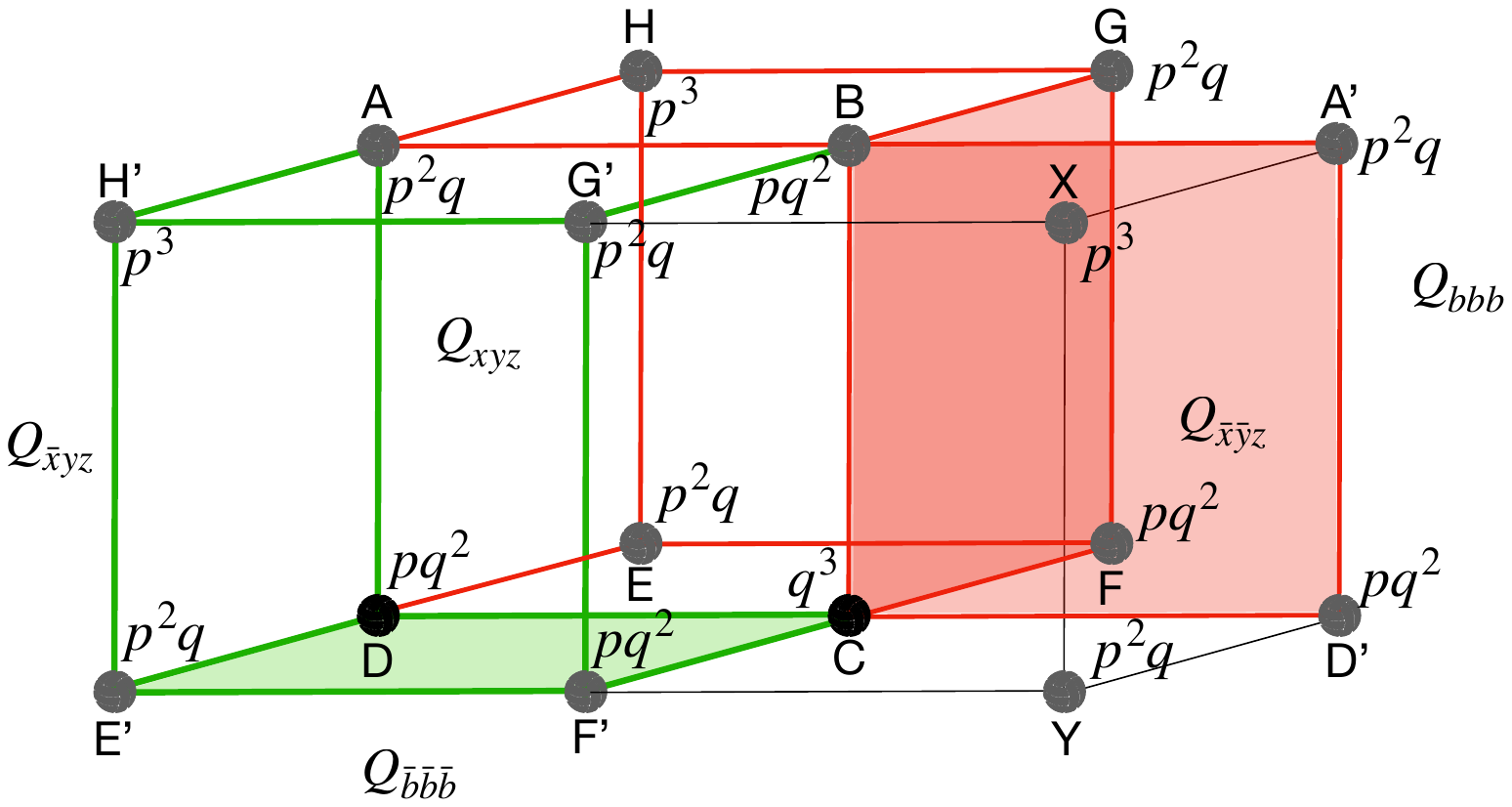}
\caption{Global arguments in the proof of Theorem~\ref{thm:lower-bound} continued, by considering a third non-trivial cube, sharing a face with both $Q_{bbb}$ and $Q_{\bar xyz}$, say $Q_{\bar x\bar yz}$, and the edge $\{C,F'\}$ with $Q_{\bar b\bar b\bar b}$. }
\label{fig:lower-bound-global2}
\end{figure}

\subparagraph{Summary.}

The analysis detailed in this section proves that, for every polynomial $f_{a,b,c}$ of the form defined in Eq.~\eqref{eq:poly-f-abc}, 
\begin{itemize}
    \item either $f_{a,b,c}(p)\geq \min\{f_{0,2,0}(p),f_{0,1,2}(p)\}$ for every $p\in [0,1]$ (i.e., the error probability $f_{a,b,c}(p)$ is not smaller than the error probabilities of $\cort$ and $\pref1$), 
    \item or, if there exists $p$ such that $f_{a,b,c}(p)< \min\{f_{0,2,0}(p),f_{0,1,2}(p)\}$ then 
    \begin{itemize}
        \item either there are no non-trivial inputs $(x,y,z)$ such that an algorithm can err with probability $f_{a,b,c}(p)$ for this input, 
        \item or, if there exists a non-trivial input $(x,y,z)$ such that an algorithm can err with probability $f_{a,b,c}(p)$ for this input, then there exists another non-trivial input $(x',y',z')$ such that the algorithm errs with probability larger than $f_{a,b,c}(p)$ for this other input. 
    \end{itemize}
\end{itemize}
In other words, $f_{0,2,0}(p)$ and $f_{0,1,2}(p)$ are the only possible forms of minimum error probability for an algorithm solving consensus. This completes the proof of Theorem~\ref{thm:lower-bound}. 
\end{proof}

%%%%%%%%%%%%%%%%%%%%%%%%%%%%%%%%%
\section{Upper and Lower Bounds for $n\geq 3$ Processes}
\label{sec:gen}
%%%%%%%%%%%%%%%%%%%%%%%%%%%%%%%%%

In this section we first
present an upper bound on the error probability
of multi-round algorithm for consensus with any number of processes.
While we do not know what is the best strategy in terms of
error probability for a given number of {rounds}, our algorithm gives
the best possible error probability for a given number of 
\emph{transmissions}.
Then we proceed to show that the error probabilities
of $\court$ and $\pref1$ are optimal for any number of 
processes in a single round---$\court$ is optimal for small values of $p$ and $\pref1$ is optimal for large values of $p$.

%%%%%%%%%%%%%%%%%%%%%%%%%%%%%%%%%
\subsection{Algorithm $\sweep$: Optimal Transmission Count}
%%%%%%%%%%%%%%%%%%%%%%%%%%%%%%%%%

It might be tempting to believe that optimal $r$-round consensus algorithms for a given probability $p$ of broadcast success can be merely obtained by repeating $r$ times an optimal 1-round algorithm. Specifically, after each application of the 1-round algorithm, each process takes the output produced by this algorithm as input for the next round. This approach does systematically yield optimal $r$-round consensus algorithms for the case of two processes~\cite{FraigniaudPSR25}. This is however not the case even for three processes. 

Indeed, consider the following $2r$-round $\sweep^r$  Algorithm~\ref{alg:sweep}. The algorithm consists of $r\ge1$ phases,
where a phase consists of $2$ rounds. In each phase, the
algorithm  first runs $\pref0$ and then $\pref1$.
The input to a round is the output of the previous round.

\begin{algorithm}[htb]
\caption{$\sweep^r$ --- Code of process $i\in [n]$.$r\in\mathbb{N}$ is a parameter}
\label{alg:sweep}
\begin{algorithmic}[1]
\State $v_i\gets \In_i$
\label{sweep1}
\For{$r$ iterations}
\State \textbf{if} $v_i=0$ \textbf{then} broadcast $0$ \label{sweep2}
\Comment{1st round}
\State receive messages (if any)
\State \textbf{if} $count_i(0)>0$ \textbf{then} $v_i\leftarrow 0$
\label{sweep4}
\State \textbf{if} $v_i=1$ \textbf{then} broadcast $1$  \label{sweep6}
\Comment{2nd round}
\State receive messages (if any)
\State \textbf{if} $count_i(1)>0$ \textbf{then} $v_i\leftarrow 1$ \Comment{the of messages received at step 6}
\label{sweep7}
\EndFor
\State $\Out_i\leftarrow v_i$
\label{sweep8}
\end{algorithmic}
\end{algorithm}

We show that $\sweep^r$ is optimal as far as the number of broadcast transmissions is concerned. A \emph{transmission} is merely defined as one broadcast operation from any process to all the other processes. The number of transmissions of an algorithm is the maximum, over all inputs and  all possible executions, of the number of broadcast operations invoked by the algorithm, regardless of whether they are  successful or not. 

\begin{theorem}
\label{thm:sweepr}
    For every number $n\geq 2$ of processes, and for every $r\geq 0$, $\Pr[\sweep^r\;\mbox{errs}]=q^{rn}$ and the number of transmissions of $\sweep^r$ equals~$rn$.
    This error probability is optimal 
    for $rn$ transmissions. 
\end{theorem}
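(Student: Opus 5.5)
We prove the three claims of \theoremref{thm:sweepr} in turn: transmission count, error probability, and optimality.

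\emph{Transmission count.} In a phase, the first round is broadcast only by processes with $v_i=0$, and the second only by processes with $v_i=1$ after the first round. The point is that $v_i$ never changes between the two broadcast decisions of process $i$ in a phase. A process broadcasting $1$ in round~2 had $v_i=1$ after round~1. It had $v_i=1$ before round~1 as well, since round~1 can only move a value to $0$. Hence it did not broadcast in round~1. So every process broadcasts at most once per phase, giving at most $rn$ transmissions. Equality is attained, for instance, on any input where no process changes its value during round~1 of a phase (all inputs equal, or all $0$-broadcasts fail). This makes each process broadcast exactly once in that phase.

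\emph{Error probability.} Validity is immediate, since unanimous values are never changed. We show by induction on the phases that after a phase the processes agree unless all $n$ transmissions of that phase failed.

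Consider a phase starting with a mixed configuration, with $d\ge 1$ zeros. The round~1 ($\pref0$) step makes every process adopt $0$ iff some $0$-broadcast succeeds. In that case the configuration becomes all-$0$, and since all processes agree it stays so, with no $1$-broadcasts in round~2. Otherwise, with probability $q^d$, the configuration is unchanged. Then the $n-d$ processes holding $1$ broadcast in round~2. If any of them succeeds, all processes adopt $1$; with probability $q^{n-d}$ nothing changes.

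Hence a mixed configuration survives a phase with probability exactly $q^dq^{n-d}=q^n$, independently of $d$, and it is unchanged when it survives. Agreement, once reached, persists. By independence across phases, the error probability is exactly $q^{rn}$ for every nontrivial input, and $0$ otherwise.

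\emph{Optimality.} This is the step needing care. We must show that any algorithm using at most $T=rn$ transmissions in every execution errs with probability at least $q^{T}$ on some input. The plan is to consider the execution in which every broadcast fails; it is the natural generalization of the remark after \theoremref{thm:opt-error-proba-genreal} (the center vertex of $\mathcal{Q}$ always lies in the cut).

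In this execution no process ever receives anything. Thus each process's behavior, including whether it broadcasts in each round, is a deterministic function of its own input alone. Its final output is therefore some $o_i(\In_i)$. By validity applied to $0^n$ and $1^n$, and since the all-fail execution is possible under those inputs, we get $o_i(0)=0$ and $o_i(1)=1$ for every $i$. So on any mixed input, say $\mathbf{x}=01\cdots1$, the all-fail execution yields disagreement.

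Its probability is $q^{k}$, where $k$ is the number of broadcasts performed along that execution, which is at most $T$. Thus the error is at least $q^{T}=q^{rn}$. The delicate point is to justify that the probability of the all-fail run is exactly $q^{k}$, even though which processes broadcast is adaptive. This holds because the run is determined step by step: each attempted broadcast fails independently with probability $q$, and rounds with no broadcast contribute factor~$1$.
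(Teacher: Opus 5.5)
Your proof is correct and follows essentially the same route as the paper. It shows each process broadcasts at most once per phase, that any successful broadcast forces agreement (so a mixed configuration survives each phase with probability exactly $q^n$), and it uses the all-fail execution for optimality. Your optimality argument is more complete than the paper's, which only asserts that $k$ transmissions force error at least $q^k$; you justify this through the input-only dependence of the all-fail run and validity on $0^n$ and $1^n$.
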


\begin{proof}
First we show that in every pair of rounds 
(\linref{sweep2}---\linref{sweep7}), the total
number of broadcasts is at most $n$. To see that, 
let $c(0)$ (and $c(1)$) be the number of processes $i$ 
with $v_i=0$ (and $v_i=1$, respectively), before \linref{sweep2}. By the code, $c(0)$ processes 
transmit in \linref{sweep2}; if any of them is successful, 
no process transmits in \linref{sweep6}, and if no broadcast
of \linref{sweep2}
was successful, $c(1)$ processes transmit
in \linref{sweep6}. In any case, the total number 
transmissions in \linref{sweep2} and \linref{sweep6} is at most 
$c(0)+c(1)=n$.
Since $\sweep^r$ consists of $r$ iterations of such round pairs,
the total number of transmissions is at most $rn$. Moreover,
the total number of transmissions is strictly less than
$rn$ only if at least one broadcast was successful.

Next, we claim that if any broadcast is successful, then the protocol
terminates with agreement. To see that, suppose that the first
successful broadcast occurs at round $t$, where $1\le t\le 2r$.
By the specification of the algorithm, if $t$ is odd then the contents of the message is $0$, a
and if $t$ is even, the contents of the message is $1$. Let us
assume, without loss of generality, that $t$ is odd (the
argument for even $t$ is essentially identical). Since
the contents of the successful message is $0$ (possibly
multiple broadcasts succeed, but they all have the same contents ``$0$'').
Since the broadcast is successful, it is received by all processes,
and then, by the specification of the algorithm, all
processes $i\in[n]$ will have $v_i= 0$: the processes
with $\In_i0=$ have set it in \linref{sweep1} and 
didn't change it since, because this is the first round with
a successful broadcast; and the other processes set $v_i\gets 0$
in \linref{sweep4}. Consequently, since the algorithm 
uses only the $v_i$ variables, the value $1$ is effectively wiped
out of the system: all processes, after round $t$,
will keep silent on even rounds and broadcast ``$0$'' in odd rounds. This shows that the protocol fails (on a non-trivial
input) if
and only if all broadcasts fail. By the analysis above,
this occurs only if there were $rn$ transmissions,
and hence $\sweep^r$ fails with probability at most $q^{rn}$.

The optimality of $\sweep^r$ with respect to  number of transmissions follows from the observation that, for every $k\geq 0$, any consensus algorithm performing $k$ transmissions in the stochastic broadcast model has an error probability of at least~$q^k$ when the input is non-trivial. 
\end{proof}

\noindent Note that, for every $n\geq 2$, $r\geq 0$, and $p\in[0,1]$, $\sweep^r$ satisfies
\[
\Pr[\mathsf{OPT}(2r)\;\mbox{errs}] \leq 
\Pr[\sweep^r\;\mbox{errs}] \leq 
\Pr[\mathsf{OPT}(r)\;\mbox{errs}], 
\]
where, for any $t\geq 0$, $\mathsf{OPT}(t)$ is a $t$-round algorithm failing with minimum probability. The first inequality holds because $\sweep^r$ is a $2r$-round algorithm, and the second inequality holds because any $r$-round algorithm errs with probability at least~$q^{rn}$, which is the error probability of $\sweep^r$.  In other words, $\sweep^r$ is at least as good (in terms of error probability) as any 
algorithm performing half as many rounds. 

%%%%%%%%%%%%%%%%%%%%%%%%%%%%%%%%%
\subsection{Lower Bounds for Single-Round Algorithms for Small and Large $p$ values}
\label{sec:loweBoundsG}
%%%%%%%%%%%%%%%%%%%%%%%%%%%%%%%%%

In this section, we show that, for every $n\geq 3$, Algorithm $\cort$ is
optimal when the success probability $p$ of broadcast is close to~0, and that Algorithms  $\pref_0$ (and $\pref_1$) are  essentially optimal when it  $p$ is close to~1. That is,  $\cort$ is  optimal when the system is highly unreliable, whereas algorithms like $\pref_0$ or $\pref_1$ are essentially optimal whenever the system is highly reliable. 
We start with the case of large $p$ value. 

Using a variant of the classical bivalence argument, we show
that in some cases the output depends on a single process,
and therefore we can prove the following result.

\begin{theorem}\label{thm:lower-bound-n-arbitrary-p-large}
For every $n\geq 2$ and $\epsilon\in (0,1]$, 
 for every 1-round consensus algorithm $\alg$ for $n$ processes in
the model with $q\le \epsilon/n$,  we
have $\Pr[\alg~\text{errs}]\ge(1-\epsilon)q$. 
\end{theorem}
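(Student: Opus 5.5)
We argue via a bivalence chain over inputs, restricted to the probability-$p^n q^0$ and $p^{n-1}q$ delivery patterns, i.e., the vertices of $\mathcal{Q}$ with at most one zero coordinate. For an input $\mathbf{x}$, let $\top^n$ denote the all-delivered pattern. In $\top^n$ every process knows the full input $\mathbf{x}$, so all processes decide according to a common function $F(\mathbf{x})$; otherwise $\alg$ errs on $(\mathbf{x},\top^n)$ with probability $p^n\ge(1-\epsilon)^n\ge 1-n\epsilon$ (when $q\le\epsilon/n$). We may therefore first dispose of algorithms that err on some all-delivered execution, since for those the error is already at least $p^n\ge q$ in the relevant regime.

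Validity gives $F(0^n)=0$ and $F(1^n)=1$. Walk from $0^n$ to $1^n$ flipping one coordinate at a time. There exist adjacent inputs $\mathbf{x},\mathbf{x}'$, differing only at process $i$, with $F(\mathbf{x})=0$ and $F(\mathbf{x}')=1$. This is the classical bivalence step, and it identifies a ``pivotal'' process $i$.

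Next consider the delivery pattern $\phi_i$ in which only process $i$'s broadcast fails; it has probability $p^{n-1}q$. Under inputs $\mathbf{x}$ and $\mathbf{x}'$ with pattern $\phi_i$, every process $j\neq i$ has the same view, namely the inputs of all others except $i$, with $i$ silent. In $\mathcal{Q}$ these are the edges of dimension $j\neq i$ at the common vertex with $v_i=0$, shared by $Q_{\mathbf{x}}$ and $Q_{\mathbf{x}'}$. So all processes $j\ne i$ output the same value $b$ in both executions. Process $i$'s view under $(\mathbf{x},\phi_i)$ coincides with its view under $(\mathbf{x},\top^n)$, so it outputs $F(\mathbf{x})=0$; likewise it outputs $1$ under $(\mathbf{x}',\phi_i)$. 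Hence whichever $b$ is, one of the two executions is a disagreement. This holds as long as $n\ge2$, so at least one other process exists. Consequently $\alg$ errs on $\mathbf{x}$ or on $\mathbf{x}'$ with probability at least $p^{n-1}q$.

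Finally, $p^{n-1}=(1-q)^{n-1}\ge 1-(n-1)q\ge 1-\epsilon$ by Bernoulli's inequality and $q\le\epsilon/n$. The worst-case error is therefore at least $(1-\epsilon)q$. The main care point is the first step. The pivotal argument needs all processes to agree on $\top^n$; if they do not, the error on that input is at least $p^n$, and we must check $p^n\ge(1-\epsilon)q$. This holds since $p^n\ge 1-\epsilon$ and $q\le 1$. So in either case the bound follows, and the remaining work is only the indistinguishability bookkeeping, which Lemma~\ref{lem:reduced-kripke-graph-preserve-information} already provides.
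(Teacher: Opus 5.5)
Your proof is correct and follows essentially the same argument as the paper: a chain of inputs from $0^n$ to $1^n$, the failure-free pattern $\top^n$, and the pattern $\phi_i$ in which only the pivotal process's broadcast fails. Defining the pivot directly through the common decision $F$ on $\top^n$ merges the paper's two-stage univalent/bivalent contradiction into one step, and your final bound $p^n\ge 1-\epsilon\ge(1-\epsilon)q$ is the correct one (whereas the side remark in your first paragraph that $p^n\ge q$ fails in general, e.g.\ $\epsilon=1$, $n=2$, $q=1/2$ gives $p^n=1/4$).
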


\begin{proof}
Let $n\geq 2$, and let $\alg$ be a 1-round deterministic consensus algorithm for $n$ processes. If there exists an input configuration $\mathbf{x}\in\{0,1\}^n$ such that $\alg$ errs whenever all processes succeed to broadcast, then $\Pr[\alg\;\mbox{fails}\mid \mathbf{x}]\geq p^n$. The theorem follows by choosing $\hat p$ satisfying $\hat p^n\geq 1-\hat p$. From this point on, we are thus assuming that $\alg$ never fails when all processes succeed to broadcast. 

Let $\phi_0$ be the failure pattern where all messages are delivered, and, for $i\in\{1,\dots,n\}$, let $\phi_i$ be the failure pattern where all messages are delivered---except the message broadcasted by process~$i$. Let $\Phi=\{\phi_0,\dots,\phi_n\}$. Let us show that there exists $\mathbf{x}\in\{0,1\}^n$, and $\phi\in\Phi$ such that $\alg$ errs on input $\mathbf{x}$ with a failure pattern~$\phi$. We use the standard valency argument~\cite{AttiyaW04} for establishing this fact. Specifically, let us assume, for the purpose of contradiction, that, for every $\mathbf{x}\in\{0,1\}^n$, and every $\phi\in\Phi$, $\alg$ succeeds to solve consensus on input  $\mathbf{x}$ with failure pattern~$\phi$, with probability~1. 

We first claim that one of the $n+1$ input configurations $\mathbf{x}_0,\dots,\mathbf{x}_n$ is bivalent even for failure patterns restricted to be in~$\Phi$, where $\mathbf{x}_i=(1,\dots,1,0,\dots,0)$ with $i$ leading~1s, and $n-i$ following~0s, for $i\in\{0,\dots,n\}$. Indeed, if all the inputs $\mathbf{x}_0,\dots,\mathbf{x}_n$ were univalent, then let $i\geq 1$ be the smallest index such that $\mathbf{x}_i$ is 1-valent, and $\mathbf{x}_{i-1}$ is 0-valent. Such an index must exist as $\mathbf{x}_0$ is 0-valent, and $\mathbf{x}_n$ is 1-valent.  Let us consider $\alg$ on inputs $\mathbf{x}_{i-1}$ and $\mathbf{x}_i$. All processes but process~$i$ do not distinguish these two inputs whenever the failure pattern is~$\phi_i$. This is a contradiction, as they must output~0 for $\mathbf{x}_{i-1}$, and 1 for~$\mathbf{x}_i$.

Let $\mathbf{x}_i$ be a bivalent configuration for failure patterns in~$\Phi$. Let us assume, w.l.o.g., that $\alg$ outputs~0 on $\mathbf{x}_i$ with no failures (i.e., with~$\phi_0$).  Let $j\geq 1$ be the smallest index such that $\alg$ outputs~1 on $\mathbf{x}_i$ with the failure pattern~$\phi_j$. Process~$j$ does not distinguish $\phi_0$ from $\phi_j$ as it cannot tell whether its broadcast has succeeded. This yields a contradiction since it must output~0 on $\phi_0$, but 1 on $\phi_j$. So there exists $\mathbf{x}\in\{0,1\}^n$ and $\phi\in\Phi$ such that $\alg$ errs on $\mathbf{x}$ with pattern~$\phi$. 

It follows that $\alg$ fails with probability at least $\Pr(\phi)=p^{n-1}q$. The claim follows by choosing $\hat p=1-\epsilon/n$ because $(1-\epsilon/n)^{n-1}\ge (1-\epsilon/n)^{n}\geq 1-\epsilon$. 
\end{proof}

Note that the error probability approaches $q$ as $p$ approaches $1$.

Next, we consider the case of small $p$ values. Analyzing the reduced Kripke graph, we can prove the following tight result.

\begin{theorem}\label{thm:lower-bound-n-arbitrary-p-small}
    Let $n\geq 2$, and let $\hat{p}= \frac1n$. For every input $\mathbf{x}\in\{0,1\}^n$ with $d\in\{1,\dots,n-1\}$ entries equal to~1, and for every $p\leq \hat p$, every 1-round consensus algorithm errs on $\mathbf{x}$ with probability at least $$
    \sum_{i=0}^{\min\Set{d,n-d}} {d\choose i}{n-d\choose i}p^{2i}q^{n-2i}~.
    $$
\end{theorem}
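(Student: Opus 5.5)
The plan is to show that every coloring $c\in\col(\mathcal{Q})$ has cut weight in $\mathcal{Q}(\mathbf{x})$ at least the weight of the ``balanced'' vertex set. The balanced set $Z$ consists of the vertices $\mathbf{v}\in\mathcal{Q}(\mathbf{x})$ with $\sum_j v_j=0$. A vertex of $Z$ has $i$ delivered $1$-coordinates and $i$ delivered $0$-coordinates, so $Z$ has weight exactly the claimed sum; it is the cut of $\cort$.

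\textbf{Setup.} Let $O$ be the set of the $d$ processes with input $1$, and $N$ the set of the $n-d$ processes with input $0$. Write a vertex of $\mathcal{Q}(\mathbf{x})$ as a pair $(A,B)$ with $A\subseteq O$ and $B\subseteq N$ the sets of delivered coordinates. Its weight is $p^{|A|+|B|}q^{n-|A|-|B|}$.
\begin{itemize}
\item The face $\{B=\varnothing\}$ is shared with $\mathcal{Q}(1^n)$, so all its edges are colored $1$.
\item The face $\{A=\varnothing\}$ is shared with $\mathcal{Q}(0^n)$, so all its edges are colored $0$.
\end{itemize}
For $z=(A,B)\in Z$ with $|A|=|B|=i\ge 1$, consider every monotone path $P$ of the following form. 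It starts at $(A,\varnothing)$, adds the elements of $B$ one at a time until it reaches $z$, then removes the elements of $A$ one at a time until it reaches $(\varnothing,B)$. There are $(i!)^2$ such paths. The start vertex has an incident $1$-edge in the $1$-face, along a dimension in $A$, leading to a vertex not on $P$. The end vertex has an incident $0$-edge in the $0$-face, leading off $P$. So Lemma~\ref{lem:each-path-has-a-cut} applies with $G=\mathcal{Q}(\mathbf{x})$, and every such $P$ contains a vertex of $\cut_c(\mathcal{Q}(\mathbf{x}))$. The case $i=0$ is the center, which is always in the cut.

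\textbf{Fractional charging.} Each $z$ spreads a charge $\Pr[z]$ uniformly over its $(i!)^2$ paths. On each path, that path's share is charged to one cut vertex of the path. It then suffices to prove that every vertex $w$ receives total charge at most $\Pr[w]$, since then the cut weight is at least $\sum_{z\in Z}\Pr[z]$. Let $w=(A',B')$ with $|A'|=a$ and $|B'|=b$, and say $a\ge b$ (the other case is symmetric).
\begin{itemize}
\item Then $w$ lies only on ascending halves of paths through some $z=(A',B)$ with $B\supseteq B'$ and $|B|=a$.
\item There are at most $\binom{n-d-b}{a-b}\le \frac{(n-1)^{a-b}}{(a-b)!}$ such $z$; this uses $d\ge 1$.
\item Each such $z$ has weight $\Pr[w]\,(p/q)^{a-b}$.
\item With $m=a-b$, the total charge to $w$ is at most $\Pr[w]\,\frac{((n-1)p/q)^m}{m!}$.
\item The hypothesis $p\le 1/n$ gives $(n-1)p\le q$, so this is at most $\Pr[w]$.
\end{itemize}
When $a=b$ we have $w=z$ and the charge is trivially $\Pr[w]$. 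The case $b>a$ is symmetric, using descending halves and $n-d\ge 1$.

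\textbf{Main obstacle.} The delicate point is getting the charging count exactly right. A vertex $w$ could a priori lie on paths of many different balanced $z$, and the bound $\binom{n-d-b}{m}\le (n-1)^m/m!$ must absorb that multiplicity. This is precisely where the threshold $\hat p=1/n$ enters, through $(n-1)p\le q$. One must also check that $w$ is never counted from both halves simultaneously: it cannot be, because on the ascending half the $A$-part equals the level of $z$, and on the descending half the $B$-part does. Finally, one must verify that the edges $e,e'$ required by Lemma~\ref{lem:each-path-has-a-cut} exist off the path. This holds because $i\ge 1$ guarantees a dimension of $A$ (respectively $B$) available for $e$ (respectively $e'$) inside the corresponding face.
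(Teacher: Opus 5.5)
Your proof is correct and follows essentially the paper's route. You use the same monotone paths from the face shared with $\mathcal{Q}(1^n)$ to the face shared with $\mathcal{Q}(0^n)$ through the balanced vertices, apply Lemma~\ref{lem:each-path-has-a-cut} to each, and compare the weight of an unbalanced cut vertex ($(q/p)^{m}$ times a balanced weight) with the at most $\binom{n-d-k}{m}$ paths it can block, using $p\le 1/n \iff (n-1)p\le q$. Your explicit charging is a tidier, fully rigorous version of the paper's informal level-by-level exchange step; it also makes the paper's separate disjointness-of-$\mathbf{S}_i$ claim automatic, since a vertex with $|A'|=a$ and $|B'|=b$ lies only on paths of level $\max\{a,b\}$.
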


\begin{proof}
Let $\mathbf{x}=(x_1,\dots,x_n)\in\{0,1\}^n$ with $d\in\{1,\dots,n-1\}$ entries equal to~1, where $1\leq d \leq n-1$. 
 Let us define $I_1=\{i\in [n]\mid x_i=1\}$, and let $I_0=[n]\smallsetminus I_1$.  So, $|I_1|=d$, and $|I_0|=n-d$. 
Assume, without loss of generality, that $d\le n-d$.
The proof is a generalization of the proof of Proposition~\ref{prop:lwb-simple}.

Let us first focus on the leading term $d(n-d) p^2q^{n-2}$ of the sum, which will also serve us as providing intuition for all the proof.
The hypercube $Q_\mathbf{x}$ shares a sub-hypercube of dimension $d$ with $Q_\mathbf{1}$ (i.e., the hypercube corresponding to the input $(1,\dots,1)$), and a sub-hypercube of dimension $n-d$ with $Q_\mathbf{0}$ (i.e., the hypercube corresponding to the input $(0,\dots,0)$). Let $\mathbf{w}_0=(0,\dots,0)$ denote the ``central'' vertex of the reduced Kripke graph~$\mathcal{Q}$, which corresponds to the  failure pattern $(\bot,\dots,\bot)$. Let $\mathbf{u}_1, \dots, \mathbf{u}_d$ be the $d$ neighbors of $\mathbf{w}_0$ belonging to $Q_\mathbf{x}\cap Q_\mathbf{1}$, and  let $\mathbf{v}_1, \dots, \mathbf{v}_{n-d}$ be the $n-d$ neighbors of $\mathbf{w}_0$ belonging to $Q_\mathbf{x}\cap Q_\mathbf{0}$. See Figure~\ref{fig:lower-bound-courteous-p-small-generalized}. Each $\mathbf{u}_i$ corresponds to a failure pattern $\varphi(\mathbf{u}_i)$ with a unique $\top$ entry appearing at some position in $I_1$, according to the function $\varphi$ defined in Eq.~\eqref{eq:delivery-pattern-associated-to-vertices}. Similarly, each $\mathbf{v}_i$ corresponds to a failure pattern $\varphi(\mathbf{v}_i)$ with a unique $\top$ entry appearing at some position in~$I_0$. 

\begin{figure}[htb]
\centering
\includegraphics[scale=.3]{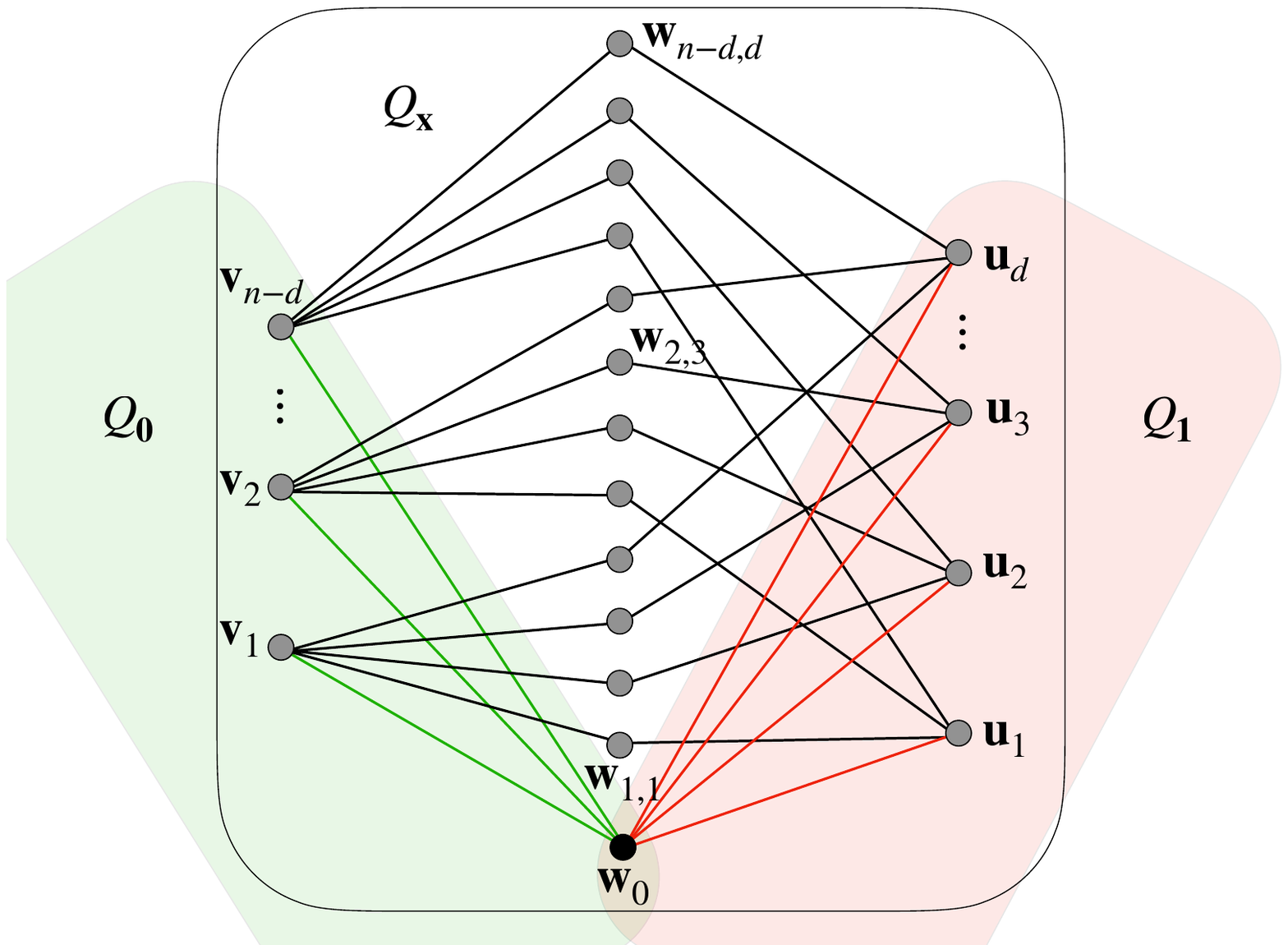}
\caption{$Q_\mathbf{x}$ and $Q_\mathbf{1}$ share a $d$-dimensional hypercube, while $Q_\mathbf{x}$ and $Q_\mathbf{0}$ share an $(n-d)$-dimensional hypercube. The neighbors of $\mathbf{w}_0$ in $Q_\mathbf{x}\cap Q_\mathbf{0}$ are $\mathbf{v}_1,\dots,\mathbf{v}_{n-d}$, and the neighbors of $\mathbf{w}_0$ in $Q_\mathbf{x}\cap Q_\mathbf{1}$ are $\mathbf{u}_1,\dots,\mathbf{u}_d$.}
\label{fig:lower-bound-courteous-p-small-generalized}
\end{figure}

For every $i\in\{1,\dots,n-d\}$, and every $j\in\{1,\dots,d\}$, there exists a vertex $\mathbf{w}_{i,j}\in Q_\mathbf{x}$, distinct from $\mathbf{w}_0$, such that $P_{i,j}=(\mathbf{v}_i,\mathbf{w}_{i,j},\mathbf{u}_j)$ is a path of length~2 between $\mathbf{v}_i$ and $\mathbf{u}_j$ (see Figure~\ref{fig:lower-bound-courteous-p-small-generalized}). Specifically, if $\varphi(\mathbf{v}_i)$ has its unique $\top$-entry at position~$k_i\in I_0$, and $\varphi(\mathbf{u}_j)$ has its unique $\top$-entry at position~$k_j\in I_1$, then $\varphi(\mathbf{w}_{i,j})$ has  exactly two$\top$-entries, one at position~$k_i$ and the other at position~$k_j$. There are therefore $d(n-d)$ distinct vertices~$\mathbf{w}_{i,j}$. Moreover, for every $(i,j)$, $\Pr[\mathbf{w}_{i,j}]=p^2q^{n-2}$. 

Let $\alg$ be a 1-round consensus algorithm, inducing  an edge-coloring $c$ of $\mathcal{Q}$. By  the validity condition, for every $i\in\{1,\dots,n-d\}$, the edge $\{\mathbf{w}_0,\mathbf{v}_i\}$ must be colored~0 by~$c$. Similarly, for every $j\in\{1,\dots,d\}$, the edge $\{\mathbf{w}_0,\mathbf{u}_j\}$ must be colored~1 by~$c$.
By Lemma~\ref{lem:each-path-has-a-cut}, each of the paths $P_{i,j}$ must therefore contain a vertex in $\cut_c(Q_\mathbf{x})$. We consider two cases. 

\begin{itemize}
    \item If none of the $n$ vertices 
    $\Set{\mathbf{v}_i}_{i=1}^{n-d}\cup\Set{\mathbf{u}_j}_{j=1}^d$, are in $\cut_c(Q_\mathbf{x})$ then the $d(n-d)$  vertices~$\mathbf{w}_{i,j}$ must be in the cut. As a consequence, $\alg$ fails on $\mathbf{x}$ with probability at least $\sum_{i,j}\Pr[\mathbf{w}_{i,j}]=d(n-d)p^2q^{n-2}$, and the theorem follows. 
    
    \item If there exists $i$ such that $\mathbf{v}_i\in \cut_c(Q_\mathbf{x})$, or if there exists $j$ such that such that $\mathbf{u}_j\in \cut_c(Q_\mathbf{x})$, then $\alg$ fails on $\mathbf{x}$ with probability at least $pq^{n-1}$, because $\Pr[\mathbf{v}_i]=\Pr[\mathbf{u}_j]=pq^{n-1}$ for every $i$ and~$j$.  For $p$ small enough, we have that $pq^{n-1}\geq d(n-d) p^2q^{n-2}$. 
\end{itemize}
In both cases, $\alg$ fails on $\mathbf{x}$ with probability at least $d(n-d) p^2q^{n-2}$.  

This approach can be applied not only to the first term of the sum $\sum_{i=1}^d {d\choose i}{n-d\choose i}p^{2i}q^{n-2i}$, but to all terms simultaneously, by the same arguments as above. Indeed, every vertex $\mathbf{v}\in Q_{\mathbf{0}}$ is associated with a delivery pattern $\phi=\varphi(\mathbf{v})$. For every $1\leq i \leq \min\{d,n-d\}$, let us define the following sets: 
\begin{itemize}
    \item $V_i=\Set{\mathbf{v}\in Q_\mathbf{0}\cap Q_\mathbf{x}\mid \mbox{the number of $\top$ symbols in $\varphi(\mathbf{v})$ is $i$}}$, and 
    
    \item $U_j=\Set{\mathbf{u}\in Q_\mathbf{1}\cap Q_\mathbf{x}\mid \mbox{the number of $\top$ symbols in $\varphi(\mathbf{u})$ is $i$}}$
\end{itemize} 
Note that the $\top$ symbols of the delivery patterns $\varphi(\mathbf{v})$ for $\mathbf{v}\in V_i$ appear at positions in~$I_0$, whereas the $\top$ symbols of the delivery patterns $\varphi(\mathbf{u})$ for $\mathbf{u}\in U_i$ appear at positions in~$I_1$. Let 
\begin{align*}
W_j=\{\mathbf{w}\in Q_\mathbf{x}\mid~ & \mbox{$\varphi(\mathbf{w})$ has exactly $j$ $\top$ symbols at positions in $I_0$,}\\
& \mbox{and exactly $j$ $\top$ symbols at positions in $I_1$}\}.
\end{align*}
For every $\mathbf{v} \in V_i$ and every $\mathbf{u} \in U_i$ there exists
a path of length $2i$ connecting them through a vertex $\mathbf{w}\in W_i$. This path is the following. 
Let $J_0\subseteq I_0$ be the set of indices at which the $\top$ symbols of $\varphi(\mathbf{v})$ occur, and let $J_1\subseteq I_1$ be the set of indices at which the $\top$ symbols of $\varphi(\mathbf{u})$ occur. The path starts at~$\mathbf{v}$, then the indices of $J_1$ are turned to $\top$ one by one in arbitrary order until $\mathbf{w}$ is reached, and then the indices of $J_0$
are turned to $\bot$ one by one in arbitrary order until $\mathbf{u}$ is reached.
Each vertex on such a path is contained in $Q_\mathbf{x}$, and has probability $p^{j+k}q^{n-j-k}$ for some
$0\le k\le j$. 

Let $\mathbf{S}_i$ be the set of paths constructed this way. Lemma~\ref{lem:each-path-has-a-cut} says that each of the paths in $\mathbf{S}_i$ must  contain a vertex in $\cut_c(Q_\mathbf{x})$. Since we have $|V_i|={n-d\choose i}$, and $|U_i|={d\choose i}$, we have $|\mathbf{S}_i|=|W_i|={d\choose i}{n-d\choose i}$. Algorithm $\cort$ is precisely placing in the cut all the vertices in~$W_i$, and the question is, can we do better? A crucial observation is that, for every $1\leq i<j \leq \min\{d,n-d\}$, and for every $P_i\in \mathbf{S}_i$ and $P_j\in \mathbf{S}_j$, we have $P_i\cap P_j=\varnothing$. As a consequence, one must treat all the paths in $\mathbf{S}_i$ independently from all the paths in $\mathbf{S}_j$. That is, a cut vertex for a path in $\mathbf{S}_i$ does not cut any path $\mathbf{S}_j$. In other words, we can treat each $i$ independently. 

Let $\alg$ be a 1-round consensus algorithm, inducing  an edge-coloring $c$ of $\mathcal{Q}$. We consider two cases. 
\begin{itemize}
    \item If all the vertices in $\bigcup_{1\leq i \leq \min\{d,n-d\}}W_i$ are in the cut, then $\alg$ fails on $\mathbf{x}$ with probability at least $\sum_{i=1}^d {d\choose i}{n-d\choose i}p^{2i}q^{n-2i}$ and the theorem follows. 
    
    \item If there exists $i$ such that at least one vertex in $W_i$ is not in the cut, then there must be another vertex $\mathbf{z}\in Q_\mathbf{x}$  in the cut, with $\mathbf{z}\notin W_i$. Such a vertex $\mathbf{z}$ has probability $p^{i+k}q^{n-i-k}$ for some $0\le k < i$. As such, it cuts at most, in fact, exactly ${n-d-k \choose i-k}$ paths of $\mathbf{S}_i$. However, for $p$ small enough, 
    \[
    p^{i+k}q^{n-i-k}\geq {n-d-k \choose i-k}p^{2i}q^{n-2i}~,
    \]
     because  whenever
    $p\le \frac1n$,
     $${
     n-d-k \choose i-k}\leq 
    (n-d)^{i-k}\le
    \left(\frac{q}{p}\right)^{i-k}$$ 
    where the second inequality follows from the definition of $q$ and from the fact that $d\ge1$.
\end{itemize}
It follows that, for $p\leq \hat{p}= \frac1n%\frac{1}{en+1}
$, every 1-round algorithm for consensus fails with probability at least $\sum_{i=1}^d {d\choose i}{n-d\choose i}p^{2i}q^{n-2i}$.
\end{proof}

Comparing the above lower bound with \lemmaref{thm:court-prob}, we 
deduce that Algorithm $\court$ is optimal for all $p\le 1/n$.
In addition,
Theorem~\ref{thm:lower-bound-n-arbitrary-p-small} implies a remarkable property of $\cort$. 

\begin{corollary}
For every $n\geq 2$ and $p\le \nicefrac1n$, 
$\cort$ is an optimal 1-round consensus algorithm for $n$ processes 
even if the input configurations are restricted to be one of the three instances $\{0^n,\mathbf{x},1^n\}$ for any fixed
$\mathbf{x}\in\{0,1\}^n\smallsetminus\Set{0^n,1^n}$.
\end{corollary}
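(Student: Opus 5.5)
The plan is to combine the per-input lower bound of Theorem~\ref{thm:lower-bound-n-arbitrary-p-small} with the exact per-input error probability of $\cort$ from \lemmaref{thm:court-prob}. The key observation is that the proof of Theorem~\ref{thm:lower-bound-n-arbitrary-p-small} is entirely local to three cubes. It only inspects the cube $Q_\mathbf{x}$, and it uses the validity constraints only through the colors forced on the sub-hypercubes $Q_\mathbf{x}\cap Q_\mathbf{0}$ (colored $0$) and $Q_\mathbf{x}\cap Q_\mathbf{1}$ (colored $1$). So I would first restate that argument in the restricted setting. An algorithm that only has to be correct on inputs $0^n$ and $1^n$, and that is evaluated only on $\mathbf{x}$, still corresponds through Lemma~\ref{lem:characterization-by-2-coloring} to an edge-coloring of $\mathcal{Q}$. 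Validity on $0^n$ and $1^n$ forces exactly the colors on $E(Q_{0^n})$ and $E(Q_{1^n})$ that the proof uses.

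Concretely, the steps are as follows.

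\textbf{Step 1.} Fix $\mathbf{x}\notin\{0^n,1^n\}$ with $d$ ones, and an arbitrary 1-round algorithm valid on $\{0^n,1^n\}$. Let $c$ be its induced coloring. The edges shared by $Q_\mathbf{x}$ with $Q_{0^n}$ and with $Q_{1^n}$ have the forced colors.

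\textbf{Step 2.} Invoke the path construction and case analysis of Theorem~\ref{thm:lower-bound-n-arbitrary-p-small} verbatim. This uses the path families $\mathbf{S}_i$, the disjointness of paths across different $i$, Lemma~\ref{lem:each-path-has-a-cut}, and the inequality $\binom{n-d-k}{i-k}\le (q/p)^{i-k}$ for $p\le 1/n$. It yields that the error probability on $\mathbf{x}$ is at least $\sum_{i}\binom{d}{i}\binom{n-d}{i}p^{2i}q^{n-2i}$.

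\textbf{Step 3.} By \lemmaref{thm:court-prob}, $\cort$ errs on $\mathbf{x}$ with exactly this probability. By validity, it errs with probability $0$ on $0^n$ and on $1^n$. Hence its worst case over the instance set $\{0^n,\mathbf{x},1^n\}$ equals the lower bound, so it is optimal there.

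The only delicate point is Step~1: checking that nothing in the proof of Theorem~\ref{thm:lower-bound-n-arbitrary-p-small} used validity or correctness on inputs other than $0^n$ and $1^n$. In particular, the argument must never appeal to adjacent non-trivial cubes, as was done for $n=3$ in the proof of Theorem~\ref{thm:lower-bound}. I would verify this by rereading that proof. The only constraints it uses are the colors of the edges $\{\mathbf{w}_0,\mathbf{v}_i\}$ and $\{\mathbf{w}_0,\mathbf{u}_j\}$, and more generally the edges at the endpoints of the paths in $\mathbf{S}_i$. All of these lie in $Q_{0^n}$ or $Q_{1^n}$.
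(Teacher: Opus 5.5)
Your proposal is correct and follows the paper's argument. The paper also reads the corollary off the per-input lower bound of Theorem~\ref{thm:lower-bound-n-arbitrary-p-small}, whose proof uses only $Q_\mathbf{x}$ and the colors forced by $Q_{0^n}$ and $Q_{1^n}$, and matches it against the exact per-input error of $\cort$ from \lemmaref{thm:court-prob}. Your re-reading step is more than needed: validity constrains only $0^n$ and $1^n$, so any algorithm defined on the three instances extends to a full 1-round consensus algorithm with the same behavior on $\mathbf{x}$, and the theorem's per-input statement then applies directly.
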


%%%%%%%%%%%%%%%%%%%%%%%%%%%%%%%%%
\section{Conclusion and Open Problems}
\label{sec:conc}
%%%%%%%%%%%%%%%%%%%%%%%%%%%%%%%%%

In this paper, we explored the error probability of consensus protocols
in the stochastic broadcast model. 
Our study leaves  many interesting questions open. 
The basic concrete question is to determine, for any given $r,n$, and  $p\in[0,1]$, the optimal error probability. While we have pointed
out a few candidate algorithms, we were able to prove optimality only
for some choices of the parameters. 
As we showed, the problem can be reduced to determining the minimum weight
edge cut of certain simplicial complexes, or alternatively, the minimum weight vertex cut  of  certain  hypercubic structures related to the Kripke graph.

It would be interesting to consider problems other than consensus, especially set agreement,  where it is no longer true that the error 
probability can be characterized by  minimum weight cuts of simplicial complexes. Techniques about generalized max-flows and min-cuts in simplicial
complexes might be relevant~\cite{MaxwellN21}. 
Approximate agreement has been studied, but only for two processes~\cite{FraigniaudPSR25}. It was shown that, roughly speaking, the one round error probability of solving approximate agreement is the same as for solving consensus  (for any value of $p$),
 while for three processes this is no longer the case.
 The optimal error probability of approximate agreement  error is not known for $n\geq 3$, even in the one-round case.
 
In this paper, we looked at the stochastic \emph{broadcast} model. 
Another natural generalization of the stochastic model proposed 
in~\cite{FraigniaudPSR25} is the stochastic \emph{unicast} model, in which there is a channel in each direction between every two processes, and 
each message may be lost  with some given probability $0\le p\le 1$,
independently. Another interesting variant is
where the channels are undirected; namely, in each round, a channel either delivers both messages 
(one in each direction) with probability $p$, or drops
both messages with probability $q=1-p$. Both versions, as well as others, have been considered in the past (e.g., in~\cite{elhayekDisc2024,KuhnO11,MichailSpirakis2018,SWK2009}), but not for consensus. %, as we do here. 

Results presented in this paper show a particularly pleasing
structure; it is possible that nice structures can be 
discovered for other related problems.
We believe that work in this direction will improve our
understanding of randomized distributed computation in general.

%%%%%%%%%%%%%%%%%%%%%%%%%%%%
\bibliographystyle{plain}
\bibliography{agreement}
%%%%%%%%%%%%%%%%%%%%%%%%%%%%%

%%%%%%%%%%%%%%%%%%%%%%%%%%%%%
\vspace{1cm}
\appendix
\centerline{\Large\bf A P P E N D I X}
%%%%%%%%%%%%%%%%%%%%%%%%%%%%%%

%%%%%%%%%%%%%%%%%%%%%%%%%%%%%%
\section{The Protocol Complex of the Stochastic Broadcast Model}
\label{sec:Protocol-Complex}
%%%%%%%%%%%%%%%%%%%%%%%%%%%%%%%

The goal of this appendix is to show the tight connections between the topological approach of distributed computing (i.e., the analysis of the structure of the protocol complex, and its ability to be mapped via a simplicial map to the output complex), and the epistemic approach of distributed computing (i.e., the analysis of the knowledge acquired by the agents, and of the structure of the associated Kripke graph), in the context of the stochastic broadcast model. 

To understand the structure of the Kripke graph of consensus after one round in the stochastic broadcast model,  
consider its 1-round \emph{protocol complex}~\cite{HerlihyKR13}. The vertices of this complex are individual local states of processes after one round. Each vertex is labeled with a process name and its local state. A set of vertices labeled with different process names is a face (a.k.a.~simplex) of the protocol complex if the corresponding collection of states of these processes may globally result from a valid execution. The \emph{facets} of the complex are the faces of cardinality~$n$, i.e., dimension $n-1$. 

\begin{wrapfigure}{r}{3cm}
\centering
\includegraphics[scale=.4]{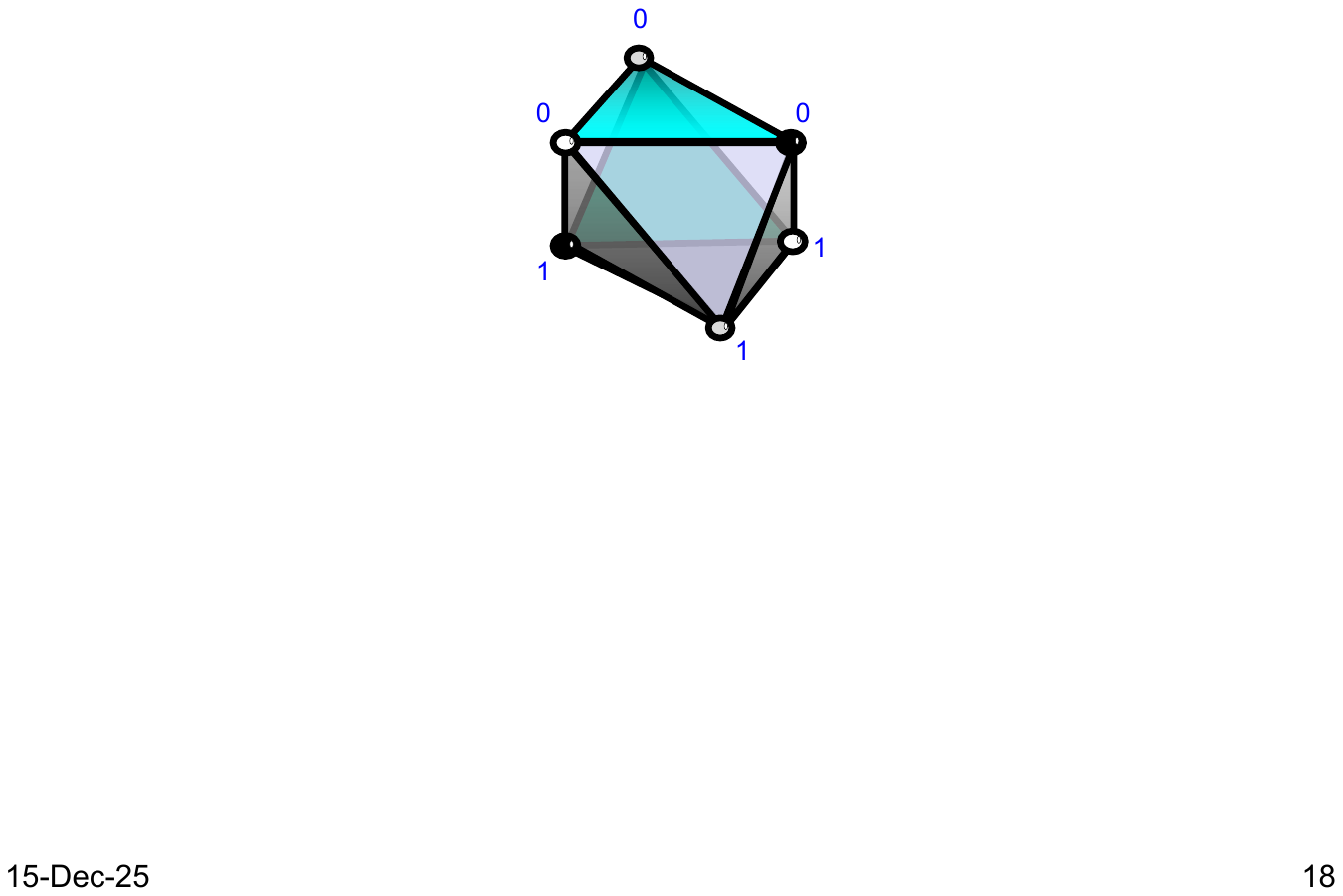}
\caption{Complex for three processes and binary inputs.}
\label{fig:inputComplex}
\end{wrapfigure}

For both the Kripke graph and the protocol complex, we may assume without loss of generality that the algorithm is \emph{full information}, that is, each process remembers its full history, and sends in each message its entire local state. In case of a single round, this means that each process sends its name and its input. 

The \emph{input complex} $\mathcal{I}$ for binary consensus is the protocol complex after $r=0$ rounds. A local state of a process $i\in [n]$ is a pair $(i,x_i)$ with $x_i\in\{0,1\}$ is the input of process~$i$. Since all combinations of binary inputs are possible among the $n$ processes, the facets of $\mathcal{I}$ can also be viewed as all the vectors in $\{0,1\}^n$. $\mathcal{I}$ is therefore the so-called \emph{pseudo-sphere} $\Psi([n],\{0,1\})$~\cite{HerlihyKR13}. 
See Figure~\ref{fig:inputComplex} for a geometric representation of $\mathcal{I}$ for $n=3$ processes.

\subsection{The Protocol Complex after One Round}

After one round, the state of a process $i\in [n]$ is $\mathbf{s}_i=\{(j,x_j)\mid j\in R_i\}$ of pairs $(j,x_j)$ received during the round; that is, $\{i\}\subseteq R_i \subseteq [n]$, and, for every $j\in R_i$, there is a unique $x_j\in\{0,1\}$ such that $(j,x_j)\in \mathbf{s}_i$. It is more convenient to encode such a state as a vector $\mathbf{s}_i=(\mathbf{s}_i[1],\dots,\mathbf{s}_i[n])\in \{0,1,\star\}^n$ where $\mathbf{s}_i[j]=\star$ means that process~$i$ has not received any message from process~$j$ during the round.  A facet of the protocol complex $\mathcal{P}$ is then a vector $\bar{\mathbf{s}}=(\mathbf{s}_1,\dots,\mathbf{s}_n)\in (\{0,1,\star\}^n)^n$ satisfying the following broadcast condition: 
\[
\forall i\in [n], \forall j,j'\in [n]\smallsetminus \{i\}: \;
(\mathbf{s}_j[i]=\mathbf{s}_{j'}[i]) \land (\mathbf{s}_j[i]\neq \star \Rightarrow \mathbf{s}_j[i]=\mathbf{s}_i[i]) \land (\mathbf{s}_i[i]=x_i),
\]
where $x_i$ denotes the input of process~$i$.
This condition captures the broadcast communication, which may or may not succeed, but if a process $j$ has received a message from another process~$i$, then every other process $j'$ has also received the same message from process~$i$. 

\subparagraph{Example.}

Figure~\ref{fig:simplex-flat-proba-mainText} displays the subcomplex $\mathcal{P}(\mathbf{x})$ of $\mathcal{P}$ for $n=3$ where the three processes $i\in \{1,2,3\}$ have respective input $x_1,x_2,x_3$, i.e., $\mathbf{x}=(x_1,x_2,x_3)$ is the input configuration. Each facet of the complex $\mathcal{P}(\mathbf{x})$ is represented as a full triangle. There are eight such triangles, each one representing a possible state of the system after one round, starting from the global state~$\mathbf{x}$. 
\medskip

In general, for an $n$-process system, the subcomplex $\mathcal{P}(\mathbf{x})$ of $\mathcal{P}$ induced by the input configuration $\mathbf{x}=(x_1,\dots,x_n)\in\{0,1\}^n$ has $2^n$ facets. There is a one-to-one correspondence between the facets of the subcomplex $\mathcal{P}(\mathbf{x})$ and the $2^n$ possible delivery patterns $\phi=(\phi_1,\dots,\phi_n)\in\{\bot,\top\}^n$, where $\phi_i=\bot$ represents the fact that the broadcast of process~$i$ fails, and $\phi_i=\top$ represents the fact that the broadcast of process~$i$ succeeds. In total, the protocol complex $\mathcal{P}$ has $2^n\cdot 2^n$ facets~$\bar{\mathbf{s}}$, one for each pair $(\mathbf{x},\phi)$ where $\mathbf{x}=(x_1,\dots,x_n)\in\{0,1\}^n$ is an input configuration, and $\phi=(\phi_1,\dots,\phi_n)\in\{\bot,\top\}^n$ is a delivery pattern. For any input configuration~$\mathbf{x}$, and for any delivery pattern~$\phi$, the probability $\Pr[\bar{\mathbf{s}}]$ of the facet $\bar{\mathbf{s}}=(\mathbf{x},\phi)$ is merely $\Pr[\phi]=p^d q^{n-d}$ where $d$ denotes the number of occurrences of $\top$ in~$\phi$. 

%%%%%%%%%%%%%%%%%%%%%%%%%%%%%%%%%
\subsection{Algorithms as Simplicial Maps}
%%%%%%%%%%%%%%%%%%%%%%%%%%%%%%%%%

The protocol complex representations is valuable in particular because, when solving consensus (or any other task), the decisions of the processes in a 1-round algorithm define a map $f$ from the vertices of the protocol complex $\mathcal{P}$ to the vertices of the \emph{output complex}~$\mathcal{O}$. (More generally, for every $r\geq 0$, the decisions of the processes in an $r$-round algorithm define a map $f$ from the vertices of the protocol complex $\mathcal{P}^r$ to the vertices of~$\mathcal{O}$.) In the case of the consensus task,  $\mathcal{O}$~has two facets: $\mathbf{y}_0=(0,\dots,0)$ and $\mathbf{y}_1=(1,\dots,1)$. For the algorithm to be correct, the corresponding map $f:V(\mathcal{P})\to V(\mathcal{O})$ must satisfy two conditions. 
\begin{enumerate}
    \item It must be \emph{simplicial}, that is, it must preserve simplexes, as the image of a configuration reached by the system after $r$ rounds must be a legal output configuration. 
    
    \item It must \emph{agree with the specification} of the task. This means that for any input configuration~$\mathbf{x}=(x_1,\dots,x_n)$, and for any configuration $\bar{\mathbf{s}}=(\mathbf{s}_1,\dots,\mathbf{s}_n)\in\mathcal{P}(x)$, i.e., any configuration  that may be reached by the system after one round starting from~$\mathbf{x}$, the output $f(\bar{\mathbf{s}})=(f(\mathbf{s}_1),\dots,f(\mathbf{s}_n))$ must belong to the set $\Delta(\mathbf{x})$ of simplices of~$\mathcal{O}$ that are specified as legal for~$\mathbf{x}$. 
\end{enumerate}
See Figure~\ref{fig:optAlgsComplex-mainText}
and compare with Figure~\ref{fig-pref-courteous-hyperplane}.
For instance, in the case of consensus, for the input configuration $\mathbf{x}_b=(b,\dots,b)=b^n$ with $b\in\{0,1\}$,  one must have $f(\bar{\mathbf{s}})=(b,\dots,b)$. That is, for every $b\in\{0,1\}$, $\Delta(\mathbf{x}_b)=\mathbf{y}_b$. Instead, for every $\mathbf{x}\in\{0,1\}^n\smallsetminus\{\mathbf{x}_0,\mathbf{x}_1\}$, $\Delta(\mathbf{x})=\mathcal{O}=\{\mathbf{y}_0,\mathbf{y}_1\}$.

%%%%%%%%%%%%%%%%%%%%%%%%%%%%%%%%%
\subsection{Minimizing the Error Probability}
%%%%%%%%%%%%%%%%%%%%%%%%%%%%%%%%%

In the case of the stochastic broadcast model, the output of a 1-round algorithm may or may not be correct depending on the messages delivered during its execution, which is random. As a consequence, even if the output of process is a deterministic function of its local state after one round, this local state is a random function of the inputs of the $n$ processes. Hence, we are interested in minimizing the error probability of the algorithm~$f$, that is, we aim at minimizing 
\[
\max_{\mathbf{x}\in \{0,1\}^n}
\Pr[\{\bar{\mathbf{s}}\in\mathcal{P}(\mathbf{x})\mid f(\bar{\mathbf{s}})\notin \Delta(\mathbf{x})\}].
\]
The maximization is performed over all the facets $\mathbf{x}$ of $\mathcal{I}$. Indeed, there are no process failures, but just stochastic failures of message deliveries. Therefore, each and every process must produce an output. Note that 
\[
\Pr[\{\bar{\mathbf{s}}\in\mathcal{P}(\mathbf{x})\mid f(\bar{\mathbf{s}})\notin \Delta(\mathbf{x})\}]=
\sum_{\bar{\mathbf{s}}=(\mathbf{x},\phi)\in\mathcal{P}(\mathbf{x})\;\mid\;  f(\bar{\mathbf{s}})\notin \Delta(\mathbf{x})}\Pr[\phi].
\]
Moreover, as we restrict our attention to algorithms that systematically satisfy the validity condition, we are focusing on functions $f:V(\mathcal{P})\to\{0,1\}$ satisfying $f(\mathbf{s})=b$ for every $\mathbf{s}\in\{b,\star\}^n$. Let us denote $F_{val}$ this set of functions. The problem can thus be rephrased as 
\begin{equation}\label{eq:problem-expressed-using-topology}
\min_{f\in F_{val}}\;\;
\max_{\mathbf{x}\in \{0,1\}^n\smallsetminus \{0^n,1^n\}}
\;\; 
\sum_{\bar{\mathbf{s}}=(\mathbf{x},\phi)\in\mathcal{P}(\mathbf{x})
\;\mid\; f(\bar{\mathbf{s}})\notin \{0^n,1^n\}}\Pr[\phi].
\end{equation}
The problem can be formulated in a way similar for $r$-round algorithms, $r\geq 1$, but it already appears  uneasy to handle even for $r=1$. Yet, the topological approach sheds some light on the problem. In particular, it is worth making some observations that will prove useful for solving the problem of Eq.~\eqref{eq:problem-expressed-using-topology} via another route. 

%%%%%%%%%%%%%%%%%%%%%%%%%%%%%%%%%
\subsection{The Connected Components of the Protocol Complexes}
%%%%%%%%%%%%%%%%%%%%%%%%%%%%%%%%%

As mentioned before, each facet of $\mathcal{P}$ is a pair $(\mathbf{x},\phi)$, with $\mathbf{x}\in\{0,1\}^n$ and $\phi\in\{\bot,\top\}^n$. For $\mathbf{x}\in\{0,1\}^n$, we use $\mathcal{P}(\mathbf{x})$ to denote the subcomplex of $\mathcal{P}$ reachable from $\mathbf{x}$ after one round, i.e., the subcomplex with facets in $\{(\mathbf{x},\phi)\mid \phi\in\{\bot,\top\}^n\}$. Let us now focus on another subcomplex of $\mathcal{P}$ obtained by fixing a delivery pattern. 

\begin{definition}
    For any  delivery pattern $\phi\in\{\bot,\top\}^n$, let $\mathcal{P}(\phi)$ be the sub-complex of $\mathcal{P}$ induced by the  facets in $\{(\mathbf{x},\phi)\mid \mathbf{x}\in\{0,1\}^n\}$. 
\end{definition}

Note that $\mathcal{P}=\bigcup_{\phi\in\{\bot,\top\}^n} \mathcal{P}(\phi)$. Importantly, depending on $\phi$, $\mathcal{P}(\phi)$ may or may not be path-connected.%
\footnote{
In high-dimensional complexes, which generalize graphs to higher dimensions, connectivity may have different topological meaning. Path-connectivity, a.k.a.~$0$-connectivity, corresponds to the standard notion of connectivity as in graphs, i.e., the existence of a sequence of edges forming a path between two vertices.
}
For instance, if $\phi=\top^n$, then any two facets $(\mathbf{x},\phi)$ and $(\mathbf{x}',\phi)$ where $\mathbf{x}\neq \mathbf{x}'$ have no vertices in common. 
In contrast, if $\phi=\bot^n$, then  $\mathcal{P}(\phi)$ is path-connected, as any two facets $(\mathbf{x},\phi)$ and $(\mathbf{x}',\phi)$ intersect, except when $\mathbf{x}=(x_1,\dots,x_n)$ and $\mathbf{x}'=(x'_1,\dots,x'_n)$ with $x'_i=1-x_i$ for every $i\in [n]$. Thus, $\mathcal{P}(\bot^n)$ has a unique path-connected component.

\subparagraph{Notations.} 

We denote by $\texttt{CC}(\mathcal{P}(\phi))$ the set of path-connected components of~$\mathcal{P}(\phi)$ (or, equivalently, the set of connected components of the 1-dimensional skeleton $\mathsf{skel}_1(\mathcal{P}(\phi))$ of $\mathcal{P}(\phi)$). 
\smallbreak

If $d$ denotes the number of occurrences of $\top$ in~$\phi$, then $\mathcal{P}(\phi)$ has $2^d$ path-connected components, i.e., $|\texttt{CC}(\mathcal{P}(\phi))|=2^d$. Moreover, each component $C\in \texttt{CC}(\mathcal{P}(\phi))$ is composed of $2^{n-d}$ facets, with a common intersection equal to a simplex of cardinality~$d$. 
A remarkable property of the protocol subcomplex  in the stochastic broadcast model is the following  (see Figure~\ref{fig:simplex-flat-proba-mainText}). 

\begin{lemma}\label{lem:vertex-belongs-to-two-facets}
The following two properties are satisfied.
\begin{itemize}
    \item For any two distinct delivery patterns $\phi$ and $\phi'$, any two components $C\in\texttt{CC}(\mathcal{P}(\phi))$ and $C'\in\texttt{CC}(\mathcal{P}[\phi'])$ intersect in at most one vertex. 
    \item For every vertex $\mathbf{s}$ of $\mathcal{P}$, there exist exactly two distinct delivery patterns $\phi$ and $\phi'$ such that $\mathbf{s}=C\cap C'$ with $C\in\texttt{CC}(\mathcal{P}(\phi))$ and $C'\in\texttt{CC}(\mathcal{P}(\phi'))$.
\end{itemize}
\end{lemma}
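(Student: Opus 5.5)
The plan is to work directly with the explicit description of vertices and facets of $\mathcal{P}$ given above. A vertex of $\mathcal{P}$ is a pair $(i,\mathbf{s}_i)$ with $\mathbf{s}_i\in\{0,1,\star\}^n$ and $\mathbf{s}_i[i]=x_i\neq\star$. A facet is a pair $(\mathbf{x},\phi)$. The first step is to characterize exactly which facets contain a given vertex $(i,\mathbf{s}_i)$. By the broadcast condition, the facet $(\mathbf{x},\phi)$ contains $(i,\mathbf{s}_i)$ if and only if three things hold: $x_i=\mathbf{s}_i[i]$; for every $j\neq i$, either $\phi_j=\top$ and $x_j=\mathbf{s}_i[j]\neq\star$, or $\phi_j=\bot$ and $\mathbf{s}_i[j]=\star$; and $\phi_i$ is arbitrary. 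So $\phi$ is determined on $[n]\smallsetminus\{i\}$ by the positions of $\star$ in $\mathbf{s}_i$, and $\phi_i\in\{\bot,\top\}$ is free. This gives exactly two delivery patterns $\phi,\phi'$ whose facets contain the vertex; they differ only in coordinate $i$. The inputs $x_j$ are fixed for the non-$\star$ coordinates and free for the $\star$ coordinates.

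Next I will describe the path-connected components of $\mathcal{P}(\phi)$. Let $T=\{j:\phi_j=\top\}$. Two facets $(\mathbf{x},\phi)$ and $(\mathbf{x}',\phi)$ can share a vertex only if $x_T=x'_T$, because every vertex records the inputs of $T$. Conversely, if $x_T=x'_T$, one can change the coordinates outside $T$ one at a time; each single change at $k\notin T$ preserves the view of any process $i\notin T\cup\{k\}$, or, if $T^c=\{k\}$, of any process in $T$. So the components are indexed by $x_T\in\{0,1\}^{T}$, giving the claimed $2^d$ components. With this description, the second item follows from the first step: the vertex $(i,\mathbf{s}_i)$ lies in exactly one component of $\mathcal{P}(\phi)$ and one of $\mathcal{P}(\phi')$, namely those indexed by the inputs visible in $\mathbf{s}_i$ (together with $x_i$ when $\phi_i=\top$). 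For any other pattern it lies in no component.

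For the first item, suppose two vertices $(i,\mathbf{s}_i)$ and $(k,\mathbf{s}_k)$ both lie in $C\cap C'$ with $\phi\ne\phi'$. By the first step, $\phi$ and $\phi'$ must differ exactly in coordinate $i$, and also exactly in coordinate $k$. Hence $i=k$. Both vertices then carry the process label $i$, and their views are determined by the pattern on $[n]\smallsetminus\{i\}$ together with the shared component index, that is, the inputs on $T\cup T'\supseteq T\smallsetminus\{i\}$ plus $x_i$. Since $x_i=\mathbf{s}_i[i]$ and the component fixes the inputs on the $\top$-coordinates, the two views coincide. So the intersection has at most one vertex.

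The main obstacle I expect is the careful handling of $x_i$ when $\phi_i=\bot$ in one of the two patterns. The component then does not fix $x_i$, and I must argue that it is still pinned down by the vertex label rather than by the component. I will also double-check the degenerate case $|T^c|\le1$ in the connectivity argument.
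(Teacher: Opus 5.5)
Your proposal is correct and follows essentially the paper's route: your first step (the view of process $i$ fixes $\phi$ off coordinate $i$ and leaves $\phi_i$ free) is the paper's observation that components of patterns at Hamming distance $>1$ are disjoint and that at distance $1$ only process $i$ can be shared, and your description of the components by $x_T$ and your proof that there are \emph{exactly} two patterns fill in details the paper leaves implicit. On your anticipated obstacle: $x_i$ is not pinned down by the vertex label, since the component of the pattern with $\phi_i=\bot$ contains both vertices of process $i$ (one with $x_i=0$, one with $x_i=1$); it is pinned down by the component of the other pattern, which has $\phi_i=\top$ and therefore fixes $x_i$ as part of its index on $T\cup T'\ni i$, so process $i$ has a unique vertex there and the intersection has at most one vertex.
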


\begin{proof}
Let $\phi$ and $\phi'$ be two distinct delivery patterns, and let $C$ and $C'$ be two path-connected components of $\mathcal{P}(\phi)$ and $\mathcal{P}(\phi')$, respectively. If the Hamming distance of $\phi$ and $\phi'$ is larger than~1, then $C\cap C'=\varnothing$. Indeed, for every process~$i$, there exists at least another process $j\neq i$ such that $i$ receives a message from $j$ in all executions corresponding to one of the two components, but does not receive any message from $j$ in all the executions corresponding to the other. 
If the Hamming distance of $\phi=(\phi_1,\dots,\phi_n)$ and $\phi'=(\phi'_1,\dots,\phi'_n)$ is exactly~1, then let $i\in[n]$ be the index for which $\phi_i\neq \phi'_i$, while $\phi_j= \phi'_j$ for all $j\neq i$. All processes $j\neq i$ distinguish $C$ from $C'$ as they receive from $i$ in one, whereas they do not receive for $i$ in the other. However, process $i$ itself may belong to both $C$ and $C'$. This is the case if and only if  these components respectively correspond to inputs $\mathbf{x}$ and $\mathbf{x}'$ where $x_\ell=x'_\ell$ for every $\ell\in\{i\}\cup \{j\in[n] \mid \phi_j=\phi'_j=\top\}$. Indeed, under such circumstances, process $i$ cannot distinguish $\phi$ from $\phi'$ based on the received data, nor can it distinguish the component in which its own message was delivered from the one in which it was not.    

Conversely, let $\mathbf{s}=(i,s_i)$ be a vertex of  $\mathcal{P}$, and let $(\mathbf{x},\phi)$ be a facet containing~$\mathbf{s}$. Let $C$ be the path-connected component of $\mathcal{P}(\phi)$ containing $(\mathbf{x},\phi)$. If $\phi_i=\top$ (resp., $\phi_i=\bot$), then let $\phi'$ be the delivery pattern identical to $\phi$ but the $i$th entry, for which $\phi'_i=\bot$ (resp., $\phi'_i=\top$). We have $\mathbf{s}\in(\mathbf{x},\phi')$ as process $i$ cannot distinguish $\phi'$ from $\phi$, and more generally, $\mathbf{s}\in C\cap C'$ where $C'$ is the path-connected component of $\mathcal{P}(\phi')$ containing $(\mathbf{x},\phi')$.  
\end{proof}

%%%%%%%%%%%%%%%%%%%%%%%%%%%%%%%%%
\subsection{The Intersection Graph of the Protocol Complex}
%%%%%%%%%%%%%%%%%%%%%%%%%%%%%%%%%

Lemma~\ref{lem:vertex-belongs-to-two-facets} suggests that, instead of looking for a simplicial map $f:V(\mathcal{P})\to \{0,1\}$ solution of Eq.~\eqref{eq:problem-expressed-using-topology}, one could simply look for a function $f':E(G_{\mathcal{P}}) \to \{0,1\}$ where $G_{\mathcal{P}}$ is the intersection graph, defined as 
\[
V(G_{\mathcal{P}})=\bigcup_{\phi\in\{\bot,\top\}^n}\mathtt{CC}(\mathcal{P}(\phi)),
\;\;\;\text{and}\;\;\;
E(G_{\mathcal{P}})=\big\{\{C,C'\}\in V(G)\times V(G)\mid C\cap C'\neq\varnothing\big\}.
\]
That is, each vertex  of $G_{\mathcal{P}}$ is a path-connected component of $\mathcal{P}(\phi)$ for some delivery pattern~$\phi$, and there is an edge between two distinct components if and only if they intersect.  
The  remaining issues are:
\begin{enumerate}
    \item understanding the structure of the intersection graph~$G_{\mathcal{P}}$, and
    \item rephrasing the minimization problem stated in Eq.~\eqref{eq:problem-expressed-using-topology} in terms of properties to be satisfied by an edge-coloring of~$G_{\mathcal{P}}$. 
\end{enumerate}

%%%%%%%%%%%%%%%%%%%%%%%%%%%%%%%%%
\subparagraph*{Take Away Message.}
%%%%%%%%%%%%%%%%%%%%%%%%%%%%%%%%%

These series of observations about the analysis of the stochastic broadcast model using the topological approach motivated our interest for the Kripke graph. Lemma~\ref{lem:vertex-belongs-to-two-facets}  actually suggests that one may even consider a simplified variant of the Kripke graph, resulting from merging some of its vertices into one as long as (1)~these vertices are of the form $(\mathbf{x},\phi)$ for the same delivery pattern~$\phi$, and (2)~they represent configurations that are indistinguishable by at least one process. This is the approach that we developed in the core part of the paper. 

%%%%%%%%%%%%%%%%%%%%%%%%%%%%%%%%%
\end{document}